\documentclass[11pt,a4]{article}
\usepackage{mathtools}
\usepackage{xspace}
\usepackage{tikz}
\usepackage{amsthm}
\usepackage{amssymb}
\usepackage{authblk}
\usepackage{hyperref}
\usepackage{thmtools}
\usepackage{fullpage}
\usepackage[utf8]{inputenc}
\usetikzlibrary{patterns}

\theoremstyle{plain}
\newtheorem{definition}{Definition}[section]
\newtheorem{lemma}[definition]{Lemma}
\newtheorem{theorem}[definition]{Theorem}

\newtheorem{observation}[definition]{Observation}
\newtheorem{corollary}[definition]{Corollary}
\newtheorem{algo}[definition]{Algorithm}

\newcommand{\eps}{\varepsilon}
\newcommand{\bigo}{\mathcal{O}}
\newcommand{\Oh}{\mathcal{O}}

\newcommand{\norm}[1]{\ensuremath{\lVert#1\rVert}}
\newcommand{\ceil}[1]{\ensuremath{\lceil#1\rceil}}
\newcommand{\expect}[1]{\ensuremath{\mathbb{E}\left[#1\right]}}
\newcommand{\var}[1]{\ensuremath{\mathrm{Var}\left[#1\right]}}
\newcommand{\eqtop}[1]{\stackrel{\mathclap{{\mbox{\ensuremath{#1}}}}}{=}}
\newcommand{\eqeps}{\eqtop{\varepsilon}}
\newcommand{\ham}{\mathsf{Ham}}

\newcommand{\eSketch}{\mathsf{eSketch}}
\newcommand{\hSketch}{\mathsf{hSketch}}
\newcommand{\mSketch}{\mathsf{mSketch}}
\newcommand{\pSketch}{\mathsf{Sketch_p}}

\newcommand{\hSub}{\mathsf{hSub}}
\newcommand{\mSub}{\mathsf{mSub}}
\newcommand{\rSub}{\mathsf{Sub}}
\newcommand{\MI}{\mathsf{MI}}

\bibliographystyle{plainurl}

\title{\texorpdfstring{$L_p$}{Lp} Pattern Matching in a Stream}

\author[1]{Tatiana Starikovskaya\thanks{This work was partially funded by the grant ANR-19-CE48-0016 from the French National Research Agency (ANR).}}

\author[2]{Michal Svagerka}

\author[3]{Przemys\l{}aw Uzna\'nski\thanks{Supported by Polish National Science Centre grant 2019/33/B/ST6/00298.}}

\affil[1]{DIENS, \'{E}cole normale sup\'{e}rieure, PSL Research University, France}
\affil[2]{ETH Z\"urich, Switzerland}
\affil[3]{Institute of Computer Science, University of Wroc\l{}aw, Poland}

\date{}

\begin{document}
\maketitle
\begin{abstract}
We consider the problem of computing distance between a pattern of length $n$ and all $n$-length subwords of a text in the streaming model.

In the streaming setting, only the Hamming distance ($L_0$) has been studied. It is known that computing the exact Hamming distance between a pattern and a streaming text requires $\Omega(n)$ space (folklore). Therefore, to develop sublinear-space solutions, one must relax their requirements. One possibility to do so is to compute only the distances bounded by a threshold $k$, see~[SODA'19, Clifford, Kociumaka, Porat] and references therein. The motivation for this variant of this problem is that we are interested in subwords of the text that are similar to the pattern, i.e. in subwords such that the distance between them and the pattern is relatively small. 

On the other hand, the main application of the streaming setting is processing large-scale data, such as biological data. Recent advances in hardware technology allow generating such data at a very high speed, but unfortunately, the produced data may contain about 10\% of noise~[Biol. Direct.'07, Klebanov and Yakovlev]. To analyse such data, it is not sufficient to consider small distances only. A possible workaround for this issue is the $(1\pm\eps)$-approximation. This line of research was initiated in [ICALP'16, Clifford and Starikovskaya] who gave a $(1\pm\eps)$-approximation algorithm with space~$\widetilde\Oh(\eps^{-5}\sqrt{n})$.

In this work, we show a suite of new streaming algorithms for computing the Hamming, $L_1$, $L_2$ and general $L_p$ ($0 < p < 2$) distances between the pattern and the text.  Our results significantly extend over the previous result in this setting. In particular, for the Hamming distance and for the $L_p$ distance when $0 < p \le 1$ we show a streaming algorithm that uses $\widetilde\Oh(\eps^{-2}\sqrt{n})$ space for polynomial-size alphabets. 
\end{abstract}

\section{Introduction}
In the problem of pattern matching, we are given a pattern $P$ of length $n$ and a text $T$ and must find all occurrences of $P$ in $T$. A particularly relevant variant of this fundamental question is approximate pattern matching, where the goal is to find all subwords of the text that are similar to the pattern. This can be restated in the following way: given a pattern $P$, a text $T$, and a distance function, compute the distance between $P$ and every $n$-length subword of $T$. A very natural similarity measure for words is the Hamming distance. Furthermore, if both $P$ and $T$ are over an integer alphabet $\Sigma$, one can consider the Manhattan distance or the Euclidean distance.

\begin{definition}[Hamming, Manhattan and Euclidean distances]
For a vector $U = u_1 u_2 \ldots u_n$, its Hamming norm is defined as $\norm{U}_H = |\{i : u_i \not= 0\}|$, Manhattan norm is defined as $\norm{U}_1 = \sum_i |u_i|$ and Euclidean norm is defined as $\norm{U}_2 = \left( \sum_i u_i^2 \right)^{1/2}$. For two words $V = v_1 v_2 \ldots v_n$ and $W = w_1 w_2 \ldots w_n$, their Hamming distance is defined as $\norm{V-W}_H$, their Manhattan distance as $\norm{V-W}_1$, and their Euclidean distance as $\norm{V-W}_2$.
\end{definition}

Those distance functions naturally generalize to the so called $L_p$ distances, where $p > 0$ is the exponent. 

\begin{definition}[$p$'th moment, $p$'th norm]
For a vector $U = u_1 u_2 \ldots u_n$ and $p \ge 0$, its \emph{$p$'th moment} $F_p$ is defined as $F_p(U) = \sum_i |u_i|^p$, and for $p>0$ its \emph{$L_p$ norm} is defined as $\norm{U}_p = F_p(U)^{1/p} = \left(\sum_i |u_i|^p \right)^{1/p}$. For two words $V = v_1 v_2 \ldots v_n$ and $W = w_1 w_2 \ldots w_n$ considered as vectors, the $p$'th moment of their difference is $F_p(V-W)$ and their \emph{$L_p$ distance} is defined as $\norm{V-W}_p = F_p(V-W)^{1/p} = \left(\sum_i |v_i-w_i|^p\right)^{1/p}$.
\end{definition}
In other words, the Manhattan distance is the $L_1$ distance, the Euclidean distance is the $L_2$ distance, and the Hamming distance can be considered as the $L_0$ distance. 

Below we assume that the length of the text is $2n$, as any algorithm on a text of larger length can be reduced to repeated application of an algorithm that runs on texts of length~$2n$. This is done by splitting the text into blocks of length $2n$ which overlap by $n$ characters. 

\subparagraph*{Offline setting.}
For the Hamming distance, the problem has been extensively studied in the offline setting, where we assume random access to the input. The first algorithm, for a constant-size alphabet, was shown by Fischer and Paterson~\cite{FP:1974}. The algorithm uses $\Oh(n \log n)$ time and in substance computes the Boolean convolution of two vectors a constant number of times. This was later extended to polynomial-size alphabets in \cite{Abrahamson87,K:1987}. With a somewhat similar approach,
the same complexity can be achieved for the $L_1$ distance in \cite{CCI:2005}. Later, in \cite{GLU:2018,DBLP:journals/ipl/LipskyP08a} the authors proved that these problems must have equal (up to polylogarithmic factors) complexities by showing reductions from the Hamming to the $L_1$ distance and back. 

To improve the complexity for large alphabets, the natural next step was to study approximation algorithms. Until very recently, the fastest $(1\pm\eps)$-approximation algorithm for computing the Hamming distances was by Karloff~\cite{Karloff93}. The algorithm combines random projections from an arbitrary alphabet to the binary one and Boolean convolution to solve the problem in $\Oh(\eps^{-2} n \log^3 n)$ time. In a breakthrough paper Kopelowitz and Porat \cite{KP:15} gave a new approximation algorithm  improving the time complexity to $\Oh (\eps^{-1} n\log^3{n}\log{\eps^{-1}} )$, which was later significantly simplified~\cite{KopelowitzP18}. Using a similar technique, Gawrychowski and Uzna\'nski~\cite{GU18} showed an approximation algorithm for computing the $L_1$ distance in $\Oh(\eps^{-1} n \log^4 n)$ (randomized) time, later made deterministic in time $\Oh(\eps^{-1} n \log^2 n)$ in \cite{cpm19a}. Using similar techniques, the authors of \cite{cpm19a} gave $\widetilde{\bigo}(\eps^{-1} n)$-time $(1+\eps)$-approximation algorithm for $L_p$ distances for any constant positive $p$.\footnote{Across the paper we use $\widetilde{\bigo}$ to indicate that we are suppressing poly-log(n) factors.}

\subparagraph*{Streaming setting.} In the streaming setting, we assume that the pattern and the text arrive as streams, one character at a time (the pattern arrives before the text). The main objective is to design algorithms that use as little space as possible, and we must account for all the space used by the algorithm, including the space required to store the input, in full or in part. It is also often the case that the text arrives at a very high speed and we must be able to process it faster than it arrives to fulfil the space guarantees, preferably, in real time. To this aim, the time complexity of streaming algorithms is defined as the worst-case amount of time spent on processing one character of the text, i.e. \emph{per arrival}.

In the streaming setting, only the Hamming distance ($L_0$) has been studied. It is known that computing the Hamming distance between a pattern and a streaming text exactly requires $\Omega(n)$ space, even for the binary alphabet and with a small probability error allowed, which can be shown by a straightforward reduction to communication complexity (folklore). 

Therefore, to develop sublinear-space solutions, one must relax their requirements. One possibility to do so is to compute only the distances bounded by a threshold $k$. This variant of the problem is often reffered to as \emph{$k$-mismatch problem}. The $k$-mismatch problem has been extensively studied in the literature~\cite{k-mismatch,DBLP:conf/soda/CliffordKP19,DBLP:conf/icalp/GolanKP18,Porat:09}, with this line of work reaching $\widetilde\bigo(k)$ memory complexity and $\widetilde\bigo(\sqrt{k})$ time per input character. The motivation for this variant of this problem is that we are interested in subwords of the text that are similar to the pattern, in other words, the distance between the pattern and the text should be relatively small. On the other hand, the main application of the streaming setting is processing large-scale data, such as biological data. To decrease the cost of generating such data, recently new hardware approaches have been developed. They have become widely used due to cost efficiency, but unfortunately, the produced data may contain about 10\% of noise~\cite{noise-biology}. To analyse such data, it is not sufficient to consider small distances only, and a possible workaround for this issue is $(1\pm\eps)$-approximation. This line of research was initiated by Clifford and Starikovskaya~\cite{HDstream} who gave a $(1\pm\eps)$-approximation algorithm with space $\widetilde\Oh(\eps^{-5}\sqrt{n})$ that uses $\widetilde\Oh(\eps^{-4})$ time per arriving character of the text.

Independently and in parallel with this work, authors of \cite{approxk} showed a $(1\pm\eps)$-approximation streaming algorithm for the $k$-mismatch problem that uses $\widetilde\bigo(\eps^{-2}\sqrt{k})$ space. For a special case of $k = n$, they show how to reduce the space further to $\widetilde\bigo(\eps^{-1.5}\sqrt{n})$. Compared to our solution, their algorithm has worse time complexity of $\widetilde\bigo(\eps^{-3})$ per arrival, and more importantly, it is not obvious whether it can be generalised to other $L_p$ norms as it uses a very different set of techniques.

\subparagraph*{Sliding window.} The problem of computing distance between $P$ and every $n$-length subword of $T$ in the streaming setting resembles the problem of maintaining the $L_p$ norm of a $n$-length suffix of a streaming text, also referred to as \emph{sliding window}. In fact, the latter is a simplification of the former, with setting $P = [0,0,\ldots,0]$. There is an extensive line of work on maintaining the $L_p$ norm of a sliding window, refer to~\cite{DBLP:journals/tcs/BravermanGO14,DBLP:conf/focs/BravermanO07,DBLP:journals/siamcomp/BravermanO10,DBLP:conf/approx/BravermanOR15,DBLP:journals/jcss/BravermanOZ12,DBLP:journals/siamcomp/DatarGIM02} and references therein. The main message is that the norm of a sliding window can be maintained efficiently, e.g. for $1 \le p \le 2$ the $L_p$ norms can be maintained $(1\pm\eps)$-approximately in space $\widetilde\bigo(\eps^{-1})$. However, those results do not translate to our case: in the sliding window, one can easily isolate ``heavy hitters'', that is updates with a significant contribution to the output. In our case, the contribution of an update depends on its relative position to the pattern, and one can easily construct instances where a contribution of a position in the text changes drastically relative to its alignment with the pattern, which necessitates a significantly different approach.

\subsection{Our results}
In this work, we show a suite of new streaming algorithms for computing the  Hamming, $L_1$, $L_2$ and general $L_p$ ($0 < p \le 2$) distances between the pattern and the text. Our results significantly improve and extend the results of~\cite{HDstream}.

\begin{theorem}
\label{th:UB}
Given a pattern $P$ of length $n$ and a text $T$ over an alphabet $\Sigma = [1,2,\ldots,\sigma]$, where $\sigma = n^{\Oh(1)}$, there is a streaming algorithm that computes a $(1\pm\eps)$-approximation of the $L_p$ distance between $P$ and every $n$-length subword of $T$ correctly w.h.p.
\begin{enumerate}
\item \label{it:UB_ham} in $\widetilde\Oh(\eps^{-2} \sqrt n + \log \sigma)$ space, and $\widetilde\Oh(\eps^{-2})$ time per arrival when $p = 0$ (Hamming distance);
\item \label{it:UB_l1} in $\widetilde{\Oh}(\eps^{-2} \sqrt{n} + \log^2 \sigma)$ space and $\widetilde{\Oh}(\sqrt{n} \log \sigma)$ time per arrival when $p = 1$ (Manhattan distance);
\item in $\widetilde{\Oh}(\eps^{-2} \sqrt{n} + \log^2 \sigma)$ space and $\widetilde{\Oh}(\eps^{-2} \sqrt{n})$ time per arrival when $0 < p < 1/2$;
\item in $\widetilde{\Oh}(\eps^{-2} \sqrt{n} + \log^2 \sigma)$ space and $\widetilde{\Oh}(\eps^{-3} \sqrt{n})$ time per arrival when $p = 1/2$;
\item in $\widetilde{\Oh}(\eps^{-2} \sqrt{n} + \log^2 \sigma)$ space and $\widetilde{\Oh}(  \sigma^{\frac{2p-1}{1-p}} \sqrt{n} / \eps^{2+3 \cdot \frac{2p-1}{1-p}} )$ time per arrival when $1/2 < p < 1$;
\item \label{it:UB_lp} in $\widetilde{\Oh}(\eps^{-2-p/2} \sqrt{n} \log^2 \sigma)$ space and $\Oh(\eps^{-p/2} \sqrt{n} + \eps^{-2} \log \sigma)$ time per arrival for $1 < p \le 2$.
\end{enumerate}
\end{theorem}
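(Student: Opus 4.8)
The plan is to reduce the streaming $L_p$ pattern matching problem to two ingredients: (i) a compact \emph{sketch} for the $p$'th moment $F_p$ of a vector that is linear (so that it can be updated under a stream of coordinate updates and differenced between alignments), and (ii) the classical $\sqrt{n}$-splitting of the pattern/text that lets us maintain only $\Oh(\sqrt n)$ ``active'' alignments at a time while deferring the rest. Concretely, I would split the pattern into $\sqrt n$ blocks of length $\sqrt n$; at any moment the text position currently being read is aligned against at most one block-boundary-crossing window per block, so we keep $\Oh(\sqrt n)$ partial sketches in flight. When a text character $t_i$ arrives, it participates (with the appropriate sign $t_i - p_j$) in $\Oh(\sqrt n)$ of these partial sketches, which dictates the $\sqrt n$ factor in the time bound; the total number of sketches stored at once is $\widetilde\Oh(\sqrt n)$, and each sketch has size $\widetilde\Oh(\eps^{-2})$ (or $\eps^{-2-p/2}$ for $1<p\le 2$), giving the stated space. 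At the moment a window is fully assembled, its sketch yields a $(1\pm\eps)$-estimate of $F_p$ of the difference vector, and we raise it to the $1/p$ power to report the $L_p$ distance.

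For the moment-estimation sketch itself I would use, for $0<p<2$, a Levy/$p$-stable-sketch-style linear sketch: maintain $\widetilde\Oh(\eps^{-2})$ inner products of the difference vector with vectors of i.i.d.\ $p$-stable entries, and output the (scaled) median of their absolute values; this is the standard Indyk estimator, and it is a linear sketch so it composes with the block decomposition and supports the sign-flipped updates $t_i-p_j$. For $p=2$ use an AMS/Johnson--Lindenstrauss sketch; for $p=0$ (Hamming) reuse the sparse-recovery/$F_0$-sketch machinery as in~\cite{HDstream} but with the improved $\eps^{-2}$ dependence. The exponent $p/2$ overhead for $1<p\le2$ comes from the cost of generating and storing the $p$-stable random variables to sufficient precision over a polynomial alphabet, which is why $\log\sigma$ appears (quadratically) in the space and linearly in the per-arrival time. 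The delicate point compared with the sliding-window setting, as noted in the introduction, is that the contribution of a text character to a window depends on the alignment, so one cannot isolate heavy hitters globally; instead heaviness must be handled per alignment, which the linear sketch does automatically at the cost of the $\eps^{-2}$ (rather than $\eps^{-1}$) dependence.

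The main steps, in order, are: (1) state and prove correctness and the size bound of the linear $F_p$-sketch for each regime of $p$, including the precision/word-size analysis that produces the $\log\sigma$ terms; (2) describe the pattern-into-$\sqrt n$-blocks decomposition and argue that only $\Oh(\sqrt n)$ partial sketches are ever alive simultaneously, and that each arriving text character touches $\Oh(\sqrt n)$ of them; (3) combine: maintain all live sketches, on each arrival feed the signed update into the relevant ones, and when a window closes, read off the estimate, raise to the $1/p$, and release the sketch's memory; (4) take a union bound over the $\Oh(n)$ windows and the $\widetilde\Oh(\sqrt n)$ sketches, boosting each sketch's success probability to $1-1/\poly(n)$ by $\Oh(\log n)$ independent repetitions (absorbed into $\widetilde\Oh$), to get the ``w.h.p.'' guarantee; (5) do the time-per-arrival bookkeeping separately in each regime, where the regime-dependent factors ($\eps^{-3}$ at $p=1/2$, the $\sigma^{(2p-1)/(1-p)}$ blow-up for $1/2<p<1$, $\eps^{-p/2}\sqrt n$ for $1<p\le 2$) arise from the cost of evaluating the sketch update and, for $p<1$, from an additional rounding/quantization of the alphabet needed to control the sketch's variance.

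The hard part will be Step~(1) in the awkward sub-unit regime $1/2<p<1$: there the naive $p$-stable sketch has a variance that degrades with the alphabet size, and obtaining a usable estimator seems to require quantizing the coordinates of the difference vector to $\widetilde\Oh(1)$ distinct magnitudes, which is exactly what injects the $\sigma^{(2p-1)/(1-p)}$ and the matching $\eps$-power into the per-arrival time; making this quantization compatible with the linearity needed for the block decomposition — so that a quantized update can still be applied incrementally and the per-window error stays $(1\pm\eps)$ — is the crux. A secondary obstacle is keeping all of $\widetilde\Oh(\sqrt n)$ sketches' shared randomness succinct: one must use a small-space pseudorandom generator (or Nisan-style derandomization) to avoid storing fresh $\Theta(\sqrt n)$-length random vectors per sketch, and verify that the sketch's analysis survives under that limited independence.
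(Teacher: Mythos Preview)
Your proposal has a genuine gap at the very first step: you treat the per-alignment difference vector $D^{(a)} = (T[a+j]-P[j])_{j}$ as if it admits cheap incremental updates under a linear sketch, but it does not. When the alignment $a$ advances by one, \emph{every} coordinate of $D^{(a)}$ changes (the text shifts relative to the pattern), so a sketch $\sum_j r_j D^{(a)}_j$ cannot be turned into $\sum_j r_j D^{(a+1)}_j$ by a bounded number of signed updates. This is precisely the obstruction the introduction flags when contrasting with the sliding-window $L_p$ literature. Your sentence ``each arriving text character participates (with the appropriate sign $t_i-p_j$) in $\Oh(\sqrt n)$ partial sketches'' hides the problem: a single $t_i$ is paired with a \emph{different} pattern character $p_j$ in each of the $n$ alignments it belongs to, so there is no single ``appropriate sign''. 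If you instead mean to start a fresh sketch only at every $\sqrt n$-th text position (so that $t_i$ feeds $\Oh(\sqrt n)$ live sketches), you only obtain answers at those $\sqrt n$-spaced alignments, not at all of them; your outline never says how the remaining alignments are handled.

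The paper's architecture exists exactly to fill that hole. It does use linear sketches in your spirit, but only for the long ``suffix'' part, and with a block-structured random matrix $M_{i,jb+k}=\alpha_j \mathcal R_{i,k}$ so that a single shared $\mathcal R$ makes sketches started at block boundaries comparable to precomputed sketches of $P[j{+}1,n]$ for all $1\le j\le b$. The residual ``prefix'' part --- a $<\!b$-length prefix of $P$ against a suffix of one stored text block --- is handled by an entirely different, non-sketch mechanism (the \emph{prefix encodings}): subsampling with mismatch information for $p=0$, randomized rounding for $p=1$, a new morphism embedding $L_p^p$ into Hamming for $0<p<1$, and a landmarking scheme with $\Theta(\eps^{-p})$ anchor points for $1<p\le 2$. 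All of the regime-specific constants you try to attribute to sketch-update cost or $p$-stable precision (the $\eps^{-3}$ at $p=\tfrac12$, the $\sigma^{(2p-1)/(1-p)}$ for $\tfrac12<p<1$, the $\eps^{-p/2}$ and the choice $b=\eps^{-p/2}\sqrt n$ for $1<p\le 2$) in fact originate in these prefix-encoding constructions, not in the sketch. Until you supply an analogue of the prefix-encoding step --- some way to recover the distance for \emph{every} alignment from information tied to $\Oh(\sqrt n)$ block boundaries --- the proposal does not yield the theorem.
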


We also improve and extend the space lower bound of~\cite{HDstream}, who showed that any streaming algorithm that computes a $(1\pm\eps)$-approximation of the Hamming distance between a pattern and a streaming text must use $\Omega(\eps^{-2} \log^2 n)$ bits for all $\eps$ such that $1/\eps < n^{1/2-\gamma}$ for some constant $\gamma$ (condition inherited from~\cite{JW2013}). We show the following result: 

\begin{lemma}
\label{lm:LB_ham_L1}
Let $2 \le 1/\eps < n$ and $0 \le p \le 2$. Any $(1\pm\eps)$-approximation algorithm that computes the $L_p$ distance between a pattern and a streaming text for each alignment, must use $\Omega(\min(1/\eps^2, n))$ bits of space.
\end{lemma}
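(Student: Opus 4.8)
The plan is to prove the lower bound by a reduction from a one-way communication problem whose randomized complexity is well understood, namely the ``augmented indexing'' / ``Gap-Hamming''-style problem, or more directly the problem underlying the $\Omega(\eps^{-2})$ lower bound for $(1\pm\eps)$-approximating $F_p$ in a stream. Concretely, I would reduce from the following two-party problem: Alice holds a vector $x \in \{0,1\}^m$ with exactly $m/2$ ones, Bob holds an index $j \in [m]$ together with the prefix information, and Bob must decide whether $x_j = 0$ or $x_j = 1$; this ``Augmented Indexing'' problem has one-way randomized communication complexity $\Omega(m)$. Setting $m = \Theta(\min(1/\eps^2, n))$ will yield the claimed bound, provided we can encode the instance so that a $(1\pm\eps)$-approximation of a single $L_p$ distance reveals $x_j$.

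The key steps, in order: First, fix the target gap. We want two configurations of (pattern, text prefix) that differ only in whether $x_j = 0$ or $1$, and whose true $L_p$ distances at some alignment differ by a multiplicative factor larger than $(1+\eps)^2$; since each coordinate flip changes $F_p$ by an additive $1$ (for $0/1$ vectors and any $p>0$, and also for $p=0$), we need the ``baseline'' $F_p$ to be $\Theta(1/\eps)$ so that an additive change of $1$ is a relative change of $\Theta(\eps)$, which forces $m = \Theta(1/\eps^2)$ bits and hence the $\Omega(1/\eps^2)$ term; the $\min$ with $n$ appears because the baseline, and hence $m$, cannot exceed $\Theta(n)$. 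Second, describe the encoding: Alice, given $x$, streams a pattern $P$ (and the initial part of the text) that embeds $x$ into $m \le n$ coordinates aligned against a fixed ``all-ones-like'' block, and passes the algorithm's memory state to Bob. Third, Bob, given $j$, streams the remaining text characters so that exactly one alignment $i$ of $P$ against $T$ has $P$ compared coordinatewise against a block that agrees with a fixed reference except possibly in position corresponding to $j$; the value of the $L_p$ distance at alignment $i$ is then $c$ or $c+1$ (with $c = \Theta(1/\eps)$) according to $x_j$, so a $(1\pm\eps)$-approximation distinguishes the two cases. Fourth, conclude: a streaming algorithm using $S$ bits yields a one-way protocol with $O(S)$ communication, so $S = \Omega(m) = \Omega(\min(1/\eps^2,n))$. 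One must also separately (and trivially) handle the uninteresting regime $1/\eps^2 \ge n$, where the bound is just $\Omega(n)$ and follows from the folklore exact-distance lower bound adapted to constant-factor approximation, since when $\eps < 1/\sqrt n$ a $(1\pm\eps)$-approximation of an integer-valued quantity of size $O(n)$ is exact.

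The step I expect to be the main obstacle is the explicit gadget construction that makes a \emph{single} alignment carry the one bit $x_j$ while the algorithm is genuinely forced to retain $\Omega(m)$ bits about $x$ — i.e. ensuring that the reduction respects the ``$P$ arrives fully before $T$'' ordering and that Bob's suffix really does isolate position $j$ without Alice needing to know $j$. The cleanest route is probably to let the pattern be $P = 1^m 0^{n-m}$ (or a shifted/padded variant to control the baseline $c$), let Alice's contribution to the text be $0^{n-m}\, \bar{x}$ where $\bar{x}$ is the bitwise complement of $x$ padded appropriately, and let Bob append $n$ characters of the form $0^{j-1}\,b\,0^{m-j}\,\ast$ so that the alignment of interest compares $1^m$ against a vector equal to $\bar x$ everywhere except coordinate $j$; then the Hamming ($p=0$) distance at that alignment is $|\{i : \bar x_i = 1\}| \pm 1 = (m/2) \pm 1$, and for general $p \in (0,2]$ the same additive-$1$ behaviour holds on $0/1$ vectors, so one uniform construction covers all $p$. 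Rescaling $m$ down to $\Theta(1/\eps^2)$ (padding the rest with matched characters so they contribute $0$) then gives the relative-error gap $\Theta(\eps)$ needed, and the ``$\min(\cdot, n)$'' is exactly the constraint $m \le n$.
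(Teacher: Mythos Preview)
Your reduction from Augmented Indexing has a quantitative gap that prevents it from reaching $\Omega(1/\eps^2)$. You write that ``we need the baseline $F_p$ to be $\Theta(1/\eps)$ \ldots\ which forces $m = \Theta(1/\eps^2)$ bits,'' but this arithmetic is inverted. In your own gadget the baseline is $c = m/2$ (half of $x$'s bits are ones), and a flip of the single bit $x_j$ changes the distance by exactly~$1$. For a $(1\pm\eps)$-approximation to separate $c$ from $c+1$ you need $\eps c < 1/2$, i.e.\ $c = O(1/\eps)$, hence $m = O(1/\eps)$. Since the Augmented Indexing lower bound is $\Omega(m)$, you obtain only $\Omega(1/\eps)$ bits of space, not $\Omega(1/\eps^2)$. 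No amount of padding with matched characters fixes this: matched characters contribute~$0$ and do not raise $m$, while any unmatched characters push the baseline up and destroy the $(1\pm\eps)$ separation. More generally, any reduction that extracts a single bit via an additive-$1$ change to an integer distance is capped at $O(1/\eps)$ bits of information for exactly this reason.

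The paper avoids this by reducing from \textsc{Gap-Hamming} rather than from Indexing. Alice's binary word is the pattern $P$, Bob's is the text $T$ (both of length $n$), and the single full alignment yields a $(1\pm\eps)$-approximation of $\|P-T\|_H$, which solves \textsc{Gap-Hamming} with gap $g = \eps n$; the known one-way lower bound for that problem is $\Omega(\min(1/\eps^2,n))$. The point is that \textsc{Gap-Hamming} asks Bob to distinguish distances $n/2 - \eps n$ and $n/2 + \eps n$, a \emph{multiplicative} $\Theta(\eps)$ gap at baseline $\Theta(n)$, and yet still costs $\Omega(n)$ bits when $\eps \approx 1/\sqrt{n}$; this is exactly the statistical phenomenon your single-bit reduction cannot replicate. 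The extension to $0<p\le 2$ is then the one you already noted: on $\{0,1\}$ vectors the $p$'th moment equals the Hamming distance. You could salvage your write-up by simply swapping the hard problem to \textsc{Gap-Hamming} and dropping the gadgetry entirely; the reduction becomes a two-line argument.
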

\begin{proof}
Let us first show the lower bound for $p = 0$, i.e., for Hamming distance. We show the lower bound by reduction to a  two-party communication complexity problem called GAP-Hamming-distance. In this problem, the two parties, Alice and Bob are given two binary words of length $n$ and a parameter $g = \eps n$, $1 \le g \le n/2$. Alice sends Bob a message, and Bob's task is to output $1$ if the Hamming distance between his and Alice's word is larger than $n/2+g$, and zero if it is at most $n/2-g$. Otherwise, he can output ``don't know''.
By Proposition 4.4~\cite{GAP-Hamming}, the communication complexity of this problem is $\Omega(\min\{1/\eps^2, n\})$. 

We can now show a space lower bound for any $(1\pm\eps)$-approximate algorithm for computing the Hamming distance between the pattern and the text by a standard reduction. Suppose that $2 \le 1/\eps \le n$ there is an algorithm that uses $o(\min\{1/\eps^2, n\})$ bits of space. Let $P$ be Alice's word, $T$ Bob's word. After reading~$P$, the algorithm stores all the information about it in $o(\min\{1/\eps^2, n\})$ bits of space. We construct the communication protocol as follows: Alice sends the information about $P$ to Bob. Using it, Bob can continue running the algorithm and compute the approximation of the Hamming distance between $P$ and~$T$. We have thus developed a communication protocol with complexity $o(\min\{1/\eps^2, n\})$, a contradiction.

We can now show the lower bound for $0 < p \le 2$. We immediately obtain a space lower bound for any $(1\pm\eps)$-approximate algorithm for computing the $p$'th moment between the pattern and the text at every alignment. Indeed, on binary words the $p$'th moment is equal to the Hamming distance for all $0 < p \le 2$. The lower bound for the $L_p$ distance follows by Observation~\ref{obs:norm_moment}.
\end{proof}

\subsection{Techniques}\label{sec:structure}
At a very high level, the structure of all algorithms presented in this paper is similar to that of~\cite{HDstream} (in fact, such approach in similar context was also used independently in \cite{DBLP:conf/approx/CrouchM11}). We process the text by blocks of length $b \approx \sqrt{n}$.  To compute an approximation of the distance / the $p$'th moment at a particular alignment, we divide the pattern into two parts: a prefix of length $\le b$ aligned with a suffix of some block of the text, and the remaining suffix (see Fig.~\ref{fig:alg_structure}). We compute an approximation of the distance / the $p$'th moment for both of the parts and sum them up to obtain the final answer. Our main contribution is a set of new tools that allows computing the approximations efficiently.

To be able to compute the approximation of the distance / the $p$'th moment between the prefix and the corresponding block of the text, we compute, while reading each block of the text, its compact lossy description that we refer to as \emph{prefix encoding}. The prefix encoding captures the relation between the read block and the prefix of the pattern of length $b$. To compute the distance / the $p$'th moment between the suffix and the text, we will use \emph{suffix sketches}. For each position $i$ of the text, the suffix sketch describes the subword $T[b\cdot k + 1, i]$ of the text where $k$ is the smallest integer such that $i - b \cdot k \le n$  (see Fig.~\ref{fig:alg_structure}). 

\begin{figure}
\begin{center}
\begin{tikzpicture}[scale=0.45]	
	\useasboundingbox (10,3.5) rectangle (40,9);	

	\draw[thick] (38,5.5) rectangle (18,4.5);
	\draw[thick,pattern=north east lines, pattern color=gray] (20,4.5) rectangle (38,5.5);
	\draw[thick,pattern=north west lines, pattern color=red] (18,4.5) rectangle (20,5.5);
	\draw (20,4.5)--(20,5.5);
		
	\node[below] at (19,4.5) {\small $P[1,j]$};
	\node[below] at (29,4.5) {\small $P[j+1,n]$};
	\node[above] at (19.7,8.3) {\small $bk$};
	\node[above] at (37.7,8.3) {\small $i$};
	
	\begin{scope}[yshift = 7.5cm]
		\foreach \x in {3,4} {
			\draw (\x*5,0)--(\x*5,1);
			
			\draw[thin] (\x*5-5,0)--(\x*5-5,-1);
			\draw[thin,<->] (\x*5-5,-0.8)--(\x*5,-0.8);
			\node[above] at (\x*5-2.5,-1) {\it \small prefix enc.};	
		}	
		\draw[thin] (20,0)--(20,-1);	
		\draw[thin] (38,0)--(38,-1);	
		\draw[thin,<-] (38,-0.8)--(20,-0.8);	
		\node[above] at (29,-1) {\it \small suffix sk.};
		
		\draw[thick,dashed] (38,0) rectangle (10,1);
		\draw[thick,pattern=north west lines, pattern color=gray] (20,0) rectangle (38,1);
		\draw[thick,pattern=north west lines, pattern color=red] (18,0) rectangle (20,1);		
	\end{scope}
\end{tikzpicture}
\end{center}
\caption{High level structure of the algorithms. To compute the distance between the prefix (red) of the pattern and the text, we use the prefix encoding, between the suffix (grey) and the text we use the suffix sketch.}
\label{fig:alg_structure}
\end{figure}
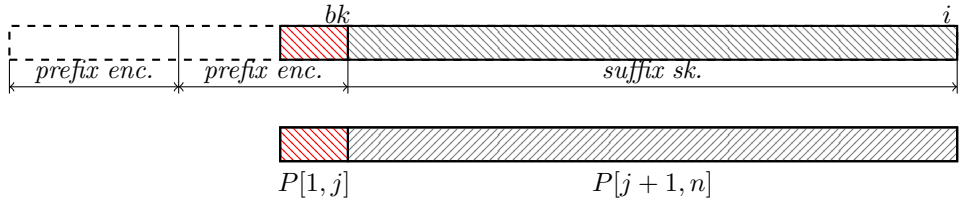

For the Hamming distance, we define the prefix encodings in Section~\ref{sec:prefix_l0} and the suffix sketches in Section~\ref{sec:suffix_l0}. Our Hamming prefix encoding introduces a novel use of a known technique called \emph{subsampling}. The prefix encodings are used to approximate the distance between any suffix of one word and the prefix of another word of the same length. In brief, the idea is to replace each character of the two words by the don't care character ``?'', a special character that matches any other character of the alphabet. We repeat the process a logarithmic number of times to create a logarithmic number of pairs of ``subsamples''. For each pair, we find the longest suffix of one subsample that matches the prefix of the second subsample 
up to at most $\Theta(1/\eps^2)$ mismatches. We then show that this information can be used to approximate the Hamming distance between any suffix-prefix pair. Similar techniques were used in~\cite{DBLP:conf/random/Bar-YossefJKST02,Durand2003,DBLP:journals/tcs/Ganguly07,DBLP:conf/spaa/GibbonsT01,DBLP:conf/pods/KaneNW10,DBLP:journals/siamcomp/PavanT07} for estimating the Hamming norm in streams. The crucial difference with our approach is that we must be able to compute the Hamming norm of any suffix-prefix pair of the two words, and we must be able to do it efficiently. As for the suffix sketches, for the binary alphabet we use the sketches introduced in~\cite{HDstream}. We then show a reduction from arbitrary alphabets to the binary alphabet, which improves the space consumption of Hamming suffix sketches by a factor of $1/\eps^2$. 

We can solve the problem of $L_1$ (Manhattan distance) pattern matching by replacing each character of the pattern and of the stream with its unary encoding and running the solution for the Hamming distance. However, this would introduce a multiplicative factor of $\sigma$ (the size of the alphabet) to the time complexity. We show efficient randomised reductions from the Manhattan to Hamming distance that allow simulating the solution for the Hamming distance without a significant overhead. In particular, to design the prefix encodings we use random shifting and rounding, while for the suffix sketches we use range-summable hash functions~\cite{DBLP:conf/soda/CalderbankGLMS05}. We show the Manhattan prefix encodings in Section~\ref{sec:prefix_l1} and the Manhattan suffix sketches in Section~\ref{sec:suffix_l1}. 

For generic $L_p$ distances, $0 < p \le 2$, we discuss the prefix encodings in Section~\ref{sec:prefix_lp_l2} and the suffix sketches in Section~\ref{sec:suffix_lp}. Our approach to $L_p$ prefix encodings is rather involved. In the case of $0<p< 1$, we construct a novel embedding from $L_p^p$ space into the Hamming space, which might be of independent interest. While the target dimension of the Hamming space is large, we construct the embedding in such a way that each value is mapped into a compressible sequence of form $c_1^{d_1} \ldots c_t^{d_t}$ for some small value of $t$, and where values of $d_1,\ldots,d_t$ are constant across all input values. Such compressed representation allows us to efficiently apply the subsampling framework and reduce the problem to the Hamming distance case. For $1<p\le 2$, we  identify a logarithmic number of anchor suffixes, and partition each of them  into $\eps^{-p}$ words of roughly even contribution to the distance. We then use the partition to decode prefix-suffix distance queries for arbitrary length queries. Such construction is a generalization and improvement of the approach presented in \cite{HDstream}.  For suffix sketches, we simply use the $p$-stable distributions~\cite{DBLP:journals/jacm/Indyk06}.

Finally, we combine the prefix encodings and the suffix sketches to prove Theorem~\ref{th:UB} in Section~\ref{sec:theorem}. To simplify the notation, we use $x \eqeps y$ to denote $(1-\eps) y  \le x \le (1+\eps) y$ from now on. We will also use the fact that for $p > 0$ we can speak of approximating the $p$'th moment of differences between the pattern and the $n$-length substrings of the text and the $L_p$ distances between the pattern and the $n$-length substrings of the text interchangeably, it changes the complexities up to a constant factor only:

\begin{observation}\label{obs:norm_moment}
For any constant $p > 0$ and $\varepsilon < 1/2$, there is a constant $C_p$ such that finding a $(1 \pm C_p \cdot p \varepsilon)$ approximation of the $p$'th moment of a vector suffices for $(1 \pm \varepsilon)$-approximating its $p$'th norm, and finding a  $(1 \pm C_p \cdot \varepsilon/p)$ approximation of its $p$'th norm suffices for $(1 \pm \varepsilon)$-approximating its $p$'th moment.
\end{observation}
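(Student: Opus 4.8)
\emph{Proof proposal.} The proof rests on the defining identity $\norm{U}_p = F_p(U)^{1/p}$ and on tracking how a multiplicative error propagates through the maps $t\mapsto t^{1/p}$ and $t\mapsto t^p$. The first step I would take is to record a one-line scalar estimate: for every $r>0$ and every $\delta$ with $0\le\delta\le 1/2$ and $r\delta\le 1$,
\[
1-2r\delta \;\le\; e^{-2r\delta} \;\le\; (1-\delta)^r \qquad\text{and}\qquad (1+\delta)^r \;\le\; e^{r\delta}\;\le\; 1+2r\delta,
\]
where the upper chain uses $1+z\le e^z$ and $e^z\le 1+2z$ on $[0,1]$, and the lower chain uses $\ln(1-\delta)\ge -\delta/(1-\delta)$ together with $1/(1-\delta)\le 2$ for $\delta\le 1/2$ and $e^{-z}\ge 1-z$.

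For the first implication, assume an estimate $\widehat F$ with $\widehat F \eqtop{\delta} F_p(U)$ for $\delta = C_p\,p\varepsilon$. Then $\widehat F^{1/p} = \norm{U}_p\cdot\bigl(\widehat F/F_p(U)\bigr)^{1/p}$, and applying the scalar estimate with $r=1/p$ gives $\bigl(\widehat F/F_p(U)\bigr)^{1/p}\in[\,1-2\delta/p,\,1+2\delta/p\,] = [\,1-2C_p\varepsilon,\,1+2C_p\varepsilon\,]\subseteq[\,1-\varepsilon,\,1+\varepsilon\,]$ whenever $C_p\le 1/2$; that is, $\widehat F^{1/p}\eqeps\norm{U}_p$. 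The hypotheses of the scalar estimate ($\delta\le 1/2$ and $\delta/p\le 1$) are met once additionally $C_p\le 1/p$, using $\varepsilon<1/2$. The second implication is the mirror image: from $\widehat N\eqtop{\delta}\norm{U}_p$ with $\delta = C_p\varepsilon/p$ we get $\widehat N^{p} = F_p(U)\cdot(\widehat N/\norm{U}_p)^{p}$, and the scalar estimate with $r=p$ yields $(\widehat N/\norm{U}_p)^{p}\in[\,1-2C_p\varepsilon,\,1+2C_p\varepsilon\,]\subseteq[\,1-\varepsilon,\,1+\varepsilon\,]$ once $C_p\le 1/2$, with the hypotheses ($p\delta\le 1$, $\delta\le 1/2$) guaranteed when also $C_p\le p$. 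Choosing $C_p = \min\{1/2,\ p,\ 1/p\}$ satisfies every constraint simultaneously; for the range $0<p\le 2$ relevant to this paper this is simply $C_p = 1/2$.

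The argument is elementary, so there is no real obstacle; the only thing demanding attention is the bookkeeping of admissible ranges — one must ensure $\delta\le 1/2$ (and $r\delta\le 1$) before invoking the scalar inequalities, which is what forces $C_p$ to behave like $1/p$ for large $p$, the cap $1/2$ being enough for small $p$. Finally, the degenerate case $F_p(U)=0$ (equivalently $\norm{U}_p=0$) is trivial, as every multiplicative approximation of $0$ equals $0$.
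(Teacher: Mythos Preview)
Your argument is correct and complete; the paper itself states this as an observation without proof, so there is nothing to compare against. One small slip in your closing parenthetical: $C_p=\min\{1/2,p,1/p\}$ equals $p$, not $1/2$, when $0<p<1/2$ (the constraint $C_p\le p$ is binding there, since you need $\delta=C_p\varepsilon/p\le 1/2$ in the second implication). This does not affect the proof --- your choice of $C_p$ is valid throughout --- only the aside about its value on $(0,2]$.
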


\section{Prefix encodings}
In this section we present a solution to the following problem. Imagine we have a block of text $T'[1, b]=T[i+1, i+b]$ and a prefix of the pattern $P' = P[1, b]$. We want to find a compressed representation (encoding) of~$T'$ so that the following is possible: given any $1 \le d \le b$, the compressed representation of~$T'$, and $P'$ (explicitly), we can $1\pm\eps$ approximate $\norm{T''-P''}_p$, where $T'' = T'[b-d+1, b]$ is a suffix of $T'$ and $P'' = P'[1, d]$ is a prefix of $P'$. 

We start by presenting a solution to the Hamming distance case, which is a basis to our solution for all other $L_p$ norms for $0<p\le 2$.

\subsection{Hamming (\texorpdfstring{$L_0$}{L0}) distance}\label{sec:prefix_l0}
Recall that ``?'' is the don't care character, a special character that matches any other character of the alphabet.

\begin{definition}[Hamming subsampling]\label{def:subsampling} 
Consider a word $U$ of length $n$. Let $q = \ceil{3 \log n}$ and let $h(i) : [n] \rightarrow \{0,1\}^q$ be a function drawn at random from a pairwise independent family. For $r = 0, \ldots, q$, we define the $r$-th level Hamming subsample of $U$, $\hSub_r(U)$, as follows: 
\[\hSub_r(U) [i] = 
\begin{cases} 
U[i], &\text{if the } r\ \text{lowest bits of}\ h(i)\ \text{are all}\ 0;\\ 
?, &\text{otherwise.}
\end{cases}.
\] 
In particular, $\hSub_0(U) = U$.
\end{definition}

Fix an integer $k = \Theta(1/\eps^2)$ large enough. For two words $U,V$, consider the following estimation procedure:
\begin{algo}\label{alg:ham}\ 
\begin{enumerate}
\item Denote $X_r$ to be the Hamming distance between $\hSub_r(U)$ and $V$ and let $f = \min\{ i : X_i \le k\}$.\footnote{We emphasize that $\hSub_r(U)$ contains don't care characters, so the Hamming distance is defined as the number of pairs of characters of $\hSub_r(U)$ and $V$ that do not match.}
\item Output $Z_f = 2^f \cdot X_f$ as an estimate of $\norm{U-V}_H$. 
\end{enumerate}
\end{algo}

The following lemma is a rephrasing of a known result regarding subsampling in estimation of the Hamming norm (cf. \cite[Theorem 3]{DBLP:conf/random/Bar-YossefJKST02}, or \cite[Theorem 2]{DBLP:conf/spaa/GibbonsT01}). 

\begin{lemma}
\label{lem:hamming_subsampling}
For $Z_f$ as in Algorihtm~\ref{alg:ham} there is $Z_f \eqeps \norm{U-V}_H$ with probability at least $3/4$.
\end{lemma}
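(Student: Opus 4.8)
The plan is to establish the standard two-sided guarantee for the subsampling estimator $Z_f = 2^f X_f$ by analyzing the random variable $X_r$, the Hamming distance between $\hSub_r(U)$ and $V$. First I would fix the set $D = \{i : U[i] \neq V[i]\}$ of ``true mismatches'' and observe that a position $i \in D$ survives into $\hSub_r(U)$ as a mismatch precisely when the $r$ lowest bits of $h(i)$ are all zero; by pairwise independence of $h$, this happens with probability exactly $2^{-r}$, and these events are pairwise independent across $i$. (Positions outside $D$ never contribute, since replacing $U[i]=V[i]$ by ``?'' does not create a mismatch, and keeping it does not either.) Hence $\expect{X_r} = |D| \cdot 2^{-r} = \norm{U-V}_H / 2^r$ and $\var{X_r} \le |D| \cdot 2^{-r} = \expect{X_r}$ by pairwise independence. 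Let $m = \norm{U-V}_H$ and let $r^\star$ be the ``right'' level, namely the largest $r$ with $m/2^r \ge k$ (roughly $r^\star \approx \log_2(m/k)$); the goal is to show that with probability $\ge 3/4$ the stopping level $f$ equals $r^\star$ or $r^\star+1$, and that at that level $2^f X_f \eqeps m$.

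**Key steps.**

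The argument splits into three parts. (1) \emph{The estimator is accurate at the target level.} At level $r^\star$ (and $r^\star+1$) we have $\expect{X_{r^\star}} = m/2^{r^\star} = \Theta(k) = \Theta(1/\eps^2)$, which is large. By Chebyshev's inequality, $|X_{r^\star} - \expect{X_{r^\star}}| \le \eps \cdot \expect{X_{r^\star}}$ fails with probability at most $\var{X_{r^\star}} / (\eps \expect{X_{r^\star}})^2 \le 1/(\eps^2 \expect{X_{r^\star}})$, which is a small constant once $k$ is chosen a large enough multiple of $1/\eps^2$. On this event $2^{r^\star} X_{r^\star} \eqeps m$. (2) \emph{The stopping level is not too small.} We need $X_r > k$ for all $r < r^\star$, i.e. the algorithm does not stop prematurely at a coarse level where the estimate would be based on too few surviving mismatches. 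For $r < r^\star$, $\expect{X_r} = m/2^r \ge 2k$, so $\Pr[X_r \le k] \le \Pr[|X_r - \expect{X_r}| \ge \expect{X_r}/2] \le 4 \var{X_r}/\expect{X_r}^2 \le 4/\expect{X_r} \le 4/(2^{r^\star - r} \cdot 2k)$; summing this geometric series over $r < r^\star$ gives a small constant. (3) \emph{The stopping level is not too large.} We need $X_r \le k$ for some $r \le r^\star + 1$ so that $f$ is well-defined and $f \le r^\star+1$; at $r = r^\star+1$, $\expect{X_{r^\star+1}} = m/2^{r^\star+1} < k$, and a one-sided Chebyshev (or Markov) bound shows $X_{r^\star+1} \le k$ with good constant probability. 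Combining (1)–(3) by a union bound over a constant number of bad events, each of which is a small constant by taking the hidden constant in $k = \Theta(1/\eps^2)$ large enough, gives overall success probability at least $3/4$; on the good event $f \in \{r^\star, r^\star+1\}$ and $Z_f = 2^f X_f \eqeps m$.

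**Edge cases and the main obstacle.** I would handle separately the boundary case where $m$ itself is already $\le k$: then $f = 0$, $Z_0 = X_0 = \norm{U-V}_H$ exactly, and there is nothing to prove. More delicately, one must make sure $q = \ceil{3\log n}$ levels suffice, i.e. that $r^\star + 1 \le q$; this holds because $m \le n$ so $r^\star \le \log_2 n \le q$. The main obstacle I anticipate is the careful bookkeeping in step (2): ensuring that the sum of premature-stopping probabilities over all coarse levels is controlled requires exploiting the geometric growth of $\expect{X_r}$ as $r$ decreases, and one must be a little careful that the definition of $r^\star$ (largest $r$ with $m/2^r \ge k$) interacts correctly with the constant factors so that $\expect{X_r} \ge 2k$ genuinely holds for every $r < r^\star$ — adjusting the exact threshold in the definition of $r^\star$ or the constant in $k$ as needed. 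Everything else is a routine application of Chebyshev's inequality with pairwise independence, exactly as in the cited subsampling arguments.
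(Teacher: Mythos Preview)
Your overall strategy is sound and the moment computations (expectation and variance of $X_r$ via pairwise independence) match the paper's. However, step (3) has a genuine gap. By your definition, $r^\star$ is the largest $r$ with $m/2^r \ge k$, so $\expect{X_{r^\star+1}} = m/2^{r^\star+1}$ can lie anywhere in $[k/2, k)$. When it is close to $k$, neither Markov nor one-sided Chebyshev gives a useful bound on $\Pr[X_{r^\star+1} > k]$: Markov gives only $< 1$, and Cantelli gives $\sigma^2/(\sigma^2 + t^2)$ with $t = k - \expect{X_{r^\star+1}}$ possibly arbitrarily small. So you cannot conclude $f \le r^\star+1$ with probability close to $1$, and the union bound in your summary does not yield $3/4$. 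The easy fix is to descend one more level: at $r^\star+2$ one has $\expect{X_{r^\star+2}} < k/2$, so the gap to $k$ is at least $k/2$ and Chebyshev gives $\Pr[X_{r^\star+2} > k] = O(1/k)$; you would then also need the accuracy claim of step (1) at level $r^\star+2$, which is fine since $\expect{X_{r^\star+2}} = \Theta(k)$ still. Your ``main obstacle'' paragraph anticipates constant-juggling in step~(2), but the actual delicate point is here in step~(3).

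The paper sidesteps this entirely via a structurally different organization: it does \emph{not} attempt to lower-bound $f$. It sets $c = \min\{i : m/2^i \le k\}$ (so $\expect{X_{c+1}} \le k/2$, giving the clean gap needed for $\Pr[f > c+1] \le 1/16$), and then proves that for \emph{every} level $r = 0,\dots,c+1$ the estimate $Z_r = 2^r X_r$ satisfies $\Pr\big[|Z_r - m| \ge 4\sqrt{m\,2^{c+1}}\big] \le 2^{-(4+c+1-r)}$, a single absolute deviation threshold across all levels. These failure probabilities sum geometrically, so whichever level $f$ lands on (as long as $f \le c+1$) the estimate is good. Your step (2), while correct, is therefore unnecessary in the paper's approach; the paper trades the ``not stopping too early'' argument for a uniform accuracy bound over all shallow levels.
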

\begin{proof}
Denote $m = \norm{U-V}_H$. Consider a fixed value $r$. Let $I_1,I_2,\ldots,I_n$ be binary variables indicating existence of a mismatch between $\hSub_r(U)$ and~$V$ at positions $1,\ldots,n$, so that $X_r = \sum_j I_j$. We observe that $\expect{X_r} = m/2^r$ and therefore $\expect{Z_r} = m$, because each of the $m$ positions with mismatch between $U$ and $V$ generates a mismatch between $\hSub_r(U)$ and~$V$ with probability $1/2^r$. 

Furthermore, as the function $h$ in Definition~\ref{def:subsampling} is drawn from a pairwise independent family, there is $\var{X_r} = \sum_j \var{I_j}  \le \sum_j \expect{(I_j)^2}  = \sum_j \expect{I_j} = \expect{X_r} = m/2^r$. Let $c = \min \{i : \expect{X_i} \le k\} = \ceil{ \log_2 \left( \frac{m}{k} \right) }$. By Chebyshev's inequality, we have
\begin{equation}
\label{eq:z_large} \Pr[ |Z_r - m| \ge 4\sqrt{m 2^{c+1}}] = \Pr[ |X_r - m/2^r| \ge 2^{2+(c+1-r)/2} \sqrt{m/2^r}] \le 1/2^{4+(c+1-r)}
\end{equation}
We estimate $\Pr[ f > c+1] = \Pr[X_{c+1} > k]$. Assume w.l.o.g. that $k \ge 32$. Observe that $m/2^c \le k$, which implies, for $k \ge 32$, $m/2^{c+1} + 4\sqrt{m/2^{c+1}} \le k/2 + 4\sqrt{k/2} \le k$. By Equation~\ref{eq:z_large}, there is 
\[\Pr[X_{c+1} > k] \le \Pr[X_{c+1} \ge m/2^{c+1} + 4\sqrt{m/2^{c+1}}] \le 1/16.\]
It follows that $\Pr[ f > c+1] = \Pr[X_{c+1} > k] \le 1/16$. 
Hence, we obtain 
\begin{align*}
\Pr[ |Z_f - m| \ge 4 \sqrt{2/k} \cdot m] &\le \Pr[ |Z_f - m| \ge 4 \sqrt{m 2^{c+1}} ] \le\\
&\le \Pr[ f > c+1] + \sum_{r = 0}^{c+1} \Pr[|Z_f-m| \ge 4 \sqrt{m 2^{c+1}}  \mbox{ and } f = r] \le \\
&\le \Pr[ f > c+1] + \sum_{r = 0}^{c+1} \Pr[|Z_r-m| \ge 4 \sqrt{m 2^{c+1}}] \le \\
&\le 1/16 + \sum_{r=1}^{c+1} 1/2^{4+(c+1-r)} < 1/4.
\end{align*}
It follows that we can choose $k = \Theta(1/\eps^2)$ large enough so that $Z_f \eqeps \norm{U-V}_H$ with probability $\ge 3/4$. 
\end{proof}
Since the subsampling is performed independently for each position, one can use subsampling to approximate the Hamming distance between any suffix of $B$ and any prefix of $P$ of equal lengths in a similar fashion. 

We are now ready to define the Hamming prefix encoding of a block. For brevity, let $B_r^{j} =\hSub_r(B)[b-j+1,b]$ and $P_r^j = P[1,j]$ (the same for all $r$). Furthermore, given two words $U,V$ of equal length, define the \emph{mismatch information} $\MI(U, V) = \{(i, U[i], V[i]) : U[i] \mbox{ does not match } V[i]\}$. 

\begin{definition}\label{def:prefix_encoding_l0}
Consider a $b$-length block $B$ of the text $T$. For each $0 \le r \le \ceil{3 \log n}$, let $j^\ast (r)$ be the maximal integer such that the Hamming distance between $B_r^{j^\ast (r)}$ and $P_r^{j^\ast (r)}$ is at most $k = \Theta(\varepsilon^{-2})$. We define the Hamming prefix encoding of $B$ to be a tuple of pairs $j^\ast (r), \MI(B_r^{j^\ast (r)}, P_r^{j^\ast (r)})$. 
\end{definition}

Note that the prefix encoding of $B$ uses $\bigo (k \log n) = \bigo (\eps^{-2} \log n)$ space. We can compute it efficiently:

\begin{lemma}\label{lm:hprefsketchalg}
Assume constant-time random access to $P[1,b]$. Given a $b$-length block $B$ of the text $T$, its Hamming prefix encoding can be computed in $\widetilde\Oh(k b)= \widetilde\bigo(b \varepsilon^{-2})$ time.
\end{lemma}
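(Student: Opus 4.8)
The plan is to maintain, for every subsampling level $r$ simultaneously, a data structure that, after reading the $b$ characters of the block $B$, knows the maximal prefix-suffix overlap length $j^\ast(r)$ together with the list of mismatches between $B_r^{j^\ast(r)}$ and $P_r^{j^\ast(r)}$. The key observation is that as we extend $B$ by one character from length $j$ to length $j+1$, the suffix $B_r^{j}=\hSub_r(B)[b-j+1,b]$ grows by prepending a character at its front, which is aligned with $P[1]$, and every previously-aligned pair of positions shifts by one, so the old pair $(P[t], \hSub_r(B)[b-j+t])$ becomes $(P[t+1], \hSub_r(B)[b-j+t])$. Hence the set of mismatches changes completely between consecutive steps, and we cannot update it incrementally in the naive way. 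The fix is the standard trick for this kind of sliding alignment: process the block \emph{backwards} conceptually, i.e. think of building, for each fixed right endpoint, the alignment of $P[1,j]$ against a fixed-position window; but since in a stream $B$ arrives left to right and the relevant window has its \emph{right} end fixed at position $b$, we instead grow $j$ and for each candidate $j$ maintain the alignment of $\hSub_r(B)[b-j+1,b]$ with $P[1,j]$ from scratch using an amortized/Kangaroo-jump style argument bounded by the mismatch budget $k$.

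Concretely, here is the approach I would carry out. For a fixed level $r$, I will maintain a pointer to the current candidate overlap length starting from the largest possible and using the ``kangaroo method'': to test whether a given overlap length $j$ has at most $k$ mismatches, repeatedly jump to the next mismatch using constant-time longest-common-extension queries on (the subsampled) $B$ against $P$; after $k{+}1$ jumps we can stop. The subtlety is that $\hSub_r(B)$ contains ``?'' characters and we need LCE queries that treat ``?'' as matching everything; since the subsampling pattern is determined by the pairwise-independent hash $h$ which we can evaluate in $O(1)$ time and $\widetilde O(1)$ space, and $P$ is available with constant-time random access, each LCE query can be answered in $\widetilde O(1)$ time by a combination of a suffix structure on $P$ and explicitly skipping over the ``?'' positions (whose pattern is locally computable). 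Doing binary search over $j\in[1,b]$ for the maximal $j$ with $\le k$ mismatches, each test costing $\widetilde O(k)$, gives $\widetilde O(k)$ per level; summing over the $q+1=\widetilde O(1)$ levels gives $\widetilde O(k)$ total — but I still owe a factor $b$ somewhere, which comes from the fact that the ``LCE with don't-cares'' queries are not truly constant time: scanning past a run of ``?'' characters costs time proportional to the run length, and over a whole block of length $b$ this sums to $\widetilde O(b)$ per level, hence $\widetilde O(kb)$ overall, matching the claimed bound. Alternatively, and perhaps more cleanly, one simply scans $B$ once per level comparing position by position against the appropriate shift of $P$, collecting mismatches until the budget $k$ is exhausted at each candidate endpoint; a monotonicity argument (more overlap never decreases the mismatch count, since the subsample positions are nested as $r$ varies but for fixed $r$ we are comparing different-length alignments) lets us find $j^\ast(r)$ with a single $O(b)$ pass augmented by $O(k)$ bookkeeping per reported mismatch.

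I would therefore structure the proof as: (i) note $j^\ast(r)$ is well-defined and, by the way the distance grows with the overlap length, can be found by testing $O(\log b)$ values or by one linear scan; (ii) describe the mismatch-enumeration subroutine and argue each reported mismatch costs $\widetilde O(1)$ amortized plus the unavoidable $\widetilde O(b)$ cost of skipping ``?'' runs and aligning shifts, using the constant-time evaluability of $h$ and random access to $P$; (iii) bound the number of reported mismatches across the search by $\widetilde O(k)$ per level; (iv) multiply by the $\widetilde O(1)$ levels and add the $\widetilde O(b)$ scanning cost to get $\widetilde O(kb)=\widetilde O(b\eps^{-2})$; (v) note the space is dominated by the output size $\widetilde O(k)$ per level plus $\widetilde O(1)$ for the hash and pointers. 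The main obstacle is item (ii): making precise that the don't-care characters do not blow up the per-mismatch cost, and handling the fact that between two candidate overlap lengths the \emph{entire} alignment of $P$ against $B$ shifts, so one cannot literally reuse mismatch lists across candidates — this is why the linear-scan formulation, which recomputes from scratch but only pays $O(b)$ for the scan and $O(k)$ for bookkeeping, is the safest route and is what I would write up.
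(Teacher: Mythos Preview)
Your proposal has a genuine gap. The ``monotonicity argument'' you invoke---that increasing the overlap length $j$ cannot decrease the number of mismatches between $B_r^j$ and $P_r^j$---is false. When $j$ grows to $j+1$, the entire alignment shifts: you are now comparing $P[1]$ with $\hSub_r(B)[b-j]$, $P[2]$ with $\hSub_r(B)[b-j+1]$, and so on, a completely different set of character pairs than before. A trivial counterexample (ignoring subsampling): take $P=ab$, $B=ab$; at $j=1$ you compare $P[1]=a$ against $B[2]=b$ and get one mismatch, while at $j=2$ you compare $ab$ against $ab$ and get zero. So neither binary search over $j$ nor a single left-to-right scan with $O(k)$ bookkeeping can work on this basis. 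Your kangaroo/LCE fallback is also underspecified: LCE queries in the presence of don't-care characters are not constant time in any standard data structure, and ``skipping over ? runs'' does not obviously amortize to $\widetilde O(b)$ per level once you are testing many different shifts.

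The paper's proof avoids all of this by a single clean reduction: for each level $r$, append $b$ don't-care characters to $\hSub_r(B)$ to obtain a text $T'$ of length $2b$, and run an off-the-shelf $k$-mismatch-with-wildcards algorithm (Clifford--Efremenko--Porat--Rothschild, SODA~2009) with pattern $P[1,b]$ against $T'$. Aligning $P[1,b]$ at position $i$ of $T'$ makes the last $i-1$ pattern characters face don't-cares, so the mismatch count at alignment $i$ is exactly $\norm{B_r^{\,b-i+1}-P_r^{\,b-i+1}}_H$; hence the leftmost alignment with at most $k$ mismatches gives $j^\ast(r)$ together with the required mismatch information. That algorithm runs in $\widetilde O(kb)$ time on inputs of length $O(b)$, and summing over the $O(\log n)$ levels stays $\widetilde O(kb)$. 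No monotonicity, no bespoke LCE machinery---just a reduction to a known result.
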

\begin{proof}
To compute the encoding, we use the algorithm of~\cite{DBLP:conf/soda/CliffordEPR09}. Formally, for each $r$ we create a word $T'$ by appending $b$ don't care characters to the subsample $\hSub_r$. The algorithm of ~\cite{DBLP:conf/soda/CliffordEPR09} can be used to find all $b$-length subwords of $T'$ that match $P[1,b]$ with up to $k$ mismatches, moreover for each of these subwords the algorithm outputs the mismatch information. We take the leftmost subword only, which corresponds to $j^\ast (r)$ because of the don't care characters. In total, our algorithm uses $\widetilde\Oh(kb) = \widetilde\Oh(\eps^{-2} b)$ time. 
\end{proof}

We now show how to compute the Hamming distance between any $j$-length suffix of $B$ and any $j$-length prefix of~$P$ given $P[1,b]$ and the Hamming prefix encoding of a block $B$.

\begin{lemma}\label{lm:hpref_decoding}
Given the prefix encoding of a $b$-length block $B$ of the text $T$, there is an algorithm that computes, for any $j = 1, \ldots, b$, a $(1+\eps)$-approximation of the Hamming distance between the $j$-length suffix of $B$ and the $j$-length prefix of $P$ in $\widetilde\Oh(k b) = \widetilde\bigo(b \varepsilon^{-2})$ time. 
\end{lemma}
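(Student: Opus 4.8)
The plan is to use the prefix encoding to recover an approximation of the Hamming distance for any query length $j$ by a case analysis on how $j$ compares to the stored thresholds $j^\ast(r)$. First I would recall that, by the discussion following Lemma~\ref{lem:hamming_subsampling}, if we run Algorithm~\ref{alg:ham} on $U = B_r^j$ (the $j$-length suffix of $B$ after level-$r$ subsampling, noting that subsampling of $B$ restricted to a suffix is exactly the level-$r$ subsampling of that suffix) and $V = P_r^j$, the estimate $Z_f = 2^f X_f$ with $f = \min\{r : X_r \le k\}$ is a $(1\pm\eps)$-approximation of the Hamming distance between the $j$-length suffix of $B$ and the $j$-length prefix of $P$, with probability at least $3/4$ (and this can be boosted to w.h.p.\ by taking $\Oh(\log n)$ independent copies, which we may assume is folded into the construction). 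So the decoding task reduces to: given the prefix encoding, compute $X_r$ for the relevant levels $r$ at query length $j$, identify $f$, and output $2^f X_f$.

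The key observation linking the query to the stored data is monotonicity. For a fixed level $r$, the Hamming distance between $B_r^j$ and $P_r^j$ is non-decreasing in $j$, since extending both words by one aligned position can only add a mismatch. Therefore $j^\ast(r)$ as defined is exactly the largest $j$ with $X_r(j) \le k$. The decoding algorithm, on input $j$, proceeds as follows. For each level $r = 0, 1, \ldots, \ceil{3\log n}$ in increasing order: if $j \le j^\ast(r)$, then $X_r(j) \le k$, and moreover we can compute $X_r(j)$ \emph{exactly} from the stored mismatch information $\MI(B_r^{j^\ast(r)}, P_r^{j^\ast(r)})$ — namely, $X_r(j)$ equals the number of triples $(i, B_r^{j^\ast(r)}[i], P_r^{j^\ast(r)}[i])$ in $\MI$ whose position $i$ lies within the last $j$ characters of the length-$j^\ast(r)$ window (after the appropriate index shift, since the suffix of length $j$ of $B$ is a suffix of the suffix of length $j^\ast(r)$, and likewise the prefix of length $j$ of $P$ is a prefix of the prefix of length $j^\ast(r)$). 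The first such level $r$ (the smallest with $j \le j^\ast(r)$) is exactly $f$, because for all smaller levels $r' < r$ we have $j > j^\ast(r')$, hence $X_{r'}(j) > k$. We then output $2^f \cdot X_f(j)$.

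It remains to argue two points. First, correctness of the identification of $f$: since the subsampling levels are nested (a position surviving to level $r$ survives to all levels $r' < r$), the sequence $X_0(j) \ge X_1(j) \ge \cdots$ is non-increasing in $r$ for fixed $j$, so $f = \min\{r : X_r(j) \le k\}$ is well-defined and is precisely the first level at which $j \le j^\ast(r)$; moreover $f \le \ceil{3\log n}$ with high probability because at the top level the survival probability is $\le 1/n^3$, so $X_{\ceil{3\log n}}(j) = 0 \le k$ essentially always, which is already guaranteed by the construction covering all levels up to $\ceil{3\log n}$. Second, the running time: there are $\Oh(\log n)$ levels, and at level $f$ we scan the stored mismatch list of size $\Oh(k)$, counting positions inside the query window; filtering by position is $\Oh(k \log n)$ if we keep the list sorted, or $\Oh(k)$ with a linear scan, so the total is $\widetilde\Oh(k) = \widetilde\Oh(\eps^{-2})$ — in fact comfortably within the claimed $\widetilde\Oh(kb)$ bound. (The looser $\widetilde\Oh(kb)$ statement presumably accommodates a uniform bound across the whole algorithm; one can also simply note that $\widetilde\Oh(\eps^{-2}) \subseteq \widetilde\Oh(kb)$.)

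The main obstacle, and the only genuinely delicate point, is the index bookkeeping: one must check that restricting the level-$r$ subsample of $B$ to its $j$-length suffix coincides with taking the level-$r$ subsample of that $j$-length suffix (true because the hash $h$ is indexed by absolute position and subsampling acts positionwise), and that the mismatch information stored for the length-$j^\ast(r)$ window correctly determines the mismatch count for every shorter aligned window $j \le j^\ast(r)$ — this is where monotonicity in $j$ is used, and where one must be careful that the alignment is "anchored at the right end of $B$ and the left end of $P$," so shrinking $j$ drops positions from the \emph{left} of $B$'s window and from the \emph{right} of $P$'s window simultaneously but keeps the pairing intact. Once these are verified, the estimate $2^f X_f(j)$ is identical in distribution to the output of Algorithm~\ref{alg:ham} on the pair $(B_r^j)_r, P_r^j$, so Lemma~\ref{lem:hamming_subsampling} (amplified to high probability) yields the claimed $(1+\eps)$-approximation.
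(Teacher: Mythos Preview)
Your proof contains a genuine error in the index bookkeeping, precisely at the point you flag as ``the only genuinely delicate point.'' The claim that shrinking $j$ ``keeps the pairing intact'' is false. In the length-$j^\ast(r)$ alignment, position $B[b-j^\ast(r)+i]$ is paired with $P[i]$; in particular $B[b]$ is paired with $P[j^\ast(r)]$. In the length-$j$ alignment, $B[b-j+i]$ is paired with $P[i]$, so $B[b]$ is now paired with $P[j]$. Decreasing $j$ does not drop positions from both ends of a fixed pairing; it \emph{shifts} the pattern relative to the block. Consequently the mismatch set $\MI(B_r^{j^\ast(r)},P_r^{j^\ast(r)})$ tells you nothing directly about the mismatches at a different shift $j<j^\ast(r)$. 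Your monotonicity claim fails for the same reason: take $b=2$, $r=0$, $B=P=ab$; then $X_0(1)=\norm{b-a}_H=1$ while $X_0(2)=\norm{ab-ab}_H=0$. Hence the identification ``smallest $r$ with $j\le j^\ast(r)$ equals $f$'' is also unjustified: for $j<j^\ast(r)$ one may well have $X_r(j)>k$.

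The paper's proof handles this by using the mismatch information not to read off $X_r(j)$ directly, but merely to \emph{reconstruct} the string $B_r^{j^\ast(r)}$ (which is possible since we have $P[1,j^\ast(r)]$ and the at-most-$k$ positions where they differ). It then pads with don't-cares and reruns the $k$-mismatch matching algorithm of~\cite{DBLP:conf/soda/CliffordEPR09} over all shifts, which yields $X_r(j)$ (or the information that $X_r(j)>k$) for every $j\le j^\ast(r)$ in $\widetilde\Oh(kb)$ time. For $j>j^\ast(r)$ one knows $X_r(j)>k$ by maximality of $j^\ast(r)$. This is why the stated running time is $\widetilde\Oh(kb)$ rather than $\widetilde\Oh(k)$: the re-matching step is genuinely needed and cannot be replaced by filtering a stored mismatch list.
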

\begin{proof}
Denote $X_r$ to be the Hamming distance between $P_r^j$ and $B_r^j$. We compute the smallest $f$ such that $X_f \le k$ in the following way. For each $r$, we use $\MI(B_r^{j^\ast (r)}, P_r^{j^\ast (r)})$ to restore $B_r^{j^\ast (r)}$. We then append $P_r^{j^\ast (r)}$ with $b$ don't care characters and run the algorithm of~\cite{DBLP:conf/soda/CliffordEPR09} for the resulting text and the pattern. This allows to compute $X_r$ for all $j \le j^\ast (r)$, and if $j > j^\ast (r)$, then $X_f > k$ by definition. In total, the algorithm takes $\widetilde\bigo(kb) = \widetilde\Oh(\eps^{-2} b)$ time.
\end{proof}

\subsection{Manhattan (\texorpdfstring{$L_1$}{L1}) distance}\label{sec:prefix_l1}
Recall a word morphism $\nu : \Sigma \to \{0,1\}^{\sigma}$, $\nu(a) = 1^{a}0^{\sigma-a}$. Our goal in this section is to simulate implicitly procedures from Lemma~\ref{lm:hprefsketchalg} and Lemma~\ref{lm:hpref_decoding} on words $\nu(B)$ and $\nu(T)$ without introducing any significant overhead.

\begin{definition}[Manhattan scaling]\label{def:manhattanscaling} 
Consider a word $U$ of length $n$. Let $q = \ceil{3 \log n \sigma}$ and let $h : [n] \rightarrow 2^q$ be a function drawn at random from a $4$-wise independent family. For $r = 0, \ldots, q$, we define the $r$-th level Manhattan subsample of $U$, $\mSub_r(U)$, as a word of length $n$ such that $\mSub_r(U)[i] = \left\lfloor \frac{U[i] + (h(i) \bmod 2^r)}{2^r} \right\rfloor$. 
In particular, $\mSub_0(U) = U$.
\end{definition}

Fix an integer $k = \Theta(1/\eps^2)$ large enough. For words $U,V$, consider $\mSub_r(U), \mSub_r(V)$ for all $r = 0, \ldots, q$, and the following estimation procedure:
\begin{algo}\label{alg:l1pref}\ 
\begin{enumerate}
\item Denote $X_r = \norm{\mSub_r(U)-\mSub_r(V)}_1$ and let $f = \min\{ i : X_i \le k\}$.
\item Output $Z_f = 2^f \cdot X_f$ as an estimate of $\norm{U-V}_1$. 
\end{enumerate}
\end{algo}

\begin{lemma}
\label{lem:manhattan_scaling}
For $Z_f$ as in Algorihtm~\ref{alg:l1pref} there is $Z_f \eqeps \norm{U-V}_1$ with probability $\ge 3/4$.
\end{lemma}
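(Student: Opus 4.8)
\textbf{Proof plan for Lemma~\ref{lem:manhattan_scaling}.}
The plan is to mimic the proof of Lemma~\ref{lem:hamming_subsampling}, but first reduce the Manhattan estimator to the Hamming one via the unary morphism $\nu(a) = 1^a 0^{\sigma-a}$. The key structural fact I would establish is that one step of Manhattan scaling corresponds, up to a small rounding discrepancy, to one step of Hamming subsampling on $\nu(U)$ and $\nu(V)$. Concretely, think of a coordinate $i$ contributing $|U[i]-V[i]|$ unit cells to $\norm{\nu(U)-\nu(V)}_1 = \norm{U-V}_1$; these are the cells in positions between $\min(U[i],V[i])+1$ and $\max(U[i],V[i])$ of the $i$-th block. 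The operation $x \mapsto \lfloor (x + (h(i)\bmod 2^r))/2^r\rfloor$ is, essentially, shifting the number line by a random offset $h(i)\bmod 2^r$ and then bucketing into windows of length $2^r$; after this operation the contribution of coordinate $i$ to $X_r = \norm{\mSub_r(U)-\mSub_r(V)}_1$ equals the number of bucket boundaries (multiples of $2^r$ after the shift) that fall strictly inside the interval $(\min(U[i],V[i]), \max(U[i],V[i])]$. So $X_r$ counts, over all the $m \coloneqq \norm{U-V}_1$ "mismatch cells", how many of them sit just to the right of a shifted multiple of $2^r$; each fixed cell survives with probability exactly $1/2^r$ over the choice of $h(i)\bmod 2^r$. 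This is the exact analogue of "the $r$ lowest bits of $h$ are $0$" in Definition~\ref{def:subsampling}, so $\expect{X_r} = m/2^r$ and hence $\expect{Z_r} = m$.

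The second step is the variance bound. Here the cells are not pairwise independent the way individual positions were in the Hamming case: two cells belonging to the \emph{same} coordinate $i$ are perfectly correlated in their offset $h(i)\bmod 2^r$ (they share $h(i)$), while cells in different coordinates are only $4$-wise independent (from the $4$-wise independent family for $h$). I would handle this by writing $X_r = \sum_i Y_i$ where $Y_i$ is the number of surviving cells in block $i$; then $\var{X_r} = \sum_i \var{Y_i} + \sum_{i\ne i'}\mathrm{Cov}(Y_i,Y_{i'})$. The cross-covariances are controlled by pairwise (hence $4$-wise) independence of $h(i),h(i')$: if the two offsets were exactly independent the covariances would vanish; $4$-wise independence suffices to kill them (or bound them by a lower-order term). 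For $\var{Y_i}$, I use that $Y_i \le $ number of boundaries in an interval of length $|U[i]-V[i]|$, and the key observation that in a shifted grid of period $2^r$ the count $Y_i$ is within $1$ of $|U[i]-V[i]|/2^r$ deterministically, so $Y_i \le \lceil |U[i]-V[i]|/2^r\rceil$ and $\var{Y_i} = O(\expect{Y_i})$ when $|U[i]-V[i]| \ge 2^r$, while for $|U[i]-V[i]| < 2^r$ we have $Y_i \in \{0,1\}$ and $\var{Y_i} \le \expect{Y_i}$. Summing, $\var{X_r} = O(\expect{X_r}) = O(m/2^r)$, the same bound used in the Hamming proof. (This is why the definition uses $4$-wise rather than just pairwise independence, and why $q = \lceil 3\log n\sigma\rceil$ rather than $\lceil 3\log n\rceil$: the relevant universe has size $n\sigma$.)

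With $\expect{Z_r} = m$ and $\var{X_r} = O(m/2^r)$ in hand, the remainder is \emph{verbatim} the Chebyshev argument from the proof of Lemma~\ref{lem:hamming_subsampling}: set $c = \min\{i : \expect{X_i}\le k\} = \lceil \log_2(m/k)\rceil$, apply Chebyshev at each level $r$ to get $\Pr[|Z_r - m| \ge 4\sqrt{m\,2^{c+1}}] \le 1/2^{\Theta(c+1-r)}$ (possibly with a different leading constant absorbed into $k$), bound $\Pr[f > c+1] = \Pr[X_{c+1} > k] \le 1/16$ using $\expect{X_{c+1}} = m/2^{c+1} \le k/2$ plus the variance bound, and then a union bound over $r = 0,\dots,c+1$ together with the $f>c+1$ event gives $\Pr[|Z_f - m| \ge O(1/\sqrt{k})\,m] < 1/4$. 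Choosing $k = \Theta(1/\eps^2)$ large enough yields $Z_f \eqeps \norm{U-V}_1$ with probability $\ge 3/4$. The main obstacle is the variance step: carefully accounting for the intra-coordinate correlations and confirming that $4$-wise independence of $h$ across coordinates is exactly what is needed to make the covariance terms negligible; once that is nailed down, everything else is a direct transcription of the Hamming argument.
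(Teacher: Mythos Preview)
Your approach is correct and arrives at the same endpoint, but it is considerably more roundabout than the paper's argument. You conceptualize the Manhattan scaling through the unary morphism $\nu$ and then worry about ``intra-coordinate correlations'' among the individual unit cells within a block. The paper bypasses this entirely: it never invokes $\nu$, and instead looks directly at the per-coordinate contribution $Y_i = |\mSub_r(U)[i]-\mSub_r(V)[i]|$. The key one-line observation is that, with $c = (U[i]-V[i])/2^r$, the random variable $Y_i$ takes values only in $\{\lfloor |c|\rfloor,\lceil |c|\rceil\}$ and has $\expect{Y_i}=|c|$; hence $Y_i-\lfloor |c|\rfloor$ is a $0/1$ variable, so $\var{Y_i}\le\expect{Y_i}$ exactly (not just up to a constant). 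Summing over $i$ using pairwise independence of the $h(i)$'s gives $\var{X_r}\le\expect{X_r}=m/2^r$, the \emph{identical} bound to the Hamming case, after which the paper literally says ``the proof follows step-by-step the proof of Lemma~\ref{lem:hamming_subsampling}.''

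Two minor points where your write-up overreaches: (i) you flag the covariance step as the ``main obstacle'' and suggest $4$-wise independence is what tames it, but in fact pairwise independence of $h(i),h(i')$ already makes $Y_i,Y_{i'}$ independent (each is a function of a single $h$-value), so $\mathrm{Cov}(Y_i,Y_{i'})=0$ exactly; (ii) your cell-counting detour yields only $\var{Y_i}=O(\expect{Y_i})$, which is harmless (the constant folds into $k$) but weaker than the clean $\var{Y_i}\le\expect{Y_i}$ that the two-valued observation gives for free. In short: right idea, but the unary-morphism framing creates complications that evaporate once you treat each coordinate's contribution as a single two-valued random variable.
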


\begin{proof}
Take some position $i$ and denote for short $a = \mSub_r(U)[i]$ and $b = \mSub_r(V)[i]$ and $c = \frac{U[i] - V[i]}{2^r}$. There is  $  |a - b|  \in \left\{ \big\lfloor |c| \big\rfloor,  \big\lceil |c| \big\rceil \right\}$ and 
$\expect{|a-b|} =|c|.$ Since $|a-b|-\big\lfloor |c| \big\rfloor$ is a $0/1$ variable, there is $\var{|a-b|} = \var{\left( |a-b|-\big\lfloor |c| \big\rfloor \right)} \le \expect{\left( |a-b|-\big\lfloor |c| \big\rfloor \right)}  \le \expect{|a-b|}$. Summing for all values of $i$, we reach that
\[\var{X_r} = \var{ \norm{\mSub_r(U) - \mSub_r(V)}_1 } \le \expect{ \norm{\mSub_r(U) - \mSub_r(V)}_1 } =  \expect{X_r}.\]
Since we have reached an identical variance bound, the proof follows step-by-step the proof of Lemma~\ref{lem:hamming_subsampling}.
\end{proof}

To approximate the Manhattan distance between any suffix of $B$ and any prefix of $P$ of equal lengths, we define the encoding similar to the Hamming distance case. Specifically, we still use the \emph{mismatch information}, building on the fact that for any two words $\norm{U-V}_H \le \norm{U-V}_1$ and from the mismatch information the exact value of $\norm{U-V}_1$ can be found. We define $B_r^j = \mSub_r(B)[b-j+1,b]$ as before, but change the definition of $P_r^j$ slightly. Intuitively, we define $P_r^j$ to be the $j$-length prefix of $P$ subsampled in a synchronized way with $B_r^j$. Formally, $P_r^j [i] = \left\lfloor \frac{P[i] + (h(b-j+i) \bmod 2^r)}{2^r} \right\rfloor$.
 
\begin{definition}\label{def:prefix_encoding_l1}
Consider a $b$-length block $B$ of the text $T$. For each $0 \le r \le \ceil{3 \log n \sigma}$, let $j^\ast (r)$ be the maximal integer such that the Manhattan distance between $B_r^{j^\ast (r)}$ and $P_r^{j^\ast (r)}$ is at most $k = \Theta(\varepsilon^{-2})$. We define the Manhattan prefix encoding of $B$ to be a tuple of pairs $j^\ast (r), \MI(B_r^{j^\ast (r)}, P_r^{j^\ast (r)})$. 
\end{definition}

Note that the prefix encoding of $B$ uses $\bigo (k \log n\sigma) = \bigo (\eps^{-2} \log n)$ space.

\begin{lemma}\label{lm:mprefsketchalg}
Assume constant-time random access to $P[1,b]$. Given a $b$-length block $B$ of the text $T$, its Manhattan prefix encoding can be computed in $\widetilde\Oh(b^{2})$ time and $\widetilde\Oh(b)$ space.
\end{lemma}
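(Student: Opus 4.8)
The plan is to adapt the algorithm of Lemma~\ref{lm:hprefsketchalg} to the Manhattan case, but since we can no longer afford to explicitly write down the unary-encoded words $\nu(B), \nu(P)$ (which would have length $b\sigma$), we instead work directly with the integer-valued words $\mSub_r(B)$ and the synchronized prefixes $P_r^j$, and bound the cost by a more naive approach that is still $\widetilde\Oh(b^2)$ time and $\widetilde\Oh(b)$ space. First I would observe that, by Definition~\ref{def:manhattanscaling} together with $4$-wise independence, the hash function $h$ can be stored in $\widetilde\Oh(1)$ space, so the word $\mSub_r(B)$ can be generated on the fly, position by position, in $\widetilde\Oh(1)$ time per position, and similarly for $P_r^j$ once $j$ is fixed (the key point being that $P_r^j[i]$ depends on $h(b-j+i)$, so for each candidate $j$ we need a fresh evaluation, but each evaluation is $\widetilde\Oh(1)$).

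Next I would describe how to find $j^\ast(r)$ for a fixed level $r$. We want the maximal $j$ such that $\norm{B_r^j - P_r^j}_1 \le k$. The natural approach is to iterate $j$ from $1$ to $b$; but the subtlety is that $P_r^j$ is \emph{not} a suffix of $P_r^{j+1}$ — shifting the alignment by one changes which hash values $h(\cdot)$ are used for \emph{every} position. Hence we cannot incrementally update the distance as $j$ grows. Instead, for each of the $b$ values of $j$ and each of the $\widetilde\Oh(1)$ levels $r$, we recompute $B_r^j$ and $P_r^j$ from scratch in $\Oh(j) \subseteq \Oh(b)$ time and compute $\norm{B_r^j - P_r^j}_1$ directly. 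This gives $\widetilde\Oh(b) \cdot b = \widetilde\Oh(b^2)$ total time, and at any moment we store only the current length-$j$ words plus the hash description, i.e.\ $\widetilde\Oh(b)$ space. Once the maximal valid $j^\ast(r)$ is identified, we record it together with $\MI(B_r^{j^\ast(r)}, P_r^{j^\ast(r)})$, which by the bound $\norm{U-V}_H \le \norm{U-V}_1 \le k$ has at most $k = \widetilde\Oh(1)$ entries and is extracted in a single $\Oh(b)$ pass; restoring $B_r^{j^\ast(r)}$ later from $P_r^{j^\ast(r)}$ and this mismatch information is then immediate.

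The main obstacle, and the reason the time bound here is $\widetilde\Oh(b^2)$ rather than the $\widetilde\Oh(\eps^{-2} b)$ of the Hamming case, is precisely the lack of a ``sliding'' structure: the synchronized subsampling of the pattern prefix forces a full recomputation for each alignment $j$, so the Clifford–Ephraim–Porat–Radoszewski $k$-mismatch machinery used in Lemma~\ref{lm:hprefsketchalg} does not apply off the shelf. A secondary point to verify carefully is that the monotonicity needed to justify ``take the maximal $j$'' is genuine — i.e.\ that if $\norm{B_r^j - P_r^j}_1 > k$ we may still have smaller valid $j$, but the encoding only needs the \emph{maximal} one, and the decoding lemma (analogue of Lemma~\ref{lm:hpref_decoding}) only queries $j \le j^\ast(r)$ — so it suffices to scan all $j$ and keep the largest feasible value, with no monotonicity assumption required at all. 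I would close by noting the space accounting: $\widetilde\Oh(1)$ for $h$, $\Oh(b\log\sigma)$ for the working words, and $\widetilde\Oh(1)$ per level for the output, totalling $\widetilde\Oh(b)$.
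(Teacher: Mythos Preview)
Your proposal is correct and matches the paper's approach essentially verbatim: the paper's proof simply says ``for each $r = 0, \ldots, q$ and $j = 1,\ldots, b$ we compare $B_r^j$ and $P_r^j$ character by character in $\Oh(b)$ time to find $j^\ast(r)$ and the corresponding mismatch information,'' which is exactly your naive per-$(r,j)$ recomputation. One small slip: you write that the mismatch information has ``at most $k = \widetilde\Oh(1)$ entries,'' but $k = \Theta(\eps^{-2})$; this is harmless here since the encoding size $\Oh(\eps^{-2}\log n)$ is accounted for separately and absorbed into $\widetilde\Oh(b)$ under the standing assumption $\eps^{-1} \le \sqrt{n}$ used later in the paper.
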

\begin{proof}
Let $q = \ceil{3 \log n \sigma}$. For each $r = 0, \ldots, q$ and $j = 1,\ldots, b$ we compare $B_r^j$ and $P_r^j$ character by character in $\Oh(b)$ time to find $j^\ast(r)$ and the corresponding mismatch information. The claim follows. 
\end{proof}

\begin{lemma}\label{lm:mpref_decoding}
Given the prefix encoding of a $b$-length block $B$ of the text $T$, there is an algorithm that computes, for all $j = 1, \ldots, b$, a $(1\pm\eps)$-approximation of the Manhattan distance between the $j$-length suffix of $B$ and the $j$-length prefix of $P$ in $\widetilde\Oh(b^{2})$ time.
\end{lemma}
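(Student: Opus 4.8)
The plan is to mirror the Hamming decoder of Lemma~\ref{lm:hpref_decoding}, deviating only where the Manhattan construction forces it: here the subsampled pattern prefix $P_r^j$ depends on the alignment $j$ through the shifts $h(b-j+i)$, so it is \emph{not} a truncation of a single fixed word. I would first record that the decoder has access to the hash $h$ and to $P[1,b]$; since $h$ is drawn from a $4$-wise independent family it is stored in $\widetilde\Oh(1)$ space and evaluated in $\widetilde\Oh(1)$ time, so for any $r$ and $j$ the word $P_r^j$ can be written out in $\Oh(b)$ time directly from its definition $P_r^j[i]=\lfloor(P[i]+(h(b-j+i)\bmod 2^r))/2^r\rfloor$.

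The first phase reconstructs, for every level $0\le r\le q=\ceil{3\log n\sigma}$, the word $B_r^{j^\ast(r)}=\mSub_r(B)[b-j^\ast(r)+1,b]$: I would compute $P_r^{j^\ast(r)}$ as above and then overwrite its entries at the $\le k$ positions recorded in $\MI(B_r^{j^\ast(r)},P_r^{j^\ast(r)})$ by the stored characters of $B_r^{j^\ast(r)}$. This is legitimate because, as already used in Section~\ref{sec:prefix_l1}, $\norm{U-V}_H\le\norm{U-V}_1$, so a pair of words at Manhattan distance $\le k$ disagree in at most $k$ positions and each is recovered from the other together with the mismatch information; the phase costs $\widetilde\Oh(b)$ time in total.

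The second phase answers a query $j\in\{1,\dots,b\}$. The key observation is that $(B_r^j,P_r^j)$ is exactly the pair of Manhattan subsamples $(\mSub_r(U),\mSub_r(V))$ of $U=B[b-j+1,b]$ and $V=P[1,j]$ under the hash $h'(i):=h(b-j+i)$, which is again $4$-wise independent on $[j]$ since the arguments $b-j+1,\dots,b$ are distinct; hence Algorithm~\ref{alg:l1pref} applied to $U,V$ outputs a value $Z_f\eqeps\norm{U-V}_1$ with probability $\ge3/4$ by Lemma~\ref{lem:manhattan_scaling}, i.e.\ a $(1\pm\eps)$-approximation of the Manhattan distance between the $j$-length suffix of $B$ and the $j$-length prefix of $P$. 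So it suffices to reproduce that value: find $f=\min\{r:X_r(j)\le k\}$, where $X_r(j)=\norm{B_r^j-P_r^j}_1$, and output $2^f X_f(j)$. I would scan $r=0,1,2,\dots$ and for each $r$ distinguish two cases. If $j\le j^\ast(r)$, then $B_r^j$ is the length-$j$ suffix of the already-reconstructed $B_r^{j^\ast(r)}$ and $P_r^j$ is written out explicitly, so $X_r(j)$ is computed exactly in $\Oh(b)$ time; if it is $\le k$ we stop with $f=r$, otherwise we continue. If $j>j^\ast(r)$, then $X_r(j)>k$ by the \emph{maximality} clause in Definition~\ref{def:prefix_encoding_l1}, so this level is skipped with no computation. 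At the level $r=f$ finally selected we automatically have $j\le j^\ast(f)$ (as $X_f(j)\le k$), so $X_f(j)$ was genuinely computed; and w.h.p.\ $f\le q$, within the stored range.

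Each query touches at most $q+1=\widetilde\Oh(1)$ levels at $\Oh(b)$ per level, so all $b$ queries together with the $\widetilde\Oh(b)$ first phase take $\widetilde\Oh(b^2)$ time (running $\widetilde\Oh(1)$ independent copies and reporting, for each $j$, the median answer boosts the success probability and lets us union-bound over all $j$ without changing the bound). I expect the only genuinely delicate point — the sole place where this is more than a transcription of Lemma~\ref{lm:hpref_decoding} — to be the pattern-side bookkeeping together with the justification of the ``skip when $j>j^\ast(r)$'' step: one must rebuild $P_r^j$ from $h$ for each alignment rather than reuse it across alignments, and one must observe that the skip is licensed purely by the definition of $j^\ast(r)$ as a maximum and needs no monotonicity of $X_r(\cdot)$ in $j$ (which indeed can fail, since $P_r^{j+1}$ is not $P_r^j$ with a prepended symbol).
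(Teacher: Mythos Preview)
Your proposal is correct and follows essentially the same approach as the paper: reconstruct $B_r^{j^\ast(r)}$ from the stored mismatch information, then for each $j$ scan levels $r$, skipping those with $j>j^\ast(r)$ and otherwise computing $X_r(j)=\norm{B_r^j-P_r^j}_1$ in $\Oh(b)$ time by direct comparison. Your write-up is in fact more careful than the paper's own proof on two points the paper glosses over --- that $P_r^j$ must be regenerated from $h$ for each alignment (it is not a substring of $P_r^{j^\ast(r)}$), and that the ``skip'' step is justified purely by the maximality clause in Definition~\ref{def:prefix_encoding_l1} --- but these are elaborations of the same argument rather than a different route.
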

\begin{proof}
Denote $X_r = \norm{P_r^j - B_r^j}_H$. We compute the smallest $f$ such that $X_f \le k$ in the following way. For each $r$, we use $\MI(B_r^{j^\ast (r)}, P_r^{j^\ast (r)})$ to restore $B_r^{j^\ast (r)}$. If $j > j^\ast (r)$, the Manhattan distance between $P_r^j$ and $B_r^j$ is at least $k$. Otherwise, we compare $P_r^j$ and $B_r^j$ character by character to compute the Manhattan distance in $\Oh(b)$ time. The claim follows.
\end{proof}

\subsection{Generic (\texorpdfstring{$L_p$}{Lp}) distance for \texorpdfstring{$0 < p < 1$}{0 < p < 1}}

Our goal is to construct a morphism (parametrised by $p$) acting as a randomized embedding of $(L_p)^p$ into the Hamming distance. The intuition behind our approach is as follows. Let $r_0,r_1,\ldots \in [0,1]$ be a sequence of real numbers picked independently and u.a.r. Define a sequence of values 
\[d_i = \begin{cases}\varepsilon^{-1} \cdot (1+\varepsilon)^{pi}&\text{ when } i>0\\ \varepsilon^{-1} \cdot \frac{(1+\varepsilon)^p}{(1+\varepsilon)^p-1}&\text{ when } i=0\end{cases}\]
and for a character $c \in \Sigma$  consider  sequence of characters $c_0, c_1, \ldots$ where $c_i = \lfloor \frac{c}{(1+\varepsilon)^i} + r_i\rfloor$ (similarly, a character $c'$ defines a sequence $c'_0, c'_1, \ldots$).
Now consider two characters $c,c' \in \Sigma$ such that $|c-c'| = (1+\varepsilon)^\ell$ for some integer $\ell$ and a random variable $x = \sum_{i=0}^{\infty} d_i \cdot \norm{c_i - c'_i}_H$. There is 
\begin{align}
\expect{x} &= \sum_{i=0}^{\infty} d_i \cdot \Pr[c_i \not= c'_i] = \sum_{i=0}^{\ell} d_i \cdot 1 + \sum_{i=\ell+1}^{\infty} d_i \cdot \frac{|c-c'|}{(1+\varepsilon)^i} = \nonumber\\ 
 &= \varepsilon^{-1}\frac{(1+\varepsilon)^p}{(1+\varepsilon)^p-1}+\varepsilon^{-1}\sum_{i=1}^{\ell} \big((1+\varepsilon)^p\big)^i + \varepsilon^{-1} |c-c'| \sum_{i=\ell+1}^{\infty} \left((1+\varepsilon)^{p-1}\right)^i = \nonumber\\ 
 &= \varepsilon^{-1}\frac{((1+\varepsilon)^{\ell+1})^{p}}{(1+\varepsilon)^p-1} + |c-c'| \varepsilon^{-1} \frac{((1+\varepsilon)^{\ell+1})^{p-1}}{1 - (1+\varepsilon)^{p-1}} = \nonumber\\ 
 &=|c-c'|^{p} \left( \frac{(1+\varepsilon)^p }{(1+\varepsilon)^p-1} +  \frac{ 1}{(1+\varepsilon)^{1-p} - 1} \right) \varepsilon^{-1} \approx \varepsilon^{-2} |c-c'|^p  \frac{1}{p(1-p)}.\label{eq:expected_lp}
 \end{align}

We thus see that an idealized morphism of the form $\varphi: c \to c_0^{d_0}c_1^{d_1} \ldots$ would have the property that $\norm{U-V}_p^p  \sim \norm{\varphi(U)-\varphi(V)}_H $ on words of length $n$. But there are the following issues: (i) characters are mapped into infinite length words, (ii) number of repetitions of characters ($d_i$) is fractional, (iii) we cannot guarantee that character distance is always of form $(1+\varepsilon)^i$ and (iv) the distance is preserved only in expectation. We show how to overcome these issues to achieve the following result:

\begin{theorem}
\label{th:wordmorphism_p12}
Given $0 < p < 1$ and $\eps >0$ there is a word morphism $\varphi: c \in \Sigma \to c_0^{d_0}c_2^{d_2} \ldots c_{t-1}^{d_{t-1}}$ such that:
\begin{enumerate}
\item $t = \widetilde\bigo(\varepsilon^{-2})$ when $0 < p < 1/2$, $t = \widetilde\bigo(\varepsilon^{-3})$ when $p=1/2$ and $t = \widetilde\bigo( \sigma^{\frac{2p-1}{1-p}} / \eps^{2+3 \cdot \frac{2p-1}{1-p}})$ when $1/2 < p < 1$.
\item values of $t$ and $d_0,\ldots,d_{t-1}$ do not depend on $c$,
\item there exists a constant $\alpha = \alpha(p, \eps)$ such that for any two words $U,V$ of length at most $n$, we have $\norm{U-V}_p^p \eqeps \alpha \cdot \norm{\varphi(U)-\varphi(V)}_H$ with probability at least $9/10$,
\item it is enough for the randomness to be realized by a hash function $r: [t] \to [D]$ from a $4$-independent hash function family for some $D = \textrm{poly}(n \sigma \varepsilon^{-1})$, which can be generated from a $\widetilde\bigo(\log \sigma)$ bits size seed.
\end{enumerate}
\end{theorem}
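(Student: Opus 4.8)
The plan is to turn the idealized morphism $c \mapsto c_0^{d_0} c_1^{d_1} \cdots$ into a finite, integral, rescaled object while controlling all four sources of error noted above. I would proceed in four stages corresponding to those four issues.

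\emph{Stage 1: Truncation.} First I would cap the number of levels at $t' = \Theta(\log_{1+\eps} \sigma) = \widetilde\bigo(\eps^{-1})$, since $c < \sigma$ forces $c_i = \lfloor c/(1+\eps)^i + r_i \rfloor \in \{0,1\}$ for $i$ large enough that $(1+\eps)^i > \sigma$, and beyond that point $\Pr[c_i \ne c'_i] = |c-c'|/(1+\eps)^i$ decays geometrically, so the discarded tail $\sum_{i > t'} d_i \Pr[c_i \ne c'_i]$ is a $(1\pm\eps)$-negligible fraction of $\expect{x}$. One must check the tail bound holds uniformly over all $|c-c'| \le \sigma$; because $d_i \cdot (1+\eps)^{-i} = \eps^{-1}(1+\eps)^{(p-1)i}$ is summable (here $p<1$ is essential), the tail is $\bigo(\eps^{-2}\sigma^{p-1}\cdot |c-c'|)$ which is dominated by the $|c-c'|^p \eps^{-2}/(p(1-p))$ main term. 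This is where the case split on $p$ enters: for $p > 1/2$ the worst-case ratio between the truncated tail and the main term degrades, forcing $t'$ (hence $t$) to grow like $\sigma^{(2p-1)/(1-p)}$ to push the relative tail error below $\eps$.

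\emph{Stage 2: Rounding $d_i$ to integers, and rescaling.} The $d_i$ are fractional, so I would replace $d_i$ by $\tilde d_i = \lceil d_i / \eps \cdot \eta \rceil$ for a common scaling $\eta$ chosen so that all $\tilde d_i$ are integers and the relative rounding error per level is $\bigo(\eps)$ (possible since each $d_i \ge \eps^{-1}$, so rounding to the nearest integer multiple of a suitable unit changes $d_i$ multiplicatively by $1\pm\eps$); summing, the morphism length becomes $t = \sum_i \tilde d_i = \widetilde\bigo(\eps^{-1}) \cdot \max_i d_i = \widetilde\bigo(\eps^{-2})$ for $0<p<1/2$, with the stated blow-ups for $p \ge 1/2$ coming from the larger $t'$ of Stage 1 and an extra $\eps^{-1}$ at $p=1/2$ where $(1+\eps)^p - 1 \approx \eps/2$ but the relevant sums are only borderline convergent. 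The constant $\alpha$ absorbs $\eta$ together with the $\eps^{-2}/(p(1-p))$ factor from \eqref{eq:expected_lp}; crucially $t$ and the $\tilde d_i$ depend only on $p$ and $\eps$, not on $c$, giving item~2.

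\emph{Stage 3: Arbitrary distances, and \eqref{eq:expected_lp} for all $|c-c'|$.} The computation \eqref{eq:expected_lp} assumed $|c-c'| = (1+\eps)^\ell$ exactly. For general $m = |c-c'|$, write $(1+\eps)^\ell \le m < (1+\eps)^{\ell+1}$; the indicator $[c_i \ne c'_i]$ for $i \le \ell$ is still $1$ w.p. close to $1$ and for $i > \ell$ is $m/(1+\eps)^i$ up to the usual floor-perturbation (the random shift $r_i$ makes $\Pr[c_i \ne c'_i] = \min(1, m/(1+\eps)^i)$ exactly, a standard consequence of uniform $r_i$), so $\expect{x}$ interpolates between the values at $(1+\eps)^\ell$ and $(1+\eps)^{\ell+1}$, which differ by a factor $(1+\eps)^p = 1 + \bigo(\eps)$. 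Hence $\expect{x} = \alpha^{-1} m^p (1 \pm \bigo(\eps))$ for every $m$, and summing over the $n$ coordinate pairs, $\expect{\norm{\varphi(U)-\varphi(V)}_H} = \alpha^{-1}\norm{U-V}_p^p (1\pm\bigo(\eps))$.

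\emph{Stage 4: Concentration and derandomization.} Finally, item~3 asks for a $(1\pm\eps)$ guarantee with probability $9/10$, not just in expectation. I would bound the variance: $\norm{\varphi(U)-\varphi(V)}_H = \sum_{i,j} \tilde d_i \cdot [\text{mismatch at coordinate } j, \text{level } i]$, and within a fixed level $i$ the indicators across coordinates $j$ are functions of independent $r_i$-shifts (one shift per coordinate, or per \emph{position}—I would use per-position shifts $r_i(j)$), so they are independent across $j$; across levels the weights $\tilde d_i$ are geometrically spread, so the total variance is dominated by the top few levels and is $\bigo(\eps)\cdot(\expect{\cdot})^2$ in the regime where $\norm{U-V}_p^p = \Omega(\eps^{-2})$—which is exactly the regime Algorithm~\ref{alg:ham}/the subsampling framework handles (smaller moments are read off exactly). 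Chebyshev then gives constant success probability, boostable to $9/10$. For item~4, note everything only needs: (a) per-position, per-level shifts $r_i(j)$ that are uniform enough to make $\Pr[c_i \ne c'_i] = \min(1, m/(1+\eps)^i)$ hold up to $1/\poly(n\sigma)$ error (so a grid of $D = \poly(n\sigma\eps^{-1})$ values suffices, and rounding $r_i(j)$ to this grid perturbs probabilities negligibly), and (b) $4$-wise independence for the Chebyshev step (only second moments of sums of indicators are used, and those need pairwise independence of the indicators, which follows from $4$-wise independence of the underlying hash after the floor operations). A single $4$-independent hash $r : [t] \to [D]$ indexed by the flattened (position, level) pair, with a seed of $\widetilde\bigo(\log(n\sigma))= \widetilde\bigo(\log\sigma)$ bits, realizes both.

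\emph{Main obstacle.} The delicate point is the interplay in Stages 1--2 between the truncation length $t'$, the rounding unit, and the target accuracy: getting the relative error of the truncated-and-rounded $\expect{x}$ below $\eps$ \emph{uniformly in} $|c-c'|$ forces the $p$-dependent blow-up in $t$, and pinning down the exact exponent $\frac{2p-1}{1-p}$ (and the extra $\eps^{-1}$ at $p=1/2$) requires carefully tracking the worst-case ratio $\big(\sum_{i>t'} d_i(1+\eps)^{-i} m\big)\big/\big(m^p \eps^{-2}/(p(1-p))\big)$ as $m$ ranges up to $\sigma$ and balancing it against $t' = \Theta(\log_{1+\eps}\sigma)$. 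The concentration argument of Stage 4 is comparatively routine once the right per-position randomness is in place.
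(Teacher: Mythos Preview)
Your Stages~2 and~3 (integer rounding of the $d_i$, handling general $|c-c'|$ by monotonicity/interpolation) are essentially what the paper does. Stages~1 and~4, however, miss the actual mechanism, and this is a genuine gap.

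\textbf{Concentration (Stage~4).} You obtain concentration by making the shifts per-position, $r_i(j)$, so that the indicators become independent across the $n$ coordinates of $U,V$. But then $\varphi$ is no longer a word morphism: the image of a character depends on where it sits in the word, contradicting the theorem statement and item~4 (whose hash has domain $[t]$, not $[nt]$ or $[n]\times[t']$). The paper keeps the shifts position-\emph{independent}---so $\varphi$ really is a morphism, and the contributions from different positions share the same randomness---and gets concentration by a different device: \emph{repetition}. It concatenates $s$ independent copies $\varphi(c)=\varphi_1(c)\cdots\varphi_s(c)$ with fresh seeds, bounds $\var{\norm{\varphi(c)-\varphi(c')}_H}$ for a single character pair, and chooses $s$ so that this variance is at most $\eps^2$ times the squared mean; Chebyshev then gives per-pair concentration, and the word-level estimate follows (e.g.\ via the triangle inequality for standard deviations applied to the correlated sum over positions). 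The final count of runs is $t=s\cdot q$, where $q$ is the number of levels in one copy.

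\textbf{Where the $p$-dependent blow-up lives (Stage~1 and your ``main obstacle'').} You locate the factor $\sigma^{(2p-1)/(1-p)}$ in the truncation error of the \emph{expectation}. That tail is controlled by $\sum_{i>q}(1+\eps)^{(p-1)i}$, which converges for every $p<1$; a single $q=\Theta\bigl(\tfrac{1}{1-p}\log_{1+\eps}(\sigma\eps^{-3})\bigr)=\widetilde\bigo(\eps^{-1})$ suffices regardless of whether $p\lessgtr 1/2$. The case split actually appears in the \emph{variance} of one copy,
\[
\var{\norm{\varphi_\ell(c)-\varphi_\ell(c')}_H}\ \le\ \sum_i d_i^2\,\Pr[c_i\ne c'_i]\ \propto\ \sum_i (1+\eps)^{(2p-1)i}\cdot[\ldots].
\]
For $p<1/2$ this sum converges and yields $s=\widetilde\bigo(\eps^{-1})$; for $p>1/2$ it is dominated by the top level $((1+\eps)^q)^{2p-1}\approx(\sigma\eps^{-3})^{(2p-1)/(1-p)}$, which forces $s=\widetilde\bigo\bigl(\sigma^{(2p-1)/(1-p)}/\eps^{1+3\cdot(2p-1)/(1-p)}\bigr)$. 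The case $p=1/2$ is handled by perturbing to $p'=\tfrac12-\log_\sigma(1+\eps)$, picking up the extra $1/(1-2p')=\widetilde\bigo(\eps^{-1})$. So the exponent you are trying to extract in your ``main obstacle'' paragraph comes from the second-moment calculation feeding into $s$, not from the first-moment truncation feeding into $q$.
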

\begin{proof}
We will consider three cases: $0 < p < 1/2$, $p = 1/2$, and $1/2 < p < 1$.

\textbf{Case $0 < p < 1/2$.} Our plan is to build upon the scheme highlighted earlier in this section. Specifically, we preserve the values of $c_i$.

Consider a pair of characters  $c, c'$. First, note that $\expect{x}$ is an increasing function of $|c-c'|$. From this and Equation~\ref{eq:expected_lp} we obtain that $\expect{x} \eqeps |c-c'|^{p} \left( \frac{(1+\varepsilon)^p }{(1+\varepsilon)^p-1} +  \frac{ 1}{(1+\varepsilon)^{1-p} - 1} \right) \varepsilon^{-1}$ for all values of $|c-c'|$.

Second, fix $q = \ceil{\frac{1}{1-p}\log_{1+\varepsilon}(\sigma \varepsilon^{-3})}$ and observe that  truncating the sum after the $(q-1)$-th term introduces an additional factor $1 \pm \Theta(\eps)$ to the approximation, since for $c \not= c'$ we have
\[\sum_{i \ge q} d_i \cdot \frac{|c-c'|}{(1+\varepsilon)^i} = \varepsilon^{-1} |c-c'| \frac{((1+\eps)^q)^{p-1}}{1-(1+\eps)^{p-1}} \le \frac{\eps^{-1} \sigma}{(1-(1+\eps)^{p-1}) \sigma \eps^{-3}} = \Theta(\eps).\]
We also round $d_i$ down to the nearest integer, which introduces an additional $1\pm \Theta(\varepsilon)$ relative error, since $\forall_i d_i \ge \varepsilon^{-1}$. Finally, we set $\varphi(c) = c_0^{d_0} \ldots c_{q-1}^{d_{q-1}}$. We then have $\expect{\norm{\varphi(c)-\varphi(c')}_H} = \Theta(\varepsilon^{-2}  |c-c'|^p \frac{1}{p(1-p)}).$

To guarantee that the equality holds with probability at least $9/10$ and not just in expectation, we repeat the scheme several times, with independent random seeds. That is, consider morphisms $\varphi_1(c), \varphi_2(c), \ldots, \varphi_s(c)$ and define a morphism $\varphi(c) = \varphi_1(c) \varphi_2(c) \ldots \varphi_s(c)$ with property: 

\[\expect{\norm{\varphi(c)-\varphi(c')}_H} = s \cdot \expect{\norm{\varphi_i(c)-\varphi_i(c')}_H} = s \cdot  \Theta(\varepsilon^{-2}  |c-c'|^p \frac{1}{p(1-p)}).\]

Assume w.l.o.g. that $(1+\varepsilon)^{\ell-1} < |c-c'| \le (1+\varepsilon)^{\ell}$. We proceed to bound
\begin{align*}
\var{\norm{\varphi(c)-\varphi(c')}_H} &\le s \cdot \sum_{i=\ell+1}^{q} (d_i)^2 \cdot \Pr[c_i \not= c'_i] \le \\
&\le s \cdot \sum_{i=\ell+1}^{q} \varepsilon^{-2} ((1+\varepsilon)^{2p})^i \frac{|c-c'| }{(1+\varepsilon)^i} \le \\
&\le s \cdot \varepsilon^{-2} |c-c'| \sum_{i=\ell+1}^{\infty} ((1+\varepsilon)^{2p-1})^{i} \le \\
&\le s \cdot \varepsilon^{-2} |c-c'|^{2p} \frac{(1+\varepsilon)^{2p-1}}{1 - (1+\varepsilon)^{2p-1}} \le \\
&=  s \cdot \bigo(|c-c'|^{2p}  \varepsilon^{-3} \frac{1}{1-2p}).
\end{align*}
We set $s = \Theta(\frac{|c-c'|^{2p}  \varepsilon^{-3} (p(p-1))^2}{\varepsilon^{2}(|c-c'|^p \varepsilon^{-2})^2(1-2p)} ) = \bigo(\varepsilon^{-1} \frac{1}{1-2p})$ for the claim to hold via Chebyshev's inequality.
The error probability coming from Chebyshev's inequality can be made arbitrarily small constant by fixing the constant factor in $s$ to be large enough. We finally set $t = sq$.

\textbf{Case $p=1/2$.} Note that for $p,p'$ such that $|p-p'| \le \log_{\sigma} (1+\eps)$ we have $|x|^p \eqeps |x|^{p'}$ for all $-\sigma \le x \le \sigma$. We can therefore reduce this case to $p = 1/2 - \log_{\sigma} (1+\eps)$. However, we have to take into account that the asymptotic growth of $t$ hides $1/(1-2p)$ dependency on $p$ for $0 < p < 1/2$, hence $t =  \widetilde\bigo(\varepsilon^{-3})$ for $p = 1/2$. 

\textbf{Case $1/2 < p < 1$.} The proof follows the steps of the case $0 < p < 1/2$. We first bound the variance:
\begin{align*}
\var{\norm{\varphi(c)-\varphi(c')}_H} &\le s \cdot \sum_{i=\ell+1}^{q} (d_i)^2 \cdot \Pr[c_i \not= c'_i] =\\
&=s\cdot \varepsilon^{-2} |c-c'| \sum_{i=\ell+1}^{q} ((1+\varepsilon)^{2p-1})^{i} =\\
&= s\cdot\bigo(\varepsilon^{-3} |c-c'| ((1+\varepsilon)^q)^{2p-1}) = \\
&= s\cdot \bigo(\varepsilon^{-3} |c-c'| \sigma^{\frac{2p-1}{1-p}} / \eps^{3 \cdot \frac{2p-1}{1-p}} ).
\end{align*}
We set $s = \Theta\left( \frac{\varepsilon^{-3} |c-c'| \sigma^{\frac{2p-1}{1-p}} / \eps^{3 \cdot \frac{2p-1}{1-p}}  }{\varepsilon^{-2} |c-c'|^{2p}}\right) = \bigo( \sigma^{\frac{2p-1}{1-p}} / \eps^{1+3 \cdot \frac{2p-1}{1-p}})$, so that 
by Chebyshev's inequality, the probability of obtaining $\norm{U-V}_p^p \eqeps \alpha \cdot \norm{\varphi(U)-\varphi(V)}_H  $ is an arbitrarily small constant (by setting $s$ to be large enough).

\textbf{Randomness.}
The only source of randomness in the description are the values $r_i \in [0,1]$ picked u.a.r. and independently. We note that the values $r_i$ can be picked instead as a finite precision floating-point numbers. Since all the values we are working with are bounded by $\textrm{poly}(n\sigma\varepsilon^{-1})$, it is enough to set precision accordingly.
We also  observe that our concentration argument involves only Chebyshev's inequality and thus only the variance and the expected value, so it suffices to require that $r_i$ are $4$-wise independent.
\end{proof}

We now describe how to use the morphism $\varphi$ to approximate the $L_p$ distances in a small space. To design an efficient algorithm, we take advantage of the fact that $\varphi(U)$ has a compressed representation of size comparable with the length of $U$ (at least when $p \le 1/2$). 

\begin{definition}[$L_p$ scaling]
Consider a word $S = s_1^{e_1} s_2^{e_2} \ldots s_{m}^{e_{m}}$ of length $m' = \sum_i e_i$.
Let $h : [m] \to 2^q$ be a function drawn at random from a $4$-wise independent family, where $q = \ceil{ 3 \log m'}$. For $r = 0, \ldots, q$, we define the $r$-th level subsample of $S$, 
\[\rSub_r(S) = (s_1)^{\left\lfloor \frac{e_1 + (h(1) \bmod 2^r)}{2^r}\right\rfloor} (s_2)^{\left\lfloor \frac{e_2 + (h(2) \bmod 2^r)}{2^r}\right\rfloor} \ldots (s_m)^{\left\lfloor \frac{e_m + (h(m) \bmod 2^r)}{2^r}\right\rfloor}\] 
In particular, $\rSub_0(U) = U$.
\end{definition}

Consider two words $S, Q$ of form $S = s_1^{e_1} \ldots s_m^{e_m}$ and $Q = q_1^{e_1} \ldots q_m^{e_n}$.
Fix an integer $k = \Theta(1/\eps^2)$ large enough and consider $\rSub_r(S), \rSub_r(Q)$ for all $r = 0, 1, \ldots, \ceil{3 \log m'}$, where $m' = \sum_i e_i$.
\begin{algo}\label{alg:lppref}\ 
\begin{enumerate}
\item Denote $X_r = \norm{\rSub_r(S) - \rSub_r(Q)}_H$ and let $f = \min\{ i : X_i \le k\}$.
\item Output $Z_f = 2^f \cdot X_f$ as an estimate of $\norm{S-Q}_H$.
\end{enumerate}
\end{algo}

\begin{lemma}
For $Z_f$ as in Algorihtm~\ref{alg:lppref} there is $Z_f \eqeps \norm{S-Q}_H$ with probability $\ge 3/4$.
\end{lemma}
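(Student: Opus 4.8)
The plan is to reduce this lemma to the already-established Hamming subsampling result, Lemma~\ref{lem:hamming_subsampling}, by observing that the run-length structure of $S$ and $Q$ is irrelevant to the statement: what matters is only the multiset of positions at which $S$ and $Q$ differ, together with the bookkeeping that the subsampling hash $h$ is applied at the level of the $m$ \emph{runs} rather than the $m'$ individual characters. So first I would set up the correspondence between the run-indexed subsampling here and position-indexed subsampling: expanding $\rSub_r(S)$ and $\rSub_r(Q)$ character by character, a run $s_j^{e_j}$ versus $q_j^{e_j}$ contributes, to the Hamming distance $X_r = \norm{\rSub_r(S)-\rSub_r(Q)}_H$, exactly $\lfloor (e_j + (h(j)\bmod 2^r))/2^r\rfloor$ mismatches if $s_j \neq q_j$, and $0$ if $s_j = q_j$ (since identical characters repeated the same number of times always match). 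Hence $X_r = \sum_{j : s_j \neq q_j} \lfloor (e_j + (h(j)\bmod 2^r))/2^r\rfloor$, whereas $m := \norm{S-Q}_H = \sum_{j : s_j\neq q_j} e_j$.

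Next I would carry out the moment computations exactly as in Lemma~\ref{lem:hamming_subsampling}, but now over runs. For each run $j$ with $s_j\neq q_j$, the quantity $\lfloor (e_j + (h(j)\bmod 2^r))/2^r\rfloor$ has expectation $e_j/2^r$ (averaging a shift uniform over $\{0,\dots,2^r-1\}$ of an integer, and $h(j)\bmod 2^r$ is uniform on that set because $h$ ranges over $2^q$ with $q\ge r$), and its variance is at most its expectation since it is $\lfloor e_j/2^r\rfloor$ plus a $0/1$ random variable whose expectation is the fractional part $\{e_j/2^r\}$ — this is the same trick used in Lemma~\ref{lem:manhattan_scaling}. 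By $4$-wise independence of $h$ (pairwise suffices here), $\var{X_r} = \sum_j \var{\lfloor\cdot\rfloor} \le \sum_j \expect{\lfloor\cdot\rfloor} = \expect{X_r} = m/2^r$. So we arrive at precisely $\expect{X_r} = m/2^r$ and $\var{X_r}\le m/2^r$, which are the only two facts the proof of Lemma~\ref{lem:hamming_subsampling} uses.

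Therefore the cleanest way to finish is to invoke Lemma~\ref{lem:hamming_subsampling} (or Lemma~\ref{lem:manhattan_scaling}) essentially as a black box: with $\expect{Z_r}=m$, the identical variance bound, and the identical definition of $f = \min\{i : X_i\le k\}$ and $Z_f = 2^f X_f$, the Chebyshev argument goes through verbatim — set $c = \lceil\log_2(m/k)\rceil$, use Chebyshev to bound $\Pr[|Z_r-m|\ge 4\sqrt{m2^{c+1}}]$, bound $\Pr[f>c+1]=\Pr[X_{c+1}>k]$ using $m/2^c\le k$ and $k\ge 32$, and union-bound over $r=0,\dots,c+1$ to conclude $\Pr[|Z_f-m|\ge 4\sqrt{2/k}\cdot m]<1/4$; choosing $k=\Theta(1/\eps^2)$ large enough gives $Z_f \eqeps \norm{S-Q}_H$ with probability $\ge 3/4$. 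I do not expect any genuine obstacle here — the only thing that needs care is the first paragraph, namely verifying that the Hamming distance of the expanded words really does decompose as a sum over the differing runs of the rounded, shifted run-lengths (in particular that equal runs never contribute), so that the problem is literally an instance of integer-valued subsampling with the right first and second moments; after that the argument is a transcription of the proof of Lemma~\ref{lem:hamming_subsampling}.
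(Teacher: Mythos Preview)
Your proposal is correct and follows essentially the same approach as the paper: decompose $X_r$ run-by-run, observe that each run contributes $e'_j \cdot \norm{s_j-q_j}_H$ where $e'_j = \lfloor (e_j + (h(j)\bmod 2^r))/2^r\rfloor$ has expectation $e_j/2^r$ and variance at most its expectation (via the ``integer plus Bernoulli'' trick), and then conclude that $\expect{X_r}=m/2^r$ and $\var{X_r}\le \expect{X_r}$ so that the Chebyshev argument of Lemma~\ref{lem:hamming_subsampling} goes through verbatim. The paper's proof is slightly terser but makes exactly the same moves; your explicit remark that equal runs (with the same shared $e'_j$) never contribute is the one point the paper leaves implicit.
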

\begin{proof}
Consider a fixed subsampling level $r$. For simplicity, let $\rSub_r(S) = s_1^{e_1'} s_2^{e_2'} \ldots s_m^{e_m'}$ and $\rSub_r(Q) = q_1^{e_1'} q_2^{e_2'} \ldots q_m^{e_m'}$. Define a random variable $x_i$ to be the contribution of of $s_i^{e'_i}, q_i^{e'_i}$ to the Hamming distance $X_r$, i.e. 
\[x_i = \norm{s_i^{e'_i} - q_i^{e'_i}}_H = e'_i \cdot \norm{s_i - q_i}_H.\]
Since $e'_i \in \{ \lceil e_i/2^r \rceil, \lfloor  e_i/2^r \rfloor \}$ and $\expect{e'_i} = e_i/2^r$, we have $\expect{x_i} = e_i \cdot \norm{s_i - q_i}_H$ and 
\[\var{x_i} = \var{x_i - \lfloor  e_i/2^r \rfloor} \le \expect{x_i - \lfloor  e_i/2^r \rfloor } \le \expect{x_i}.\]
Summing over all values of $i$, we reach $\expect{X_r} = \norm{S-Q}_H$ and $\var{X_r} \le \expect{X_r}$. These bounds are identical to that of Lemma~\ref{lem:hamming_subsampling} and we can proceed in a similar fashion to obtain the claim.
\end{proof}

We are now ready to define $L_p$ prefix encodings. Consider a $b$-length block $B$ of the text and define $B_r^j = \rSub_r(\varphi(B))[(b-j)t+1,bt]$ ($t$ is defined as in Theorem~\ref{th:wordmorphism_p12}). Also, define $P_r^j$ to be the $(tj)$-length prefix of $\varphi(P)$ subsampled in a synchronized way with $B_r^j$.

\begin{definition}
Consider a $b$-length block $B$ of the text $T$. For each $r = 0, \ldots, \ceil{ 3 \log n'}$, where $n' = |\varphi(B)|$, let $j^*(r)$ be the maximal integer such that the Hamming distance between $B_r^{j^*(r)}$ and $P_r^{j^*(r)}$ is at most $k = \Theta(\varepsilon^{-2})$. We define the $L_p$ prefix encoding of $B$ to be a tuple of pairs $j^*(r), \MI(B_r^{j^\ast (r)}, P_r^{j^\ast (r)})$. 
\end{definition}

The $L_p$ prefix encoding of $B$ uses $\bigo(k \log n') = \bigo( \eps^{-2} \log (n \sigma \eps^{-1}))$ space. 

\begin{lemma}\label{lm:rleprefsketchalg}
Assume constant-time random access to $P[1,b]$. Given a $b$-length block~$B$ of the text $T$, its $L_p$ prefix encoding can be computed in $\bigo(b^2 \cdot t \log {n \sigma \eps^{-1}})$ time and $\bigo(b+\eps^{-2}\log {n \sigma \eps^{-1}})$ space.
\end{lemma}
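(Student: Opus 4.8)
The plan is to mirror the proof of Lemma~\ref{lm:hprefsketchalg} (the Hamming prefix encoding), but applied to the compressed words $\rSub_r(\varphi(B))$ and $\rSub_r(\varphi(P))$ rather than to $\hSub_r(B)$ and $P[1,b]$ directly, and to carefully account for the extra factors introduced by the morphism $\varphi$ and by the compression. First I would observe that $\varphi(B)$ and $\varphi(P[1,b])$ both have a run-length-compressed representation with at most $bt$ runs, and that the exponents of these runs are exactly $d_0, d_1, \ldots$ — the same across all characters (by property 2 of Theorem~\ref{th:wordmorphism_p12}). Hence $\rSub_r(\varphi(B))$ and $\rSub_r(\varphi(P[1,b]))$ are also run-length-compressed with $\bigo(bt)$ runs, and a single run-exponent $\lfloor (d_j + (h(\cdot)\bmod 2^r))/2^r\rfloor$ can be computed in $\widetilde\bigo(1)$ time from the $4$-wise independent seed of $\rSub$.

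Next, for each level $r$ (there are $\widetilde\bigo(1)$ of them, since $n' = |\varphi(B)| = \poly(n\sigma\eps^{-1})$), I would compute $j^*(r)$ and the mismatch information $\MI(B_r^{j^*(r)},P_r^{j^*(r)})$. Because the words involved are run-length-compressed with $\bigo(bt)$ runs, a run-by-run comparison of the suffix of $\rSub_r(\varphi(B))$ of length $bt, (b-1)t, \ldots$ against the corresponding prefix of $\rSub_r(\varphi(P))$ can be carried out incrementally: as $j$ decreases from $b$ to $1$, we drop $t$ runs from the text side and the prefix on the pattern side shifts, so maintaining the running Hamming distance and the (at most $k$) mismatch triples costs $\bigo(bt)$ work per step (a mismatch of two runs $s_i^{e_i'}$ vs $q_i^{e_i'}$ contributes $e_i'$ to the Hamming distance and, once recorded compactly, a single triple), and $\bigo(b)$ steps give $\bigo(b^2 t)$ time per level; multiplying by the $\widetilde\bigo(1)=\bigo(\log n')$ levels and the $\bigo(\log(n\sigma\eps^{-1}))$ word-size bookkeeping yields the claimed $\bigo(b^2 t \log(n\sigma\eps^{-1}))$ time. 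For space, we never materialise $\varphi(B)$ or any $\rSub_r(\varphi(B))$ explicitly: we store $B$ itself in $\bigo(b)$ space (with $\bigo(\log\sigma)$-bit characters this is $\widetilde\bigo(b)$), the $\widetilde\bigo(\log\sigma)$-bit seeds for $\varphi$ and for each $\rSub_r$, and the current mismatch information of size $\bigo(k\log n') = \bigo(\eps^{-2}\log(n\sigma\eps^{-1}))$; the run-by-run scan uses only $\bigo(1)$ additional run-pointers at a time, giving the stated $\bigo(b + \eps^{-2}\log(n\sigma\eps^{-1}))$ space.

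The one point that needs slightly more care — and which I expect to be the main obstacle — is the alignment/synchronisation of the subsampling hash on the pattern side. Recall that $P_r^j$ is defined as the $(tj)$-length prefix of $\varphi(P)$ subsampled \emph{in a synchronised way with} $B_r^j$, i.e. the hash index used on the $i$-th run of the $\varphi(P)$-prefix must match the hash index used on the corresponding run of $\rSub_r(\varphi(B))[(b-j)t+1,bt]$; as $j$ varies this correspondence shifts by $t$ each time. I would handle this exactly as in the Manhattan case (Definition~\ref{def:prefix_encoding_l1} and Lemma~\ref{lm:mprefsketchalg}): the hash $h$ of $\rSub$ is indexed by absolute run-position in $\varphi(B)$, so for a fixed $j$ the $i$-th run of $P_r^j$ uses hash value $h((b-j)t+i)$, and this can be evaluated on the fly in $\widetilde\bigo(1)$ time from the $4$-wise independent seed. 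Thus each of the $\bigo(b)$ values of $j$ costs one fresh run-by-run pass over $\bigo(bt)$ runs, which does not change the asymptotics above. The correctness of the decoding (that $j^*(r)$ together with $\MI$ really suffices) follows verbatim from the Hamming argument in Lemma~\ref{lm:hpref_decoding} together with the embedding guarantee of Theorem~\ref{th:wordmorphism_p12}, so no new ideas are required there.
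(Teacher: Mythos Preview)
Your proposal is correct and follows essentially the same approach as the paper: for each of the $\bigo(\log n')$ levels $r$ and each of the $b$ candidate lengths $j$, compare $B_r^j$ with $P_r^j$ run-by-run over the $\bigo(bt)$ runs to obtain $j^\ast(r)$ and the mismatch information. The paper's own proof is a two-line sketch that omits the synchronisation and space-accounting details you spell out, so your write-up is in fact more careful than the original; the only harmless over-elaboration is the abandoned ``incremental'' idea in your second paragraph, which you rightly replace by a fresh pass per $j$ once you notice the hash alignment shifts with $j$.
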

\begin{proof}
For each $r = 0, \ldots, \ceil{ 3 \log n'}$ and $j = 1,\ldots, b$, we compute the Hamming distance between $B_r^j$ and $P_r^j$ in $\Oh(bt)$ time using the compressed representation to find $j^\ast(r)$ and the corresponding mismatch information. The claim follows. 
\end{proof}

\begin{lemma}\label{lm:rlepref_decoding}
Given the $L_p$ prefix encoding of a $b$-length block $B$ of the text $T$, there is an algorithm that computes, for all $j = 1, \ldots, b$, a $(1\pm\eps)$-approximation of the $L_p$ distance between the $j$-length suffix of $B$ and the $j$-length prefix of $P$ in $\widetilde\Oh(b^{2} \cdot t \log {n \sigma \eps^{-1}})$ time and $\Oh(b+\eps^{-2}\log {n \sigma \eps^{-1}})$ space.
\end{lemma}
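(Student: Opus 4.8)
The plan is to mirror the decoding strategy from Lemma~\ref{lm:hpref_decoding}, now working over the compressed $L_p$-morphed words rather than over raw text. Fix a query length $j \in \{1,\dots,b\}$. By definition, $B_r^j = \rSub_r(\varphi(B))[(b-j)t+1, bt]$ and $P_r^j$ is the synchronized $(tj)$-length prefix subsample of $\varphi(P)$; both are words of length $tj$ over $\Sigma \cup \{?\}$ that admit a run-length representation with $\Oh(j)$ runs, since $\varphi$ maps each character to $t$ runs and $\rSub_r$ preserves the run structure (only the exponents are rescaled). The goal is to compute, for each $r$, the Hamming distance $X_r = \norm{P_r^j - B_r^j}_H$, find the smallest index $f$ with $X_f \le k$, and output $2^f \cdot X_f$, which by Algorithm~\ref{alg:lppref} and its correctness lemma is a $(1\pm\eps)$-approximation of $\norm{\varphi(B)[(b-j)t+1,bt] - \varphi(P)[1,tj]}_H$, hence by Theorem~\ref{th:wordmorphism_p12}(3) a $(1\pm\Oh(\eps))$-approximation of $\norm{B[b-j+1,b] - P[1,j]}_p^p$ up to the known scaling constant $\alpha$, which we then convert to an $L_p$-distance approximation via Observation~\ref{obs:norm_moment}.

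The mechanism for recovering $X_r$ from the stored encoding is identical in spirit to the Hamming case. For each level $r$ we hold the pair $j^*(r), \MI(B_r^{j^*(r)}, P_r^{j^*(r)})$. If $j \le j^*(r)$, we first reconstruct $B_r^{j^*(r)}$ from $P_r^{j^*(r)}$ (computable explicitly from $P[1,b]$ and the hash seed, using the compressed form) together with the mismatch information, then restrict to the suffix of length $tj$ to obtain $B_r^j$ and likewise $P_r^j$; comparing these two run-length-encoded words run by run yields $X_r$ exactly in $\widetilde\Oh(j) = \widetilde\Oh(b)$ time. If $j > j^*(r)$, then by maximality of $j^*(r)$ the Hamming distance between $B_r^j$ and $P_r^j$ exceeds $k$ (the subsample distances are monotone nondecreasing in the length considered, since extending the window only adds nonnegative contributions), so this level cannot be the minimizer $f$ and we simply skip it. Ranging $r$ over all $\Oh(\log n')$ levels, finding $f$, and reading off $2^f X_f$ completes the computation; multiplying by the precomputed $\alpha$ and taking the $1/p$-th root (Observation~\ref{obs:norm_moment}) gives the reported $L_p$-distance estimate. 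Doing this for every $j = 1,\dots,b$ multiplies the cost by $b$, giving the claimed $\widetilde\Oh(b^2 \cdot t \log(n\sigma\eps^{-1}))$ time and $\Oh(b + \eps^{-2}\log(n\sigma\eps^{-1}))$ working space (the encoding itself being of the latter size, plus $\Oh(b)$ scratch per query).

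The one point that needs care — and which I expect to be the main obstacle — is the monotonicity claim that justifies discarding levels with $j > j^*(r)$. In the plain Hamming setting this is immediate because the suffix of $B$ and prefix of $P$ of length $j$ are literally substrings of those of length $j^*(r)$ when $j \le j^*(r)$, but here the windows are taken inside $\rSub_r(\varphi(B))$ at offset $(b-j)t+1$, which \emph{shifts} as $j$ varies, rather than being nested. I would resolve this by noting that the morphism $\varphi$ and the synchronized subsampling are defined character-by-character on $B$ and $P$ respectively with the hash indexed so that position $b-j+i$ of $P$ always aligns with the same block of $\varphi(B)$; consequently the length-$tj$ window of $\rSub_r(\varphi(B))$ ending at $bt$ is exactly $\rSub_r(\varphi(B[b-j+1,b]))$ and the comparison decomposes as a sum of $j$ independent per-character run contributions, one per aligned character pair. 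Increasing $j$ only prepends further nonnegative terms to this sum, which gives monotonicity and hence the skip rule. The remaining estimates are the routine bookkeeping of run-length arithmetic already used in Lemma~\ref{lm:rleprefsketchalg}.
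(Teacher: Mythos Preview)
Your approach is essentially the paper's: reconstruct $B_r^{j^\ast(r)}$ from $P_r^{j^\ast(r)}$ and the stored mismatch information, take the appropriate suffix to get $B_r^{j}$ for any $j\le j^\ast(r)$, recompute $P_r^{j}$ from $P[1,b]$ and the hash seed, compare run by run, and skip levels with $j>j^\ast(r)$. The paper states this in one short paragraph.

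There is, however, a genuine confusion in your final paragraph. The skip rule for $j>j^\ast(r)$ needs \emph{no} monotonicity whatsoever: $j^\ast(r)$ is by definition the \emph{maximal} $j$ with $X_r\le k$, so $X_r>k$ for every larger $j$ is immediate. Your attempted monotonicity argument is actually false. As $j$ grows by one, the aligned character pairs change at \emph{every} position: at alignment $j$ you compare $B[b-j+i]$ with $P[i]$, while at alignment $j{+}1$ you compare $B[b-j-1+i]$ with $P[i]$; the entire alignment shifts, so the sum does not acquire a single prepended term but is recomputed from scratch. (In particular, $X_r$ is not monotone in $j$ even in the plain Hamming setting of Lemma~\ref{lm:hpref_decoding}; the paper relies on maximality there too.) Fortunately this does not damage the proof, since you also invoke maximality explicitly.

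Two smaller points. First, $\varphi$ maps each character to $t$ runs, so $B_r^{j}$ and $P_r^{j}$ carry $\Theta(jt)$ runs, not $\Theta(j)$; the per-comparison cost is $\Oh(bt)$ rather than $\widetilde\Oh(b)$, which is exactly what feeds the $t$ factor into your (correct) final bound. Second, $\rSub_r$ rescales run exponents but introduces no don't-care symbols, so the alphabet is not $\Sigma\cup\{?\}$.
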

\begin{proof}
Denote $X_r = \norm{P_r^j - B_r^j}_H$. We compute the smallest $f$ such that $X_f \le k$ in the following way. For each $r$, we use $\MI(B_r^{j^\ast (r)}, P_r^{j^\ast (r)})$ to restore $B_r^{j^\ast (r)}$. If $j > j^\ast (r)$, the Hamming distance between $P_r^j$ and $B_r^j$ is at least $k$. Otherwise, we compare $P_r^j$ and $B_r^j$ to compute the Hamming distance in $\Oh(b t )$ time. The claim follows.
\end{proof}

\subsection{Generic (\texorpdfstring{$L_p$}{Lp}) distance for \texorpdfstring{$1 < p \le 2$}{1 < p <= 2}.}\label{sec:prefix_lp_l2}
For $1 < p \le 2$, we use a scheme similar to the one developped in~\cite{HDstream} for the Hamming distance, but adapt it to generic $L_p$ distances. Particularly, we plug in a standard tool used in this situation, the $p$-stable distribution. We additionally have to adapt the scheme a bit, taking into account that $L_p$ norm is sub-additive under concatenation when $p>1$.

\begin{definition}[$p$-stable distribution~\cite{zolotarev1986one}]
\label{def:pstable}
For a parameter $p>0$, we say that a distribution~$\mathcal{D}$ is \emph{$p$-stable} if for all $a,b \in \mathbb{R}$ and random variables $X, Y$ drawn independently from $\mathcal{D}$, the variable $a X + b Y$ is distributed as $\left(|a|^p + |b|^p \right)^{1/p} Z$, where $Z$ is a random variable with distribution $\mathcal{D}$. 
\end{definition}

Consider a word $X = x_1 x_2 \ldots x_n$, and let $\alpha_1, \alpha_2, \ldots,\alpha_n$ be independent random variables drawn from a $p$-stable distribution $\mathcal{D}$ with expected value $\mu_{\mathcal{D}}$. By Definition~\ref{def:pstable}, we have $\expect{\sum_i \alpha_i x_i} / \mu_{\mathcal{D}} = \norm{X}_p$. The $p$-stable distributions exist for all $0< p \le 2$, and a random variable $X$ from a $p$-stable distribution can be generated using the formula $X = \frac{\sin(p \Theta)}{\cos^{1/p} (\Theta)} \left( \frac{\cos(\Theta(1-p))}{\ln (1/r)}\right)^{(1-p)/p}$~\cite{CMS76,zolotarev1986one}, where $\Theta$ is uniform on $[-\pi/2, \pi/2]$ and~$r$ is uniform on $[0,1]$.

However, to be able to design an efficient sketching scheme that allows to approximate the $L_p$ norm with high probability, there are three technicalities  to be overcome: First, one must show that $\sum_i \alpha_i x_i$ concentrates well, second, the formula above assumes infinite precision of computation, and finally, one cannot use fully independent random variables $\alpha_i$ as above as this would require much space. To overcome these issues, Indyk~\cite{DBLP:journals/jacm/Indyk06} combined $p$-stable distributions and pseudorandom generators for bounded space computation~\cite{DBLP:conf/stoc/Nisan90}. We restate the final result of Indyk below, in the form that will be convenient for us later.

\begin{theorem}[cf. Theorem 2, Theorem 4~\cite{DBLP:journals/jacm/Indyk06}]
\label{th:psketch}
For any $0 < p \le 2$, there is a non-uniform streaming algorithm that maintains a sketch $\pSketch(S)$ of a word $S$ of length $n$ over an alphabet of size~$\sigma$ such that:
\begin{enumerate}
\item  when a new character of $S$ arrives, the sketch can be updated in $\Oh(\eps^{-2} \log (n/\eps))$ time;
\item the algorithm and the sketch use $\Oh(\eps^{-2} \log(\sigma n/\eps)\log(n/\eps))$ bits of space.
\end{enumerate}
Given the sketches $\pSketch(X), \pSketch(Y)$ of two words $X, Y$ of length $n$, one can estimate $\norm{X-Y}_p$ up to a factor $1\pm\eps$ with probability at least $9/10$ in time $\widetilde{\Oh}(1/\eps^2)$. 
\end{theorem}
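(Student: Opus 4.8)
The plan is to invoke Indyk's results \cite{DBLP:journals/jacm/Indyk06} as a black box and merely re-derive the stated bounds, being careful about what "non-uniform" buys us. First I would recall the standard $p$-stable linear sketch: fix $D = \Theta(\eps^{-2})$ rows, and let $M \in \mathbb{R}^{D \times n}$ be a matrix whose entries are drawn (pseudorandomly) from a $p$-stable distribution $\mathcal{D}$; the sketch of a word $S$ viewed as a vector is $\pSketch(S) = MS$. By the stability property (Definition~\ref{def:pstable}), each coordinate $(MS)_j = \sum_i M_{ji} S_i$ is distributed as $\norm{S}_p \cdot Z_j$ with $Z_j \sim \mathcal{D}$, and by linearity $\pSketch(X) - \pSketch(Y) = M(X-Y)$, so the same holds with $\norm{X-Y}_p$. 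The estimator is then the median (or an appropriate interpolation/harmonic-mean estimator, as in \cite{DBLP:journals/jacm/Indyk06}) of the $D$ values $|(M(X-Y))_j|$, rescaled by the median of $|\mathcal{D}|$; with $D = \Theta(\eps^{-2})$ this is a $(1\pm\eps)$-approximation with constant probability, which I would boost to $9/10$ by a further constant-factor blow-up (folded into the $\Oh(\cdot)$). The estimation time is $\Oh(D) = \widetilde{\Oh}(\eps^{-2})$ arithmetic operations on $\Oh(\log(n/\eps))$-bit numbers, giving item~3.

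For the update time and space, the key point is that $M$ cannot be stored explicitly (that would be $\Theta(Dn)$ space) and the entries cannot be fully independent. Here I would cite Indyk's use of Nisan's pseudorandom generator \cite{DBLP:conf/stoc/Nisan90} for small-space computation: one fixes a seed of $\Oh(\log(\sigma n /\eps)\log(n/\eps))$ bits, and each entry $M_{ji}$ is obtained by expanding the seed and applying the closed-form transform $X = \frac{\sin(p\Theta)}{\cos^{1/p}(\Theta)}\bigl(\frac{\cos(\Theta(1-p))}{\ln(1/r)}\bigr)^{(1-p)/p}$ to two uniform samples $\Theta, r$ extracted from the PRG output. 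Crucially, the analysis of the estimator is a bounded-space streaming computation (maintain $D$ running sums, each bounded in magnitude), so Nisan's generator fools it; this is exactly the content of Theorems~2 and~4 of \cite{DBLP:journals/jacm/Indyk06}. Finite precision is handled by truncating $\Theta$ and $r$ (and hence each $M_{ji}$ and each partial sum) to $\Oh(\log(n/\eps))$ bits, which perturbs the estimate by at most a $1\pm\eps$ factor since everything in sight is $\poly(n\sigma/\eps)$-bounded and bounded away from the singularities of the transform — this is where one has to be slightly careful, but it is routine. Per arrival of a character $S_i$, we generate column $i$ of $M$ on the fly (the $i$-th block of pseudorandom bits and the $D$ transforms), add $M_{ji}S_i$ to the $j$-th running sum for each $j$, giving $\Oh(D \cdot \log(n/\eps)) = \Oh(\eps^{-2}\log(n/\eps))$ time; the total space is the seed plus the $D$ running sums plus bookkeeping, i.e. $\Oh(\eps^{-2}\log(\sigma n/\eps)\log(n/\eps))$ bits, matching items~1 and~2.

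The one genuine obstacle — and the reason the theorem is quoted rather than re-proved — is the PRG analysis: showing that the median-of-$D$ estimator, implemented as a streaming pass, is fooled by Nisan's generator requires casting the estimator as a width-$\poly(\eps^{-1})$, length-$n$ bounded-space branching program (or equivalently a bounded-space tester), which forces the seed length to be $\Oh(\log(\text{space})\log(\text{length}))$ and dictates the precise logarithmic factors. Since this is precisely Theorem~4 of \cite{DBLP:journals/jacm/Indyk06}, I would not reprove it; the contribution here is only to repackage the statement in a form where $\pSketch$ is an explicitly linear, mergeable object (so that the suffix-sketch machinery of later sections can add and subtract sketches of text windows), which follows immediately from the linearity of $S \mapsto MS$ noted above. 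Everything else — concentration of the $p$-stable estimator, the closed-form sampling, finite precision — is standard and would be dispatched in a sentence each.
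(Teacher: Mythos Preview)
Your proposal is correct, but note that the paper does not actually prove this theorem at all: it is stated as a direct restatement of Indyk's results (the preceding text says ``We restate the final result of Indyk below, in the form that will be convenient for us later''), with no accompanying proof block. Your outline --- $p$-stable linear sketch with $D=\Theta(\eps^{-2})$ rows, median estimator, Nisan's PRG to derandomize the matrix entries, finite-precision truncation --- is exactly the content of the cited Theorems~2 and~4 of~\cite{DBLP:journals/jacm/Indyk06}, so you have effectively supplied the proof sketch that the paper omits; there is no discrepancy in approach to report.
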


We now proceed to building the $L_p$ prefix encoding by using $\pSketch$ and the landmarking technique.

\begin{definition}[$L_p$ prefix encoding]
Let $1 < p \le 2$.
Consider a word $S$ of length $b$ on the alphabet of size~$\sigma$. Define $q_0 = b$. For $k = 0, \ldots, \ceil{\log b \sigma^p}$, let $q_k \le q_{k-1}$ be the leftmost position such that the $p$'th moment of the difference between $S[q_k, b]$ and $P[1,b-q_k+1]$, i.e. $\norm{S[q_k, b]-P[1,b-q_k+1]}_p^p$, is at most~$2^k$. 

Further, divide $S[q_k, b]$ into $\Theta(1/\eps^{ p})$ blocks such that each block is either a single character, or the $p$'th moment of the difference between each block and the corresponding subword of $P[1,b-p_k+1]$ is at most $\eps^{p} \cdot 2^{k}$. Let $q_k = q_k^0 \le q_k^1 \le \ldots q_k^{\ell_k} = b$ be the block borders. We choose $q_k^1, q_k^2, \ldots, q_k^{\ell_k}$ from left to right, and each position $q_k^i$ is chosen to be the rightmost possible. 

The \emph{$L_p$ prefix encoding} of $S$ is defined to contain sorted lists of the positions $q_k$ and $q_k^i$, characters $S[q_k^i]$, and sketches for $(1\pm C_p \cdot\eps / p)$-approximating the $p$'th norm of $S[q_k^j, b]$, for all $k, j$ and $C_p$ as in Observation~\ref{obs:norm_moment}, see also Theorem~\ref{th:psketch}.
\end{definition}

The encoding takes $\widetilde{\Oh}(\eps^{-2- p} \log \sigma \log (\sigma n/\eps) \log(n/\eps))$ bits of space. We now show that given the $L_p$ prefix encoding of a block $B$ of the text of length $b$, one can compute a $(1\pm\eps)$-approximation of the $L_p$ distance between any prefix $P[1,b-j+1]$ of the pattern $P$ and the corresponding suffix $B[j,b]$ of $B$. 

\begin{figure}
\begin{center}
\begin{tikzpicture}[scale=0.3]
	\draw[thick] (0,5) rectangle (42,6);
	\draw[draw=none] (19,5) rectangle (25,6) node[pos=0.5] {\small{$B_1$}};
	\draw[draw=none] (25,5) rectangle (42,6) node[pos=0.5] {\small{$B_2$}};		
	\draw[thick,pattern=north west lines, pattern color=red] (19,1) rectangle (25,2) node[pos=0.5] {\small{$P_1$}};		
	\draw[thick,pattern=north east lines, pattern color=gray] (25,1) rectangle (42,2) node[pos=0.5] {\small{$P_2$}};		
	
	\node[below] at (0.5,4.8) {\small $1$};
	\node[below] at (13,4.6) {\small $q_{k+1}$};
	\node[below] at (17,5.2) {\small $q_k^{i}$};	
	\node[below] at (19.5,4.9) {\small $j$};
	\node[below] at (26.5,5.2) {\small $q_k^{i+1}$};
	\node[below] at (30,4.6) {\small $q_{k}$};
	\node[below] at (41.5,4.8) {\small $b$};
	\node[below] at (41.5,0.8) {\small $b-j+1$};
	\node[below] at (19.5,0.8) {\small $1$};
	
	\draw[dashed] (19,6)--(19,1);
	\draw[dashed] (25,6)--(25,1);
\end{tikzpicture}
\end{center}
\caption{Using the prefix encoding of $B$ to compute the $L_p$ distance between a suffix of $B$ and a prefix of the pattern. To compute the distance between $B_1$ and $P_1$, we replace $B_1$ with a subword of the pattern, and between $B_2$ and $P_2$ we use the sketches.} 
\label{fig:psketch}
\end{figure}
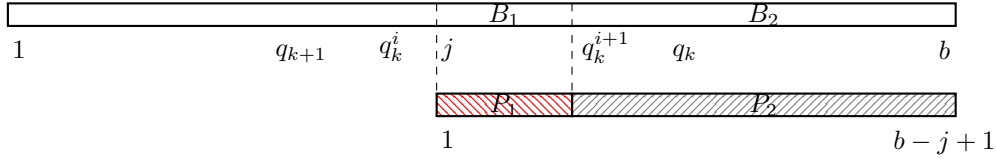
\begin{lemma}
\label{lem:holder}
Let $1 < p \le 2$. For any two vectors $X, Y$ of equal length, $\Big| \norm{X+Y}_p^p -  \norm{X}_p^p \Big| = \bigo(\norm{Y}_p^p +  \norm{Y}_p \cdot \norm{X}_p^{p-1})$. 
\end{lemma}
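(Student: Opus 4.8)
The statement is an elementary inequality about $L_p$ norms for $1 < p \le 2$, so the plan is to reduce it to a pointwise estimate and then sum. First I would write $\norm{X+Y}_p^p - \norm{X}_p^p = \sum_i \left( |x_i + y_i|^p - |x_i|^p \right)$ and focus on bounding each term $\big| |x_i+y_i|^p - |x_i|^p \big|$ by something of the form $\bigo\big( |y_i|^p + |y_i| \cdot |x_i|^{p-1} \big)$. The key analytic fact I would invoke is the mean value theorem applied to $f(t) = |t|^p$: since $1 < p \le 2$, $f$ is differentiable with $f'(t) = p\,|t|^{p-1}\operatorname{sgn}(t)$, so $\big| |x_i+y_i|^p - |x_i|^p \big| \le p\,|y_i| \cdot \max\big( |x_i|^{p-1}, |x_i+y_i|^{p-1} \big) \le p\,|y_i| \cdot \big( |x_i|^{p-1} + |y_i|^{p-1} \big) = p\,|y_i|\,|x_i|^{p-1} + p\,|y_i|^p$, using $|x_i+y_i|^{p-1} \le |x_i|^{p-1} + |y_i|^{p-1}$ which holds because $0 \le p-1 \le 1$ makes $z \mapsto z^{p-1}$ subadditive on $[0,\infty)$.

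Summing the pointwise bound over all coordinates $i$ gives $\big| \norm{X+Y}_p^p - \norm{X}_p^p \big| \le p \sum_i |y_i|^p + p \sum_i |y_i|\,|x_i|^{p-1} = p\,\norm{Y}_p^p + p \sum_i |y_i|\,|x_i|^{p-1}$. For the cross term I would apply Hölder's inequality with exponents $p$ and $p/(p-1)$: $\sum_i |y_i| \cdot |x_i|^{p-1} \le \left( \sum_i |y_i|^p \right)^{1/p} \left( \sum_i |x_i|^{(p-1)\cdot p/(p-1)} \right)^{(p-1)/p} = \norm{Y}_p \cdot \norm{X}_p^{p-1}$. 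Combining, $\big| \norm{X+Y}_p^p - \norm{X}_p^p \big| \le p\,\norm{Y}_p^p + p\,\norm{Y}_p\,\norm{X}_p^{p-1} = \bigo\big( \norm{Y}_p^p + \norm{Y}_p\,\norm{X}_p^{p-1} \big)$, since $p \le 2$ is a constant, which is exactly the claim.

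There is no serious obstacle here; the only minor point to be careful about is the edge behaviour of the derivative of $|t|^p$ at $t = 0$, but since $p > 1$ the derivative exists and is continuous everywhere, so the mean value theorem applies without caveat, and one could equally well use the convexity/smoothness inequality $\big| |a|^p - |b|^p \big| \le p\,|a-b|\,(|a|^{p-1} + |b|^{p-1})$ directly as a standard fact. I would present the argument in the order: pointwise mean-value bound, subadditivity of $z^{p-1}$, summation, then Hölder on the cross term.
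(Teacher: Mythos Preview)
Your proof is correct and follows essentially the same strategy as the paper: establish a pointwise bound $\big||x_i+y_i|^p - |x_i|^p\big| = \bigo(|y_i|^p + |y_i|\,|x_i|^{p-1})$, sum over coordinates, and then apply H\"older with exponents $p$ and $p/(p-1)$ to the cross term $\sum_i |y_i|\,|x_i|^{p-1}$. The only difference is in how the pointwise estimate is obtained: the paper splits into the cases $|x|\ge|y|$ (Taylor expansion of $(1+y/|x|)^p$) and $|x|<|y|$ (trivially $|x+y|^p=\bigo(|y|^p)$), whereas you use the mean value theorem together with the subadditivity of $z\mapsto z^{p-1}$; your route avoids the case split and is slightly cleaner.
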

\begin{proof}
Consider $x, y \in \mathbb{R}$. If $|x| \ge |y|$, then by Taylor expansion, $|x+y|^p = |x|^p (1+y/|x|)^p = |x|^p (1+\bigo(|y/x|)) = |x|^p \pm \bigo(|y| |x|^{p-1})$. If $|x| < |y|$, then $|x+y|^p = \bigo(|y|^p)$. Thus for any real values, we have 
\[|x+y|^p = |x|^p + \bigo(|y|^p + |y| \cdot |x|^{p-1}).\]
Denote $X = [x_1, x_2, \ldots, x_n]^T$ and $Y = [y_1, y_2, \ldots, y_n]^T$. There is
\[\norm{X+Y}_p^p = \sum_i |x_i + y_i|^p = \sum_i |x_i|^p \pm \bigo\left( \sum_i |y_i|^p + \sum_i |y_i| |x_i|^{p-1}\right).\]
Pick $q = p/(p-1)$ so that $1/p + 1/q = 1$. By Hölder's inequality:
\[\sum_i |y_i| |x_i|^{p-1} \le \left(\sum_i |y_i|^p \right)^{1/p} \left( \sum_i |x_i|^{(p-1)q}\right)^{1/q} = \norm{Y}_p \norm{X}_p^{p-1}.\]
\end{proof}

\begin{lemma}\label{lm:lp_prefix_norm}
Let $1 < p \le 2$. Given the $L_p$ prefix encoding of a block $B$ of the text $T$ of length $b$, one can find $(1\pm\eps)$-approximation of the $p$'th moment of the difference between any prefix $P[1,b-j+1]$ of the pattern~$P$ and the corresponding suffix $B[j,b]$ of $B$ in time $\widetilde{\Oh}(\eps^{-2} + \log \sigma)$.
\end{lemma}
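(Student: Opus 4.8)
The plan is to locate the position $j$ inside the block structure recorded by the encoding and then split the prefix-suffix distance into two parts, as illustrated in Figure~\ref{fig:psketch}. First I would find the index $k$ such that $q_{k+1} < j \le q_k$; this is possible by a binary search over the sorted list of the positions $q_k$ stored in the encoding, taking $\widetilde\Oh(1)$ time. By definition of $q_k$ we then have $\norm{B[j,b]-P[1,b-j+1]}_p^p \le \norm{B[q_k,b]-P[1,b-q_k+1]}_p^p \le 2^k$, and since $q_{k+1}$ is the leftmost position achieving $p$'th moment at most $2^{k+1}$ and $j < q_{k+1}$ is not achieved... more carefully, $j > q_{k+1}$ means the relevant moment is at least... we need a \emph{lower} bound of order $2^k$ as well: because $q_k$ is the leftmost position with moment $\le 2^k$ and $j \le q_k$, we don't directly get a lower bound, but $q_{k+1} < j$ gives $\norm{B[j,b]-P[1,b-j+1]}_p^p > 2^{k+1}$ is false in general; instead I use that $j \le q_k$ and $q_{k}$ is leftmost with moment $\le 2^k$, so any position strictly left of $q_k$ has moment $> 2^k$ — but $j$ may equal $q_k$. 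The clean statement I will use: the target moment lies in $(2^{k}/2, 2^{k+1}]$ when $q_{k+1} < j \le q_k$ is interpreted with the convention on boundaries; if $j = q_k^i$ for some block border at level $k$ I will simply take that level. In any case the target moment is $\Theta(2^k)$ up to a constant factor, which is what I need for the additive error to be absorbed.

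Next I find, again by binary search in the sorted list of block borders $q_k^0 \le q_k^1 \le \dots \le q_k^{\ell_k}$ at level $k$, the index $i$ with $q_k^i \le j < q_k^{i+1}$. I split $B[j,b] = B_1 B_2$ and $P[1,b-j+1] = P_1 P_2$ where $B_1 = B[j, q_k^{i+1}-1]$ aligns with the prefix $P_1$ of the pattern of the same length, and $B_2 = B[q_k^{i+1}, b]$ aligns with $P_2$. For the second part, I note that $B_2 = B[q_k^{i+1}, b]$ is exactly one of the suffixes $S[q_k^j,b]$ whose $p$'th-norm sketch $\pSketch$ is stored; the corresponding pattern prefix $P_2 = P[b-j+1 - (\text{len}(B_1)), \dots]$... more simply, $P_2$ is a fixed prefix of $P$ of known length, so I can compute $\pSketch(P_2)$ from scratch in $\widetilde\Oh(\eps^{-2})$ time (the pattern is available with random access), and then use Theorem~\ref{th:psketch} to get a $(1 \pm C_p\eps/p)$-approximation of $\norm{B_2 - P_2}_p$, hence by Observation~\ref{obs:norm_moment} a $(1 \pm C_p'\eps)$-approximation of $\norm{B_2-P_2}_p^p$, all in time $\widetilde\Oh(\eps^{-2})$. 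For the first part, $B_1$ spans at most one block-with-border-length, i.e. $\mathrm{len}(B_1) \le q_k^{i+1}-q_k^i$, and by the block property $\norm{B[q_k^i,q_k^{i+1}-1]-P[\cdot]}_p^p \le \eps^p 2^k$; since $B_1$ is a suffix of that block and $P_1$ the corresponding suffix of the pattern subword, subadditivity of $L_p^p$ for $p\ge 1$... actually $L_p^p$ is \emph{not} subadditive in general for $p > 1$, but it \emph{is} monotone under taking subwords: $\norm{B_1-P_1}_p^p \le \norm{B[q_k^i,q_k^{i+1}-1]-P[\cdot]}_p^p \le \eps^p 2^k$ because we are just dropping non-negative coordinates. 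However we cannot afford to compute $\norm{B_1-P_1}_p^p$ exactly because $B_1$ is not stored. Instead I use the trick indicated in the figure caption: replace $B_1$ by the corresponding subword of the pattern. That is, I approximate $\norm{B_1-P_1}_p^p$ by $0$ (replacing $B_1$ with $P_1$ so the difference vanishes), and the error incurred is at most $\norm{B_1-P_1}_p^p \le \eps^p 2^k = \bigo(\eps^p \cdot \norm{B[j,b]-P[1,b-j+1]}_p^p)$ in additive terms — but I need to check this against Lemma~\ref{lem:holder}.

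Here is where the real work is, and the step I expect to be the main obstacle: controlling how the error in the first part propagates when I recombine. Write $X = (0 \cdots 0\, ,\, B_2 - P_2)$ (the true difference with $B_1$ replaced by pattern) and $Y = (B_1 - P_1\,,\, 0 \cdots 0)$, so the true difference vector is $X + Y$ with disjoint supports; then $\norm{X+Y}_p^p = \norm{X}_p^p + \norm{Y}_p^p$ exactly, so the additive error from dropping $Y$ is exactly $\norm{Y}_p^p = \norm{B_1-P_1}_p^p \le \eps^p 2^k$. Since the target moment $\norm{X+Y}_p^p = \Theta(2^k)$, this is a $(1 \pm \bigo(\eps^p)) \subseteq (1\pm\bigo(\eps))$ relative error for $p \ge 1$. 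Wait — I should double check the direction: I also need $\norm{X}_p^p$ itself to be estimated well, and I only have a $(1\pm\bigo(\eps))$-approximation of it from the sketch; combining a $(1\pm\bigo(\eps))$ relative error on $\norm{X}_p^p = \Theta(2^k)$ with an additive $\eps^p 2^k$ error gives overall $(1\pm\bigo(\eps))$ relative error on $\norm{X+Y}_p^p$, which is exactly the claim. So the combination lemma I actually need is just additivity of $L_p^p$ over disjoint supports, not the full Hölder estimate of Lemma~\ref{lem:holder} — Lemma~\ref{lem:holder} is presumably needed elsewhere, perhaps in handling the suffix sketch concatenation where supports are not disjoint. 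I would state the argument using the disjoint-support identity, verify the constants work out (choosing the hidden constant in the $\Theta(1/\eps^p)$ block count so that each block contributes at most $\eps^p 2^k$, and noting there is only \emph{one} truncated block $B_1$ so the errors do not accumulate), and conclude the running time is $\widetilde\Oh(\eps^{-2})$ for the sketch operations plus $\widetilde\Oh(\log\sigma)$... actually the $\log\sigma$ in the claim presumably comes from the binary searches over lists of length $\widetilde\Oh(\log\sigma^p) = \widetilde\Oh(\log\sigma)$, or from arithmetic on alphabet-sized characters. I would pin that down and finish.
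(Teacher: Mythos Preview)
There is a genuine gap. Your central claim that $\norm{B_1 - P_1}_p^p \le \eps^p 2^k$ is unjustified and in general false. The block structure at level $k$ is built with respect to the \emph{fixed} alignment $q_k$: the bound $\le \eps^p 2^k$ holds for the block $B[q_k^i,q_k^{i+1}-1]$ against the subword of $P$ starting at position $q_k^i - q_k + 1$, i.e. against $\tilde P = P[j-q_k+1,\,q_k^{i+1}-q_k]$ (restricted to the part covered by $B_1$). But at the query alignment $j$, the prefix $P_1 = P[1,\,q_k^{i+1}-j]$ is an entirely different subword of the pattern. Nothing prevents $\norm{B_1-P_1}_p^p$ from being of order $2^k$; replacing $B_1$ by $P_1$ and setting $m_1' = 0$ can lose a constant fraction of the answer.

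This is exactly why the paper does \emph{not} drop $m_1$. It replaces $B_1$ by the pattern subword $\tilde P$ (the one that really is $\eps^p 2^k$-close to $B_1$) and takes $m_1' = F_p(P_1 - \tilde P)$, which is computable from the pattern alone. Now the error analysis sets $X = P_1 - B_1$ and $Y = B_1 - \tilde P$, so $X+Y = P_1 - \tilde P$; these vectors live on the \emph{same} coordinates, so your disjoint-support identity does not apply, and one genuinely needs Lemma~\ref{lem:holder} to get
\[
|m_1'-m_1| = \bigl|\norm{X+Y}_p^p - \norm{X}_p^p\bigr| = \bigo\bigl(\norm{Y}_p^p + \norm{Y}_p\,\norm{X}_p^{p-1}\bigr) = \bigo(\eps\,2^k).
\]
So Lemma~\ref{lem:holder} is not ``presumably needed elsewhere''; it is the heart of this step. (A smaller issue: computing $\pSketch(P_2)$ ``from scratch'' at query time would cost $\widetilde\Oh(b\,\eps^{-2})$, not $\widetilde\Oh(\eps^{-2})$; the relevant pattern sketches must be prepared in advance.)
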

\begin{proof}
Let $q_k$ be the position that is closest to $i$ from the left, and $q_k^i \le j < q_k^{i+1}$ (see Fig.~\ref{fig:psketch}). We can find $q_k$, $q_k^i, q_k^{i+1}$ in time $\Oh(\log (b\sigma^p) + 1/\eps^{p})$ by iterating over the sorted lists.

The position $q_k^{i+1}$ divides $P[1,b-j+1]$ into two parts, $P_1$ and $P_2$. Denote $B_1$ and $B_2$ the respective subwords of $B$ they are aligned with  (see Fig.~\ref{fig:psketch}). Let $m_1 = F_p(P_1 - B_1)$ and $m_2 = F_p(P_2 - B_2)$. Then $m = F_p(P[1,b-j+1] - B[j,b])$, being the value we need to approximate, is equal to $m_1+m_2$.  

We can find $m'_2 \eqeps m_2$ using the sketches for $B_2 = B[q_k^{i+1},b]$ and $P_2$ in time $\widetilde{\Oh}(1/\eps^2)$. Furthermore, if $q_k^i = q_k^{i+1}-1$, then we can compute $m_1$ exactly as we store $B[q_k^i]$. Otherwise, we consider the subword $\tilde{P} = P[j-q_k+1,q_k^{i+1}-q_k+1]$ of the pattern $P$. Denote $m'_1 = F_p(P_1-\tilde{P})$ and use it as our estimation of~$m_1$.

Since $1 \le p \le 2$, by definition, $F_p(B_1 - \tilde{P}) \le \varepsilon^p \cdot 2^{k-1}$, and $F_p(P_1-B_1) \le 2^k$.
By Lemma~\ref{lem:holder} with $X = P_1 - B_1$ and $Y = B_1 - \tilde{P}$, 
\[|m'_1 - m_1| = \bigo(\norm{B_1-\tilde{P}}_p^p + \norm{B_1-\tilde{P}}_p \norm{P_1-B_1}_p^{p-1}) = \bigo(\varepsilon^p 2^k + \varepsilon (2^k)^{\frac1p} (2^k)^{\frac{p-1}{p}}) = \bigo(\varepsilon 2^k)\]
and finally $|(m_1+m_2) - (m'_1 + m'_2)| \le\bigo(\varepsilon m) + \varepsilon m_2 = \bigo(\varepsilon m)$.
\end{proof}

\begin{lemma}\label{lm:prefix_lp}
Let $1 < p \le 2$. The $L_p$ prefix encoding of a $b$-length block $B$ of the text can be computed in time $\tilde{\Oh}(b^2 + \eps^{-2} b \log \sigma)$ and space $\tilde{\Oh}(b +\eps^{-2- p} \log^2 \sigma)$. 
\end{lemma}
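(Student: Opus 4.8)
The plan is to show that all components of the $L_p$ prefix encoding of $B$ — the sorted lists of positions $q_k$ and $q_k^i$, the characters $S[q_k^i]$, and the $\pSketch$ structures for the suffixes $S[q_k^j,b]$ — can be built within the claimed time and space bounds, processing $B$ online character by character. First I would handle the computation of the anchors $q_k$. Since $q_0 = b$ and $q_{k}\le q_{k-1}$, and since $\norm{S[q,b]-P[1,b-q+1]}_p^p$ is (up to the subadditivity issues handled by Lemma~\ref{lem:holder}) roughly monotone as $q$ decreases, I would maintain, for each of the $\Oh(\log b\sigma^p)$ thresholds $2^k$, a $\pSketch$ of the current candidate suffix; as each new character of $B$ arrives I update all of these sketches in $\Oh(\eps^{-2}\log\sigma)$ amortized time per arrival by Theorem~\ref{th:psketch}, giving $\tilde\Oh(b\eps^{-2}\log\sigma)$ total. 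Naively, however, each $q_k$ is a suffix \emph{starting} position, so the sketches must be recomputed as the suffix grows on the \emph{right}; the standard trick, as in \cite{HDstream}, is to run the whole thing in reverse — feed $B$ reversed into a sketching structure that, at the time the block is complete, can answer $\norm{B[q,b]-P[1,b-q+1]}_p^p$ for the needed $q$'s — or equivalently to note we are free to spend $\tilde\Oh(b^2)$ time overall, so we may afford to, for each $k$, binary-search over $q$ using a data structure supporting suffix-sketch queries.

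Concretely, I would argue as follows. Build, in a single left-to-right pass over $B$ (length $b$), a collection of $\pSketch$-es of all prefixes $B[1,i]$; since $\pSketch$ is a linear sketch, the sketch of any subword $B[i+1,j]$ is obtained by subtracting sketch$(B[1,i])$ from sketch$(B[1,j])$, so with $\Oh(b)$ stored prefix sketches (total space $\tilde\Oh(b\eps^{-2}\log\sigma)$, which we will shrink) we get $\Oh(1)$-overhead access to a sketch of any subword. Then for each $k = 0,\dots,\ceil{\log b\sigma^p}$ we binary-search for $q_k$: each test "is $\norm{B[q,b]-P[1,b-q+1]}_p^p\le 2^k$?" costs one sketch query $\tilde\Oh(\eps^{-2})$ plus $\Oh(b)$ to assemble the corresponding sketch of $P[1,b-q+1]$ (the pattern sketch can also be precomputed incrementally in a preprocessing pass over $P$, costing $\tilde\Oh(b\eps^{-2}\log\sigma)$ once), so each $q_k$ costs $\tilde\Oh(b\eps^{-2})$ and over all $k$ this is $\tilde\Oh(b\eps^{-2}\log\sigma)$. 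Having fixed $q_k$, I would then walk from $q_k$ rightwards to pick the block borders $q_k^0<q_k^1<\dots<q_k^{\ell_k}=b$, greedily choosing each $q_k^{i+1}$ as far right as possible subject to $F_p(B[q_k^i,q_k^{i+1}-1]-P[\cdot])\le\eps^p 2^{k-1}$; again a binary search per border using subword sketches, with $\ell_k = \Theta(\eps^{-p})$ borders per level and $\Oh(\log b\sigma^p)$ levels, costing $\tilde\Oh(\eps^{-p}\cdot b\eps^{-2})$, still within $\tilde\Oh(b^2+\eps^{-2}b\log\sigma)$ since $b\approx\sqrt n$ absorbs the $\eps^{-p}$ factor into the $b^2$ term when $\eps^{-p}\le b$ and otherwise the crude $\tilde\Oh(b^2)$ per-level bound of comparing characters directly suffices. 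Finally, storing the output: the sorted lists and characters take $\tilde\Oh(\eps^{-p}\log\sigma)$ space, and the sketches — one per border, $\Theta(\eps^{-p})$ per level, $\Oh(\log\sigma)$ levels, each sketch $\Oh(\eps^{-2}\log^2\sigma)$ bits by Theorem~\ref{th:psketch} — total $\tilde\Oh(\eps^{-2-p}\log^3\sigma)$ bits, matching the stated $\tilde\Oh(b+\eps^{-2-p}\log^2\sigma)$ up to the $\widetilde\Oh$-hidden logs.

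The remaining point is the \emph{well-definedness} of the greedy choice of borders, i.e. that dividing $S[q_k,b]$ into $\Theta(\eps^{-p})$ blocks of $p$'th-moment-of-difference at most $\eps^p 2^{k-1}$ is actually possible. Here I would invoke subadditivity-with-error: by the triangle inequality in $\norm{\cdot}_p$ (or directly, since $F_p$ over disjoint coordinate blocks is \emph{exactly} additive — the coordinates of $S[q_k,b]-P[1,\cdot]$ partition perfectly!), $\sum_i F_p(\text{block}_i) = F_p(S[q_k,b]-P[1,\cdot]) \le 2^k$, so a greedy sweep that closes a block as soon as its accumulated $F_p$ would exceed $\eps^p 2^{k-1}$ produces at most $2\cdot 2^k/(\eps^p 2^{k-1}) = 4\eps^{-p}$ blocks, with the only subtlety being single characters whose individual $F_p$-contribution already exceeds $\eps^p 2^{k-1}$ — those become singleton blocks and we store the character explicitly, and there can be at most $\eps^{-p}$ of them; hence $\ell_k = \Theta(\eps^{-p})$ as claimed.

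I expect the main obstacle to be purely bookkeeping rather than conceptual: organizing the passes over $B$ (and the preprocessing pass over $P$) so that we never simultaneously store more than $\tilde\Oh(b + \eps^{-2-p}\log^2\sigma)$ bits — in particular, we cannot afford to keep all $\Oh(b)$ prefix sketches of $B$ around at once (that would be $\tilde\Oh(b\eps^{-2}\log\sigma)$, too much when $\eps^{-2}\log\sigma \gg b$). The fix is to observe that the time budget $\tilde\Oh(b^2)$ lets us recompute sketches on demand: for each binary-search test we re-scan the relevant prefix of $B$ from scratch to rebuild the one sketch we need, at $\tilde\Oh(b\eps^{-2}\log\sigma)$ time per test and hence $\tilde\Oh(b^2\eps^{-2}\log\sigma)$ total if done naively — slightly over budget, so instead we do the finer accounting: each of the $\Oh(\log\sigma)$ thresholds needs $\Oh(\log b)$ tests for $q_k$ and $\Theta(\eps^{-p})$ borders each needing $\Oh(\log b)$ tests, and we amortize by re-using a single left-to-right scan that simultaneously advances all active queries, which recovers the $\tilde\Oh(b^2 + \eps^{-2}b\log\sigma)$ bound. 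Verifying this amortization carefully, and confirming the space never exceeds the claim during it, is the delicate part; everything else follows from Theorem~\ref{th:psketch}, Observation~\ref{obs:norm_moment} and the exact additivity of $F_p$ over disjoint coordinate blocks.
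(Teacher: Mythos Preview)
Your approach is substantially more complicated than the paper's, and in the process introduces a genuine difficulty that the paper sidesteps entirely.

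The paper's proof is essentially three lines: store $B$ explicitly (space $\Oh(b)$), and for each $j=1,\dots,b$ compute $F_p(B[j,b]-P[1,b-j+1])$ \emph{exactly} by a naive loop, $\Oh(b)$ time per $j$ and $\Oh(b^2)$ total. With these exact values in hand, for every level $k$ the anchor $q_k$ and all borders $q_k^i$ are found by a single linear scan in $\Oh(b)$ time; then all the suffix sketches $\pSketch(B[q_k^j,b])$ for that level are produced in one pass over $B$ at $\widetilde\Oh(\eps^{-2})$ per character, so $\widetilde\Oh(\eps^{-2}b)$ per level and $\widetilde\Oh(\eps^{-2}b\log\sigma)$ overall. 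Sketches are never used to \emph{find} any position; they appear only as part of the output.

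Your route --- using $\pSketch$ to estimate the $F_p$ values and then binary-searching for $q_k$ and the $q_k^i$ --- has a real problem: by Theorem~\ref{th:psketch} the sketch returns a $(1\pm\eps)$-approximation with only constant probability, so a test ``is $F_p\le 2^k$?'' cannot be answered reliably near the threshold, and the positions you recover are not the ones in the definition of the encoding. You could try to argue that approximate thresholds still let Lemma~\ref{lm:lp_prefix_norm} go through with adjusted constants, but that requires reproving that lemma, amplifying the success probability of every binary-search query, and coping with the non-robustness of binary search under two-sided noise --- none of which you address. Meanwhile the time/space bookkeeping you flag as ``the delicate part'' is exactly what the paper avoids: with exact $F_p$ values the only things ever in memory are $B$ itself, one running sketch, and the output encoding.

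Your final paragraph, on the well-definedness of the greedy partition (that $\Theta(\eps^{-p})$ blocks suffice, via exact additivity of $F_p$ over disjoint coordinate blocks plus the singleton escape clause), is correct and is something the paper asserts without justification; that part is a genuine addition.
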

\begin{proof}
For $j = 1, \ldots, b$, we naively compute the $L_p$ distance between the suffix of $B$ and the prefix of $P$ in $\Oh(b)$ time. We then find the positions $q_k$. For each $k  = 0, \ldots, \ceil{\log b \sigma^p}$, we can find the positions $q_k^i$ in $\Oh(b)$ time and compute the sketches in $\tilde{\Oh}(\eps^{-2} b)$ time by Theorem~\ref{th:psketch}.
\end{proof}

\section{Suffix sketches}
In this section, we give the definitions and explain how we maintain the suffix sketches for each of the distances.

\subsection{Hamming  distance}\label{sec:suffix_l0}
We first recall Euclidean suffix sketches as presented in~\cite{HDstream}. In fact, we will not use them for the Euclidean distance as for it we can use the generic solution of Section~\ref{sec:suffix_lp}, but they will serve as a foundation of Hamming suffix sketches. 

All sketches presented in this section are correct with constant probability, which can be amplified to  $1-\delta$ for arbitrarily small $\delta$ by a standard method of repeating sketching independently $\Theta(\log \delta^{-1})$ times and taking the median of the estimates.

\begin{lemma}[Euclidean sketches~\cite{DBLP:journals/jcss/Achlioptas03}]\label{lm:esketch}
Let $M$ be a random matrix of size $d \times n$ filled with 4-wise independent random $\pm 1$ variables, for $d = \Theta(\eps^{-2})$ chosen big enough. For a vector $X \in \mathbb{R}^n$ there is $\frac{1}{\sqrt{d}} \norm{M X}_2 \eqeps \norm{X}_2$ with constant probability $9/10$, taken over all possible choices of $M$. We say that a vector $M X$ of dimension~$d$ is a Euclidean sketch of $X$.
\end{lemma}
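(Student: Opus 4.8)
The plan is to prove this classical Johnson--Lindenstrauss-type bound by a standard second-moment (Chebyshev) argument, taking care to use only $4$-wise independence of the entries of $M$. By scale invariance of both sides I may assume $\norm{X}_2 = 1$; it then suffices to show that $Z := \frac1d\norm{MX}_2^2$ lies in $[1-\eps,1+\eps]$ (up to adjusting the constant hidden in $d$), because taking square roots turns this into $\frac{1}{\sqrt d}\norm{MX}_2 \eqeps 1$, using $\sqrt{1-\eps}\ge 1-\eps$ and $\sqrt{1+\eps}\le 1+\eps$ for $\eps\le 1$.

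First I would analyse a single row. Write $Y_j = (MX)_j = \sum_i M_{ji}X_i$, so that $Z = \frac1d\sum_{j=1}^d Y_j^2$. Using $\expect{M_{ji}}=0$, $M_{ji}^2 = 1$ and pairwise independence within a row, $\expect{Y_j^2} = \sum_i X_i^2 = 1$, hence $\expect{Z}=1$. For the variance I expand $\expect{Y_j^4} = \sum_{i,k,l,m} X_iX_kX_lX_m\,\expect{M_{ji}M_{jk}M_{jl}M_{jm}}$. By $4$-wise independence, a term with all four indices distinct, or with exactly one index different from the other three, has expectation $0$ (it contains an unpaired factor of mean $0$, recalling $M_{ji}^3 = M_{ji}$); the surviving terms are $i=k=l=m$ (giving $\sum_i X_i^4$) and the three pairings into two doubletons (giving $3\sum_{i\ne k}X_i^2X_k^2$). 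Thus $\expect{Y_j^4} = \sum_i X_i^4 + 3\sum_{i\ne k}X_i^2X_k^2 = 3\big(\sum_i X_i^2\big)^2 - 2\sum_i X_i^4 \le 3$, so $\var{Y_j^2} = \expect{Y_j^4} - 1 \le 2$.

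Next I would show the cross-row covariances vanish: for $j\ne j'$, $\expect{Y_j^2Y_{j'}^2} = \sum_{i,k,l,m} X_iX_kX_lX_m\,\expect{M_{ji}M_{jk}M_{j'l}M_{j'm}}$, and since the entry indices $(j,i),(j,k)$ are disjoint from $(j',l),(j',m)$, this expectation is nonzero only when $i=k$ and $l=m$ (any other pattern leaves a mean-zero factor independent of the rest, by $3$- or $4$-wise independence), giving $\expect{Y_j^2Y_{j'}^2} = \sum_{i,l}X_i^2X_l^2 = 1 = \expect{Y_j^2}\expect{Y_{j'}^2}$. Hence $\var{Z} = \frac1{d^2}\sum_j\var{Y_j^2} \le 2/d$. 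Chebyshev's inequality then gives $\Pr[|Z-1|\ge \eps] \le \var{Z}/\eps^2 \le 2/(d\eps^2) \le 1/10$ as soon as $d \ge 20\,\eps^{-2}$, and taking square roots as above yields $\frac1{\sqrt d}\norm{MX}_2 \eqeps \norm{X}_2$ with probability $\ge 9/10$.

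The main obstacle is the fourth-moment bookkeeping: verifying that all the "unbalanced" index patterns contribute $0$ using only $4$-wise independence of the $M_{ji}$ (both within a row for $\expect{Y_j^4}$ and across two rows for the covariance), rather than full mutual independence. Everything after that is routine Chebyshev and the elementary $\sqrt{1\pm\eps}$ estimates.
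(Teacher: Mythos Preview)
Your argument is correct and is essentially the standard second-moment proof due to Achlioptas. The paper itself does not prove this lemma: it is stated with a citation to~\cite{DBLP:journals/jcss/Achlioptas03} and used as a black box, so there is no ``paper's own proof'' to compare against. Your handling of the only subtle point---that $4$-wise independence of the entries suffices both for the in-row fourth moment $\expect{Y_j^4}\le 3$ and for the vanishing of the cross-row covariances---is exactly what the cited result relies on, and your final Chebyshev step with $d\ge 20\eps^{-2}$ gives the claimed $9/10$ probability.
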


\begin{definition}[Euclidean suffix sketches~\cite{HDstream}]\label{def:esufsketch}
Consider a word $X$ of length $n$. We define its Euclidean suffix sketch as follows. 

Let $b$ be the \emph{block length}. Let  $\mathcal{R}$ be a random matrix of size $d \times b$ filled with 4-wise independent random $\pm 1$ variables and let $\alpha_1, \ldots, \alpha_{\lceil n/b\rceil}$ be 4-wise independent random coefficients with values $\pm1$ as well. We define a matrix $M$ of size $d \times n$ such that $M_{i,j b + k} = \alpha_j \cdot \mathcal{R}_{i,k}$.

Let $X'$ be a word of length $\lceil n/b\rceil \cdot b$ obtained from $X$ by appending an appropriate number of zeroes. The \emph{Euclidean suffix sketch} of $X$ is defined as $\eSketch(X) = M X'$, where $X'$ is considered as a vector.
\end{definition}

Observe that the matrix $M$ does not need to be accessed explicitly. Indeed, from $M X' = \sum_i \alpha_i \cdot \mathcal{R} \cdot \begin{bmatrix} X'[bi],\ \ldots,\ X'[bi+b-1]\end{bmatrix}^T$ it follows that the Euclidean suffix sketch can be computed by first sketching each block of $X'$ using the matrix $\mathcal{R}$, and then taking a linear combination of the sketches of the blocks (using the random $\pm 1$ coefficients $\alpha_i$).

\begin{lemma}[\cite{HDstream}]\label{lm:esufsketch}
Selecting $d = \Theta(\eps^{-2})$ gives $\frac{1}{\sqrt{d}} \norm{\eSketch(X)}_2 \eqeps \norm{X}_2$ with probability at least $9/10$ (taken over all possible choices of $\mathcal{R}, \alpha_i$).
\end{lemma}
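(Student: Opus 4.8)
The plan is to reduce the claim to the non-suffix Euclidean sketch guarantee of Lemma~\ref{lm:esketch} applied to the "coarsened" vector of block-sketches. First I would fix the randomness: write $X'$ for the zero-padded word of length $N = \lceil n/b \rceil \cdot b$, partition its coordinates into $\lceil n/b \rceil$ consecutive blocks of length $b$, and for $j = 0, \ldots, \lceil n/b\rceil - 1$ let $Y_j = \mathcal{R} \cdot [X'[bj], \ldots, X'[bj+b-1]]^T \in \mathbb{R}^d$ be the block sketch. As observed right after Definition~\ref{def:esufsketch}, $\eSketch(X) = \sum_j \alpha_j Y_j$, so $\norm{\eSketch(X)}_2^2 = \sum_{i=1}^d \left(\sum_j \alpha_j Y_j[i]\right)^2$.

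Next I would condition on $\mathcal{R}$ and take expectation/variance only over the $\alpha_j$'s, which are $4$-wise independent $\pm 1$ variables. For a fixed coordinate $i$, $\expect{\left(\sum_j \alpha_j Y_j[i]\right)^2} = \sum_j Y_j[i]^2$ using pairwise independence and $\expect{\alpha_j} = 0$, $\alpha_j^2 = 1$; summing over $i$ gives $\expect{\norm{\eSketch(X)}_2^2 \mid \mathcal{R}} = \sum_j \norm{Y_j}_2^2$. For the variance, $4$-wise independence of the $\alpha_j$ lets me bound $\var{\left(\sum_j \alpha_j Y_j[i]\right)^2 \mid \mathcal{R}}$ by a constant times $\left(\sum_j Y_j[i]^2\right)^2$ in the usual way (the cross terms $\alpha_j \alpha_k \alpha_l \alpha_m$ with not-all-paired indices vanish), and after summing over $i$ and using Cauchy–Schwarz this yields $\var{\norm{\eSketch(X)}_2^2 \mid \mathcal{R}} = \Oh\!\left(\left(\sum_j \norm{Y_j}_2^2\right)^2\right)$; a constant-factor concentration then makes $\norm{\eSketch(X)}_2^2$ a $(1\pm\eps/3)$-approximation of $\sum_j \norm{Y_j}_2^2$ with high constant probability, provided $d = \Theta(\eps^{-2})$ is chosen large enough so the second-moment method gives the $\eps$-dependent bound — this is exactly the content of Lemma~\ref{lm:esketch} viewed as sketching the $d\lceil n/b\rceil$-dimensional stacked vector $(Y_j)_j$ with the diagonal-block random-sign matrix, so I would cite it rather than redo it.

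Then I would apply Lemma~\ref{lm:esketch} again, now to each block separately: for each $j$, $\frac{1}{d}\norm{Y_j}_2^2 = \frac{1}{d}\norm{\mathcal{R} \cdot (\text{block } j)}_2^2 \eqeps[\eps/3] \norm{(\text{block }j)}_2^2$ with probability $\ge 9/10$ over $\mathcal{R}$. Since the same matrix $\mathcal{R}$ is reused across blocks, I cannot union-bound over all blocks, but I do not need to: by linearity of expectation over the randomness of $\mathcal{R}$, $\expect{\frac1d \sum_j \norm{Y_j}_2^2} = \sum_j \norm{(\text{block }j)}_2^2 = \norm{X'}_2^2 = \norm{X}_2^2$, and the fourth-moment bound for a single random-sign sketch gives $\var{\frac1d\sum_j\norm{Y_j}_2^2} = \Oh(\eps^2 \norm{X}_2^4)$ when $d = \Theta(\eps^{-2})$ — because variances add across the independent-per-block (but jointly using one $\mathcal{R}$; still the variance of each $\frac1d\norm{Y_j}_2^2$ is $\Oh(\eps^2\norm{(\text{block }j)}_2^4)$ and Cauchy–Schwarz bounds the sum of these and the covariances by $\Oh(\eps^2 (\sum_j \norm{(\text{block }j)}_2^2)^2)$). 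Chebyshev then gives $\frac1d \sum_j \norm{Y_j}_2^2 \eqeps[\eps/3] \norm{X}_2^2$ with high constant probability. Composing the two $(1\pm\eps/3)$ estimates and readjusting constants yields $\frac{1}{\sqrt d}\norm{\eSketch(X)}_2 \eqeps \norm{X}_2$ with probability $\ge 9/10$.

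The main obstacle is the correlation introduced by reusing a single block-matrix $\mathcal{R}$ for all $\lceil n/b\rceil$ blocks and a single coefficient vector $(\alpha_j)$: a naive union bound over blocks would fail and would in any case blow up the failure probability. The resolution, as sketched above, is to avoid union-bounding entirely and instead run the second-moment (Chebyshev) argument \emph{directly} on the aggregate quantities $\sum_j \norm{Y_j}_2^2$ and $\frac1d\sum_j\norm{Y_j}_2^2$, exploiting that expectations are exactly right by linearity and that the relevant variances are controlled by $d = \Theta(\eps^{-2})$ together with $4$-wise independence of the $\alpha_j$ and $\mathcal{R}$'s entries; this is precisely why the lemma only claims a constant success probability (amplifiable by the median trick mentioned at the start of the section) rather than high probability. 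Everything else is the routine $\pm1$-sketch moment calculation already packaged in Lemma~\ref{lm:esketch}.
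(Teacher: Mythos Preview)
The paper does not give its own proof of this lemma; it is imported from~\cite{HDstream}. Your two-stage argument has a genuine gap in the first stage. You condition on $\mathcal{R}$ and claim that, over the $\alpha_j$'s alone, $\norm{\sum_j\alpha_jY_j}_2^2$ is a $(1\pm\eps/3)$-approximation of $\sum_j\norm{Y_j}_2^2$, invoking Lemma~\ref{lm:esketch}. But the map $(Y_j)_j\mapsto\sum_j\alpha_jY_j$ is \emph{not} an instance of that lemma: the implicit $d\times d\lceil n/b\rceil$ ``diagonal-block random-sign matrix'' has mostly zero entries and each $\alpha_j$ is repeated across all $d$ rows, so it is certainly not filled with $4$-wise independent $\pm1$'s. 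Concretely, because the \emph{same} coefficients $\alpha_j$ appear in every one of the $d$ output coordinates, choosing $d$ large buys no variance reduction whatsoever at this step: a direct fourth-moment computation gives
\[
\var{\Bigl\|\sum_j\alpha_jY_j\Bigr\|_2^2 \,\Big|\, \mathcal{R}} \;=\; 2\sum_{j\ne j'}\langle Y_j,Y_{j'}\rangle^2,
\]
which in the worst case (all $Y_j$ equal --- this actually occurs whenever all blocks of $X$ coincide, since then $Y_j=\mathcal{R}v$ for a fixed $v$) is $\Theta\bigl((\sum_j\norm{Y_j}_2^2)^2\bigr)$, independently of $d$. Chebyshev therefore only yields a constant-factor estimate at stage~1, and composing that with your (correct) stage~2 cannot produce a $(1\pm\eps)$-approximation.

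The upshot is that the two sources of randomness cannot be decoupled in the order you propose: with the $\alpha$'s shared across rows there is no averaging over $d$ left to exploit once $\mathcal{R}$ is fixed. Any valid argument must either analyse the fourth moment of $\norm{MX'}_2^2$ jointly over $\mathcal{R}$ and $\alpha$, or (what amounts to the same thing in the construction actually used for the algorithm) take fresh $\alpha$'s for each of the $d$ rows so that the rows of $M$ really are $4$-wise independent and Lemma~\ref{lm:esketch} applies directly.
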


By linearity of sketches, we obtain $\norm{X - Y}_2 \eqeps \frac{1}{\sqrt{d}} \norm{\eSketch(X)-\eSketch(Y)}_2$ with probability at least $9/10$ as well.\

We now define Hamming suffix sketches. First note that for binary words $X, Y$ there is $\ham(X,Y) = \norm{X-Y}_2$, and therefore in the case of the binary alphabet we can use the Euclidean suffix sketches. We will now show how to reduce the case of arbitrary polynomial-size alphabets to the case of the binary alphabet. 

To this end,~\cite{HDstream} used a random mapping of Karloff~\cite{Karloff93} as a black-box reduction, which led to sketches of size $\sim \eps^{-4}$ . 
We now show a more careful reduction to avoid this overhead and to achieve dependency $\eps^{-2}$ in total. Consider a word morphism defined on alphabet as $\mu : \Sigma \to \{0,1\}^{\sigma}$, $\mu(a) = 0^{a}10^{\sigma-a-1}$ (and acting on words by concatenating the images of each character of the input word). Note that $\norm{\mu(X)-\mu(Y)}_2^2  = 2 \cdot\norm{X-Y}_H$, thus using the Euclidean suffix sketches on top of $\mu(X)$ and $\mu(Y)$ allows computation of the respective Hamming distance. Formally,

\begin{definition}[Hamming suffix sketches~\cite{HDstream}]\label{def:hsufsketch}
Consider a word $X$ of length $n$ on the alphabet of size~$\sigma$. We define its Hamming suffix sketch as follows. 

Let $b$ be the \emph{block length}, $\mathcal{R}$ be a random matrix of size $d \times \sigma b$ filled with 4-wise independent random $\pm 1$ variables, and $\alpha_1, \ldots, \alpha_{\lceil n/b\rceil}$ be 4-wise independent random coefficients with values $\pm1$ as well. We define a matrix $M$ of size $d \times \sigma n$ such that $M_{i,\sigma j b + k} = \alpha_j \cdot \mathcal{R}_{i,k}$.

Let $X'$ be a word of length $\lceil n/b\rceil \cdot b$ obtained from $X$ by appending an appropriate number of zeroes. The \emph{Hamming suffix sketch} of $X$ is defined as $\hSketch(X) = M \mu(X')$, where $\mu(X')$ is considered as a vector.
\end{definition}

\begin{lemma}
Selecting $d = \Theta(\eps^{-2})$ gives $\frac{1}{2d} \norm{\hSketch(X)}_2^2 \eqeps\norm{X}_H$ with probability at least~$9/10$ (taken over all possible choices of $\mathcal{R}, \alpha_i$).
\end{lemma}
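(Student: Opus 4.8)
The plan is to reduce the statement to the Euclidean suffix sketch guarantee of Lemma~\ref{lm:esufsketch} applied to the binary word $\mu(X')$. First I would observe that $\hSketch(X) = M\mu(X')$ is literally a Euclidean suffix sketch of the vector $\mu(X')$: the matrix $M$ of Definition~\ref{def:hsufsketch}, with entries $M_{i,\sigma j b + k} = \alpha_j \cdot \mathcal{R}_{i,k}$, has exactly the block structure of Definition~\ref{def:esufsketch}, but with block length $\sigma b$ in place of $b$ and ambient dimension $\sigma n$ in place of $n$. Here $\sigma$ consecutive coordinates of $\mu(X')$ encode one original character, and a group of $b$ such characters shares a single sign $\alpha_j$, so the pair $(\mathcal{R}, \alpha)$ is precisely a Euclidean suffix-sketch instance for vectors of length $\sigma n$ with block length $\sigma b$. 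Hence Lemma~\ref{lm:esufsketch} applies verbatim to $\mu(X')$.

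Given this, I would invoke Lemma~\ref{lm:esufsketch} to obtain $\frac{1}{\sqrt d}\norm{\hSketch(X)}_2 \eqeps \norm{\mu(X')}_2$ with probability at least $9/10$ over the choice of $\mathcal{R}$ and the $\alpha_i$, for $d = \Theta(\eps^{-2})$. Squaring both sides turns a $(1\pm\eps)$ guarantee into a $(1\pm\eps)^2 = (1\pm\Oh(\eps))$ guarantee, which is absorbed back into a clean $(1\pm\eps)$ factor by choosing the hidden constant in $d = \Theta(\eps^{-2})$ slightly larger (equivalently, by rescaling $\eps$ by a constant as in Observation~\ref{obs:norm_moment}). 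This yields $\frac{1}{d}\norm{\hSketch(X)}_2^2 \eqeps \norm{\mu(X')}_2^2$ with the same probability.

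It then remains to relate $\norm{\mu(X')}_2^2$ to the Hamming norm $\norm{X}_H$. This is the single-word counterpart of the identity $\norm{\mu(X)-\mu(Y)}_2^2 = 2\,\norm{X-Y}_H$ recorded just before Definition~\ref{def:hsufsketch}, and it is exactly where the normalizing factor $\tfrac12$ comes from: the morphism $\mu$ is constructed so that the squared Euclidean action of each character corresponds to (twice) its contribution to the Hamming weight. Combining this identity with the previous paragraph gives $\frac{1}{2d}\norm{\hSketch(X)}_2^2 \eqeps \norm{X}_H$ with probability at least $9/10$, as required; the probability can be boosted to $1-\delta$ by the standard median-of-$\Theta(\log\delta^{-1})$ trick noted at the start of the section. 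I expect the main obstacle to be precisely this last step: one must account carefully for the image of the zero and padding characters under $\mu$ and for the factor $2$, since these are what pin down the single-word normalization as opposed to the difference form that the downstream pattern-matching application actually uses (where, by linearity of the sketch, $\hSketch(X)-\hSketch(Y) = M(\mu(X')-\mu(Y'))$ and the same argument runs through $\norm{\mu(X')-\mu(Y')}_2^2 = 2\,\norm{X-Y}_H$).
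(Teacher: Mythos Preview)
Your approach is essentially identical to the paper's: identify $\hSketch(X)=M\mu(X')$ as a Euclidean suffix sketch of $\mu(X')$ (with block length $\sigma b$), invoke Lemma~\ref{lm:esufsketch}, square, and then pass from $\norm{\mu(\cdot)}_2^2$ to the Hamming norm via the factor~$2$ coming from the morphism $\mu$. The paper's proof is the one-line chain
\[
\tfrac{1}{2d}\norm{\hSketch(X)}_2^2=\tfrac{1}{2d}\norm{M\mu(X')}_2^2=\tfrac{1}{2d}\norm{\eSketch(\mu(X))}_2^2\eqeps \tfrac12\norm{\mu(X)}_2^2=\norm{X}_H,
\]
so the structure matches exactly. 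You are also right to flag the last step as the delicate one: the identity $\tfrac12\norm{\mu(X)}_2^2=\norm{X}_H$ is stated by the paper without justification and is in fact not literally correct for the given $\mu$ (every $\mu(a)$ has a single $1$, so $\norm{\mu(X)}_2^2=|X|$); what is genuinely used downstream is the difference form $\norm{\mu(X)-\mu(Y)}_2^2=2\norm{X-Y}_H$, exactly as you note.
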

\begin{proof}
Follows immediately as a corollary of Lemma~\ref{lm:esufsketch} and the properties of the embedding~$\mu$. In more detail, the following holds with probability at least $9/10$: 
\begin{align*}
&\frac{1}{2d} \cdot \norm{\hSketch(X)}_2^2 =\\
&= \frac1{2d} \norm{M\mu(X')}_2^2 = \frac1{2d} \norm{M\mu(X)}_2^2 = \frac{1}{2d} \norm{\eSketch(\mu(X))}_2^2 \eqeps \frac{1}{2} \norm{\mu(X)}_2^2  = \norm{X}_H.
 \end{align*}
\end{proof}

As $\mu(X), \mu(Y)$ are sparse, there is an efficient streaming algorithm for maintaining the Hamming suffix sketches of a text: 

\begin{lemma}
\label{lm:hsufsketchalg}
Given a text $T$, there is a streaming algorithm that for every position $i$ outputs the Hamming suffix sketch of a word $T[b\cdot k + 1, i]$, where $k$ is the largest integer such that $i - b \cdot k \le n$. The algorithm takes $\bigo(d n/b + \log d \sigma n)$ space and $\bigo(d(1+n/b^2))$ time per character. 
\end{lemma}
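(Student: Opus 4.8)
The plan is to exploit the block structure of the matrix $M$ exactly as in the observation following Definition~\ref{def:esufsketch}: the Hamming suffix sketch is a linear combination of per-block sketches, so it suffices to maintain (a) the sketch of the current, partially-read block, and (b) enough already-finished block sketches to reconstruct the sketch of any relevant window $T[b\cdot k+1, i]$. First I would fix the block boundaries at multiples of $b$ and note that at any position $i$ the window $T[b\cdot k+1,i]$ with $k$ maximal subject to $i-b\cdot k\le n$ spans at most $\lceil n/b\rceil + 1 = \Oh(n/b)$ blocks. Hence the algorithm keeps a sliding buffer of the last $\Oh(n/b)$ finished block sketches, each a vector of dimension $d$, together with the sketch of the block currently being filled; this is $\Oh(d n/b)$ words of space. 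On top of that we must store the random objects: the matrix $\mathcal{R}$ has size $d\times \sigma b$, but since $\mu(X')$ is sparse we never touch most of it — we only need the entries $\mathcal{R}_{i,k}$ for the single non-zero coordinate each character contributes, and because $\mathcal{R}$ is only $4$-wise independent it is generated from a seed of $\Oh(\log(\sigma b))$ bits, likewise the coefficients $\alpha_j$ from a seed of $\Oh(\log(n/b))$ bits; together $\Oh(\log(d\sigma n))$ bits, matching the stated bound.

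Next I would describe the per-character update. When character $T[i]$ arrives, it is the character at position $i - b\lfloor (i-1)/b\rfloor$ inside the current block, and under the morphism $\mu$ it contributes exactly one non-zero coordinate to $\mu(X')$, namely $\mu(T[i])$ has a single $1$ at offset $T[i]$ within its length-$\sigma$ image. Therefore adding this character to the current block sketch means adding the single column of $\mathcal{R}$ indexed by (block-internal position)$\cdot\sigma + T[i]$, i.e. updating all $d$ coordinates of the running block sketch: $\Oh(d)$ time, with each needed entry $\mathcal{R}_{i,\cdot}$ computed on the fly from the seed in $\Oh(1)$ time. Whenever $i$ crosses a multiple of $b$, the running block sketch is finalized, pushed into the buffer (evicting the oldest one, which falls out of every future window), and a fresh zero sketch is started. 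So far this is $\Oh(d)$ amortized time per character; the extra $\Oh(dn/b^2)$ in the stated bound comes from producing the \emph{output} at every position.

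Then I would handle the output. At position $i$ the sketch of $T[b\cdot k+1,i]$ is $\alpha_{c}\cdot(\text{current partial block sketch}) + \sum_{j} \alpha_j\cdot(\text{finished block sketch }j)$ over the $\Oh(n/b)$ blocks in the window, where $c$ is the index of the current block. Recomputing this sum from scratch at every arrival would cost $\Oh(d n/b)$ per character, which is too much, so instead I maintain it incrementally: keep a running vector $\Sigma_{\mathrm{full}}$ equal to the sum over all currently-buffered \emph{finished} blocks of $\alpha_j$ times their sketch. When a block is finalized, add its contribution to $\Sigma_{\mathrm{full}}$ in $\Oh(d)$ time; when a block is evicted from the buffer, subtract its contribution in $\Oh(d)$ time; both happen once every $b$ characters, contributing $\Oh(d/b)$ amortized. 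The output at position $i$ is then $\Sigma_{\mathrm{full}} + \alpha_c\cdot(\text{partial block sketch})$, computable in $\Oh(d)$ time. To get the claimed $\Oh(d(1+n/b^2))$ \emph{worst-case} per-arrival bound rather than amortized, I would spread the $\Oh(dn/b)$ cost of a block finalize/evict/update-$\Sigma_{\mathrm{full}}$ event evenly over the $b$ arrivals of the following block, so each arrival does $\Oh(d + dn/b^2)$ work; one must keep a short queue of pending updates so that an eviction that is being processed lazily is still reflected correctly, which is routine.

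The main obstacle is the de-amortization bookkeeping: making sure that while a finalized block's contribution is being folded into $\Sigma_{\mathrm{full}}$ over the next $b$ steps, the outputs produced in the interim are still exactly $(1\pm\eps)$-faithful — i.e. that a block which is conceptually already in the window but not yet merged is accounted for by a small $\Oh(1)$-size list of "blocks in transition" whose sketches are added to the output explicitly. Verifying that this list stays of constant size (since only one finalize and one eviction occur per $b$ steps) and that the resulting vector equals $M\mu(T[b\cdot k+1,i]')$ on the nose — so that Lemma~\ref{lm:esufsketch} applies and the estimate is correct w.h.p. — is the one place that needs care; everything else is linear-algebra bookkeeping over the seed-generated randomness.
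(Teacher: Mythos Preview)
Your proposal has a genuine gap in the incremental maintenance of $\Sigma_{\mathrm{full}}$. The coefficients $\alpha_1,\ldots,\alpha_{\lceil n/b\rceil}$ in Definition~\ref{def:hsufsketch} are indexed by a block's \emph{relative} position inside the word being sketched, not by its absolute position in $T$. When you form $\hSketch(T[bk+1,i])$, the first full block of this window receives $\alpha_1$, the second $\alpha_2$, and so on. Once every $b$ arrivals $k$ increments, and then \emph{every} surviving block's coefficient shifts: the block that previously carried $\alpha_j$ now must carry $\alpha_{j-1}$. The new $\Sigma_{\mathrm{full}}$ is therefore not obtainable from the old one by subtracting one term and adding one term; all $\Theta(n/b)$ terms change. (The coefficients have to be relative, because the pattern's suffix sketches are precomputed once with the fixed sequence $\alpha_1,\ldots,\alpha_{\lceil n/b\rceil}$, and the difference $\hSketch(X)-\hSketch(Y)$ is only meaningful when both sides use the same assignment.)

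This is exactly where the $\Oh(dn/b^2)$ term in the bound originates, and it is what the paper does: at each block boundary it recomputes the full linear combination of the $\Oh(n/b)$ stored block sketches from scratch in $\Oh(dn/b)$ time, and de-amortizes that work over the $b$ arrivals of the next block; within the block, the running sketch is then extended one character at a time in $\Oh(d)$. Your write-up already contains the phrase ``spread the $\Oh(dn/b)$ cost'', but under your stated update rule (add one block, subtract one block) the per-boundary cost is only $\Oh(d)$, so the plan is internally inconsistent as written. Replacing the add/subtract maintenance by a full recomputation at each block boundary, de-amortized, closes the gap and recovers the paper's argument.
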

\begin{proof}
We fix the matrix $\mathcal{R}$ and the random coefficients $\alpha_1, \ldots, \alpha_{\lceil n/b\rceil}$ from Definition~\ref{def:hsufsketch}. We do not store $\mathcal{R}$ and $\alpha_i$ explicitly, but generate them using two hash functions drawn at random from a $4$-wise independent family. For example, to generate $\mathcal{R}$ we can consider a family of polynomials $2((ax^3 +bx^2 +cx+d \bmod p) \bmod 2) - 1$, with parameters $a, b, c, d$ chosen u.a.r. from the prime field~$\mathbb{F}_p$ for $p \ge db$, and $\alpha_i$ can be generated in a similar fashion. This way, we need to store only $\Oh(\log(d\sigma b) + \log(n/b)) = \Oh(\log d \sigma n)$ random bits that define the coefficients of two polynomials to generate $\mathcal{R}$ and $\alpha_i$.

We process the text $T$ by blocks $B_1, B_2, \ldots$ of length $b$. For each block $B_k$ we compute its sketch using the matrix $\mathcal{R}$. That is, at the beginning of each block we initialize its sketch as a zero vector of length $d$. When a new character $T[i]$ of a block $B_k$ arrives, we compute and add $[M[1, i\cdot b \sigma + T[i]], M[2, i\cdot b \sigma + T[i]], \ldots, M[d, i\cdot b  \sigma+ T[i]]]^T$ to the sketch, which takes $\Oh(d)$ time. We store the sketch of $B_k$ until the block $B_{k+\ceil{n/b}}$ and use it to compute the suffix sketches for the positions in this block.

Consider now a block $B_{k+\lceil n/b \rceil}$. We first compute the suffix sketch for the position $b \cdot (k+\ceil{n/b})$, which is the position preceding the block $B_{k+\ceil{n/b}}$. The suffix sketch for it is simply a linear combination of the sketches of the blocks $B_{k+\ceil{n/b}-1}, B_{k+\ceil{n/b}-2}, \ldots, B_k$ with coefficients $\alpha_1, \ldots, \alpha_{\lceil n/b\rceil - 1}$. Since each sketch is a vector of length $d$, we can compute the linear combination in $\Oh(d n/b)$ time. To make this computation time-efficient, we start it $b$ positions before position $b \cdot (k+\ceil{n/b})$ arrives, and de-amortise the computation over these $b$ positions. This way, we use only $\Oh(dn/b^2)$ time per character.

Now, using the suffix sketch for the position $b \cdot (k+\ceil{n/b})$, we can compute the suffix sketches for all positions in the block $B_{k+\ceil{n/b}}$ one-by-one, using only $\Oh(d)$ time per character: When a new character $T[i]$ arrives, we add $[\alpha_{\lceil n/b\rceil} M[1, i\cdot b \sigma + T[i]], \alpha_{\lceil n/b\rceil}  M[2, i\cdot b \sigma + T[i]], \ldots, \alpha_{\lceil n/b\rceil}  M[d, i\cdot b \sigma + T[i]]]^T$ to the suffix sketch to update it. 

Note that at any time we store $\Oh(n/b)$ sketches of the blocks, so the algorithm uses $\Oh(dn / b + \log d \sigma n)$ space in total.
\end{proof}

\subsection{Manhattan (\texorpdfstring{$L_1$}{L1}) distance}\label{sec:suffix_l1}
To show efficient suffix sketches for the Manhattan distance, we consider a word morphism  $\nu : \Sigma \to \{0,1\}^{\sigma}$, $\nu(a) = 1^{a}0^{\sigma-a}$. Note that $\norm{\nu(X)-\nu(Y)}_2^2 = \norm{\nu(X)-\nu(Y)}_H = \norm{X-Y}_1$, thus using the Hamming suffix sketches on top of $\nu(X)$ and $\nu(Y)$ allows computation of the respective Manhattan distance. 

However, if we apply the morphism straightforwardly, we will have to pay an extra $\sigma$ factor per character to compute the Manhattan suffix sketches. To improve the running time, we will use range-summable hash functions. Range-summable hash functions were introduced by Feigenbaum et al.~\cite{FKS+02}, and later their construction was improved by Calderbank et al.~\cite{DBLP:conf/soda/CalderbankGLMS05}.

\begin{definition}[cf.~\cite{DBLP:conf/soda/CalderbankGLMS05}]
A family $\mathcal{H}$ of hash functions $h(x; \xi) : [t] \times \{0,1\}^s \rightarrow \{-1,1\}$ (here~$x$ is the argument and $\xi$ is the seed) is called $k$-independent, range-summable if it satisfies the following properties for any $h \in \mathcal{H}$:
\begin{enumerate}
\item \emph{($k$-independent)} for all distinct $0 \le x_1, \ldots , x_k < t$ and all $b_1, \ldots , b_k \in \{-1, +1\}$, 
\[\Pr_{\xi \in \{0,1\}^s} [h(x_1; \xi) = b_1 \wedge \cdots \wedge h(x_k; \xi) = b_k] = 2^{-k}\]

\item \emph{(range-summable)} there exists a function $g$ such that given a pair of integers $0 \le \alpha, \beta \le \sigma$, and a seed~$\xi$, the value
$g(\alpha, \beta; \xi) = \sum_{\alpha \le x < \beta} h(x; \xi)$
can be computed in time polynomial in $\log t$.\footnote{In~\cite{DBLP:conf/soda/CalderbankGLMS05}, the function $h$ was defined to take values in $\{0,1\}$. We can change the range of values to $\{-1, +1\}$ by taking $h' = 1 - 2h$ while preserving the properties.} 
\end{enumerate}
\end{definition}

\begin{corollary}[cf. Theorem 3.1~\cite{DBLP:conf/soda/CalderbankGLMS05}]
\label{cor:range-summable}
There is a $4$-independent, range-summable family of hash functions $h(x; \xi) : [t] \times \{0,1\}^s \rightarrow \{-1,+1\}$ with a random seed $\xi$ of length $s = \Oh(\log^2 t)$ such that any range-sum $g(\alpha, \beta; \xi)$ can be computed in $\Oh(\log^3 t)$ time. 
\end{corollary}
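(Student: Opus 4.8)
The plan is to derive the corollary as a direct consequence of Theorem~3.1 of~\cite{DBLP:conf/soda/CalderbankGLMS05}, together with a small amount of bookkeeping to fit the exact statement we need. First I would recall their construction. Identify $[t]$ with $\{0,1\}^m$ for $m = \ceil{\log_2 t}$, addressing each index by its binary expansion. The seed $\xi$ encodes a uniformly random member of a specially-structured family of low-degree polynomials $\mathbb{F}_2^m \to \mathbb{F}_2$ (a fixed nonlinear ``skeleton'' form plus a uniformly random affine part), and one sets $h(x;\xi) = (-1)^{P_\xi(\mathrm{bin}(x))}$. The nonlinear skeleton is exactly what makes the family $4$-wise independent, and its description together with the affine part takes $\Oh(m^2) = \Oh(\log^2 t)$ bits, which is the claimed seed length.

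Next I would recall why the family is range-summable. Any interval $[\alpha,\beta)$ decomposes into $\Oh(m)$ dyadic blocks, each of the form ``all $y$ extending a fixed binary prefix'', and the skeleton of $P_\xi$ is designed precisely so that the restricted sum $\sum_x (-1)^{P_\xi(\mathrm{bin}(x))}$ over such a block can be evaluated in $\poly(m)$ time (concretely in $\Oh(m^2)$ time, after a recursive reduction over the free coordinates). Summing the $\Oh(m)$ dyadic contributions gives total running time $\Oh(m^3) = \Oh(\log^3 t)$. This is precisely the content of Theorem~3.1 of~\cite{DBLP:conf/soda/CalderbankGLMS05}, which I would invoke as a black box.

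Finally I would handle the bookkeeping needed to match our statement. The cited theorem is phrased for hash functions with values in $\{0,1\}$, where the range sum is the integer $\sum_{\alpha \le x < \beta} h(x;\xi)$; passing to values in $\{-1,+1\}$ via $h'(x;\xi) = 1 - 2 h(x;\xi)$ is an injective affine relabelling of the two outcomes, hence preserves $k$-independence exactly, and it transforms range sums linearly, $g'(\alpha,\beta;\xi) = (\beta-\alpha) - 2\,g(\alpha,\beta;\xi)$, which remains computable in $\Oh(\log^3 t)$ time. One also checks that fixing the independence parameter to $k=4$ fixes the degree, hence the size, of the skeleton form, yielding the $\Oh(\log^2 t)$ seed bound. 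The only genuinely nontrivial ingredient is the explicit construction of a polynomial form that is simultaneously $4$-wise independent and range-summable in $\poly(\log t)$ time; since this is exactly what~\cite{DBLP:conf/soda/CalderbankGLMS05} delivers, and our application uses only the interface ($4$-independence, $\Oh(\log^2 t)$ seed, $\poly(\log t)$-time range sums), no further work is needed.
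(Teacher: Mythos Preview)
Your proposal is correct and takes essentially the same approach as the paper: the paper gives no proof at all for this corollary, treating it as a direct citation of Theorem~3.1 of~\cite{DBLP:conf/soda/CalderbankGLMS05}, with only a footnote remarking that one passes from $\{0,1\}$-valued to $\{-1,+1\}$-valued hashes via $h' = 1 - 2h$. Your write-up supplies more detail about the internals of the cited construction than the paper does, but the substance---invoke the black box, then apply the affine change of range---is identical.
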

\begin{observation}\label{obs:sum_of_g}
For a word $X = x_1 x_2 \ldots x_n$, let $Y = \nu(X) = y_1 y_2 \ldots y_{n \sigma}$. Let $h,g$ be as in Corollary~\ref{cor:range-summable} with $t = n\sigma$. Then $\sum_{i=1}^{n} g(i \sigma,i \sigma + x_i;\xi) = \sum_{i=1}^{n \sigma} y_i h(i; \xi)$.
\end{observation}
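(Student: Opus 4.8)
The identity is a bookkeeping statement: it just unfolds the definitions of $\nu$ and of the range‑sum $g$. The plan is to rewrite the right‑hand side $\sum_{i=1}^{n\sigma} y_i h(i;\xi)$ one block of $\sigma$ coordinates at a time, and to recognise each block sum as an instance of $g$.

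First I would split the sum according to the $n$ consecutive windows of length $\sigma$ into which $\nu$ partitions the coordinates of $Y=\nu(X)$: write $W_i$ for the window that is the image of the single character $x_i$ under $\nu$, so that $\{1,\dots,n\sigma\}$ is the disjoint union of $W_1,\dots,W_n$ and (with the paper's indexing) $W_i = \{i\sigma,\, i\sigma+1,\, \dots,\, i\sigma+\sigma-1\}$. This gives $\sum_{i=1}^{n\sigma} y_i h(i;\xi) = \sum_{i=1}^{n}\sum_{j\in W_i} y_j h(j;\xi)$. Next I would use $\nu(a)=1^{a}0^{\sigma-a}$: inside $W_i$ the coordinate $y_j$ equals $1$ exactly on the first $x_i$ coordinates of $W_i$, i.e. for $i\sigma \le j < i\sigma + x_i$, and equals $0$ on the remaining $\sigma-x_i$ coordinates. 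Hence the inner sum collapses to $\sum_{i\sigma \le j < i\sigma + x_i} h(j;\xi)$. Finally, applying the range‑summable property of Corollary~\ref{cor:range-summable} with $\alpha=i\sigma$ and $\beta=i\sigma+x_i$ — which is precisely the defining relation $\sum_{\alpha\le x<\beta} h(x;\xi)=g(\alpha,\beta;\xi)$ — and summing over $i$ yields $\sum_{i=1}^{n\sigma} y_i h(i;\xi)=\sum_{i=1}^{n} g(i\sigma,\, i\sigma+x_i;\xi)$, as claimed.

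There is no real obstacle here beyond keeping the indexing consistent: one should check that the windows $W_i$ tile the coordinate set of $Y$ exactly and that the $1$'s of $\nu(x_i)$ form a prefix of $W_i$ of length $x_i$, so that the degenerate cases $x_i=0$ (empty range, $g$ evaluates to $0$) and $x_i=\sigma$ (the whole window, since the range is half‑open on the right) are handled automatically. None of the probabilistic properties of $h$ (its $4$‑independence) are used — the statement is a deterministic identity for every fixed seed $\xi$.
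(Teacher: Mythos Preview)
Your proof is correct and is exactly the natural unfolding of the definitions; the paper itself states this as an observation without proof, so there is nothing to compare against. Your remark that the identity is deterministic (no use of $4$-independence) and that the edge cases $x_i=0$ and $x_i=\sigma$ are absorbed by the half-open range convention is a nice touch; the only thing to be mindful of is that the paper's indexing of the windows $W_i$ is slightly inconsistent with the $1$-based indexing of $Y$, but you already flagged this by writing ``with the paper's indexing'', and the argument goes through unchanged once a consistent convention is fixed.
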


Thus, we see that range-summable hash functions can be used to efficiently simulate $\nu$.

\begin{definition}[Manhattan suffix sketches]\label{def:msufsketch}
Consider $X$ be a word of length $n$. We define its Manhattan suffix sketch as follows. 

Let $b$ be the \emph{block length}. Let $h,g$ be as in Corollary~\ref{cor:range-summable} with $t = b  d \sigma$. Let $\mathcal{R}$ be a random matrix of size $d \times \sigma b$ filled with 4-wise independent random $\pm 1$ variables, such that $\mathcal{R}_{i,k} = h(i b \sigma + k; \xi)$ and let $\alpha_1, \ldots, \alpha_{\lceil n/b\rceil}$ be 4-wise independent random coefficients with values $\pm1$ as well. We define a matrix $M$ of size $d \times \sigma n$ such that $M_{i,\sigma j b + k} = \alpha_j \cdot \mathcal{R}_{i,k} = \alpha_j \cdot h(dk+i; \xi)$.

Let $X'$ be a word of length $\lceil n/b\rceil \cdot b$ obtained from $X$ by appending an appropriate number of zeroes. The \emph{Manhattan suffix sketch} of $X$ is defined as $\mSketch(X) = M \nu(X')$, where~$\nu(X')$ is considered as a vector.
\end{definition}

\begin{lemma}
Selecting $d = \Theta(1/\eps^2)$ gives $\frac{1}{d}\norm{\mSketch(X) }_2^2 \eqeps \norm{X}_1$ with probability at least~$9/10$ (taken over all possible choices of $\alpha_i$ and $\xi$).
\end{lemma}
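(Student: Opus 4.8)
The plan is to reduce the statement to the already-established Hamming suffix sketch lemma via the properties of the morphism $\nu$. First I would recall the two key algebraic identities: for binary words, $\norm{\nu(X)-\nu(Y)}_2^2 = \norm{\nu(X)-\nu(Y)}_H$ because every coordinate of a $0/1$ difference is itself $0$ or $\pm 1$; and $\norm{\nu(X)-\nu(Y)}_H = \norm{X-Y}_1$ because $\nu(a)$ and $\nu(a')$ are prefixes $1^{a}0^{\sigma-a}$ and $1^{a'}0^{\sigma-a'}$, so they disagree in exactly $|a-a'|$ positions; taking $Y=0^n$ gives $\norm{\nu(X)}_2^2 = \norm{X}_1$.

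Next I would observe that the matrix $M$ in Definition~\ref{def:msufsketch} is constructed exactly as the matrix $M$ in the Hamming suffix sketch (Definition~\ref{def:hsufsketch}), except that its $\pm 1$ entries are realized through the range-summable hash function $h$ rather than through an arbitrary $4$-wise independent family, and that the sketch is applied to $\nu(X')$ rather than $\mu(X')$. By Corollary~\ref{cor:range-summable} the family $\{h(\cdot;\xi)\}$ is itself $4$-wise independent, and $\mathcal{R}_{i,k}=h(ib\sigma+k;\xi)$ together with the $4$-wise independent coefficients $\alpha_j$ yield entries of $M$ satisfying exactly the distributional hypotheses of Lemma~\ref{lm:esketch} (after the standard argument that $M_{i,\cdot}$ restricted to the relevant coordinates is $4$-wise independent). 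Hence the Euclidean sketch guarantee of Lemma~\ref{lm:esufsketch} applies verbatim to $\eSketch(\nu(X))=M\nu(X')$: with probability at least $9/10$, $\frac{1}{\sqrt d}\norm{M\nu(X')}_2 \eqeps \norm{\nu(X')}_2$, i.e. $\frac{1}{d}\norm{M\nu(X')}_2^2 \eqeps \norm{\nu(X')}_2^2$ (squaring only changes the multiplicative error by a constant absorbed into $\eqeps$, or more precisely one invokes the sketch lemma at error parameter $\eps/3$ and uses $(1\pm\eps/3)^2 \subseteq (1\pm\eps)$).

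Chaining these gives, with probability at least $9/10$ over $\alpha_i$ and $\xi$,
\[
\frac{1}{d}\norm{\mSketch(X)}_2^2 = \frac{1}{d}\norm{M\nu(X')}_2^2 \eqeps \norm{\nu(X')}_2^2 = \norm{\nu(X)}_2^2 = \norm{X}_1,
\]
where the third equality uses that appending zeroes to $X$ to form $X'$ appends zeroes to $\nu$ as well and so does not change the Manhattan norm. I would then remark, as in the Hamming case, that by linearity of the sketch the same bound holds for differences: $\frac{1}{d}\norm{\mSketch(X)-\mSketch(Y)}_2^2 \eqeps \norm{X-Y}_1$ with probability $\ge 9/10$.

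I do not anticipate a genuine obstacle here; the only point requiring a little care is making sure the $4$-wise independence needed by Lemma~\ref{lm:esketch} is not spoiled by replacing a generic $4$-wise independent family with the range-summable construction — but Corollary~\ref{cor:range-summable} explicitly promises $4$-independence, so this is immediate. The role of range-summability (Observation~\ref{obs:sum_of_g}) is purely about efficiency of maintaining the sketch and is irrelevant to the correctness statement being proved here, so I would not invoke it in this proof.
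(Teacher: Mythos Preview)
Your proposal is correct and follows essentially the same route as the paper: reduce to Lemma~\ref{lm:esufsketch} via the identity $\norm{\nu(X)}_2^2 = \norm{X}_1$, noting that $\mSketch(X)$ is precisely an Euclidean suffix sketch of $\nu(X)$. You supply a bit more justification than the paper does---namely that the range-summable family is genuinely $4$-wise independent (so Lemma~\ref{lm:esufsketch} applies), and that zero-padding $X$ zero-pads $\nu(X)$---but the argument is the same.
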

\begin{proof}
Follows immediately as a corollary of Lemma~\ref{lm:esufsketch} and the properties of the embedding~$\nu$. In more detail, the following holds with probability at least $9/10$: 
\begin{align*}
&\frac{1}{d} \cdot \norm{\mSketch(X)}_2^2 = \\
&=\frac{1}{d} \norm{M\nu(X')}_2^2 = \frac1{d} \norm{M\nu(X)}_2^2 = \frac{1}{d} \norm{\eSketch(\nu(X))}_2^2 \eqeps \norm{\nu(X)}_2^2  = \norm{X}_1.
\end{align*}
\end{proof}

\begin{lemma}
\label{lm:msufsketchalg}
Given a text $T$, there is a streaming algorithm that for every position $i$ outputs the Manhattan suffix sketch of a word $T[b\cdot k + 1, i]$, where $k$ is the smallest integer such that $i - b \cdot k \le n$. The algorithm takes $\Oh(d \cdot (n/b) + \log^2 \sigma)$ space, and $\Oh(d (1+n/b^2) \cdot \log^3 (b d \sigma))$ time per character.
\end{lemma}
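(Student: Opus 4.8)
The plan is to mirror the algorithm of Lemma~\ref{lm:hsufsketchalg} for the Hamming suffix sketches, but with two modifications forced by the use of the morphism $\nu$ instead of direct coordinates: first, the matrix $\mathcal{R}$ is now realized through a range-summable hash function $h$ as in Definition~\ref{def:msufsketch}, and second, when a character $T[i]=x_i$ arrives we must add to the running block sketch the contribution of the entire block $1^{x_i}0^{\sigma-x_i}$ of $\nu(T)$ rather than a single unit vector. The key observation making this efficient is Observation~\ref{obs:sum_of_g}: the sum over the $x_i$ ones in that block of the corresponding rows of $\mathcal{R}$ is, coordinate by coordinate (i.e. for each row $1\le i'\le d$), a range-sum $g(\cdot,\cdot;\xi)$, which by Corollary~\ref{cor:range-summable} is computable in $\Oh(\log^3 t)$ time with $t=bd\sigma$, i.e. $\Oh(\log^3(bd\sigma))$ time. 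Doing this for all $d$ rows costs $\Oh(d\log^3(bd\sigma))$ time per arriving character.

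First I would set up exactly the same block decomposition as in Lemma~\ref{lm:hsufsketchalg}: process $T$ in blocks $B_1,B_2,\ldots$ of length $b$; for each block maintain a length-$d$ sketch vector, initialized to zero at the block's start; store the sketch of $B_k$ until block $B_{k+\ceil{n/b}}$ is reached; and generate $\mathcal{R}$ (via the seed $\xi$ of length $\Oh(\log^2 t)=\Oh(\log^2(bd\sigma))=\Oh(\log^2\sigma)$, since $b,d$ are polynomial in the relevant parameters) and the $\pm1$ coefficients $\alpha_j$ (via a $4$-wise independent hash, $\Oh(\log(n/b))$ bits) rather than storing them explicitly. Next I would describe the per-character update: when $T[i]=x_i$ arrives inside block $B_k$, for each row $1\le i'\le d$ add $g(i''\sigma+i', i''\sigma+x_i\sigma+i';\xi)$ — the appropriate range-sum picking out the rows of $\mathcal{R}$ hit by the ones of $\nu(x_i)$, with the local column offset $i''$ of the character within the block — to coordinate $i'$ of the block sketch; correctness of this is precisely Observation~\ref{obs:sum_of_g} applied blockwise, and each of the $d$ range-sums costs $\Oh(\log^3(bd\sigma))$, giving $\Oh(d\log^3(bd\sigma))$ per character for the in-block work.

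Then I would reproduce the de-amortized linear-combination step verbatim from Lemma~\ref{lm:hsufsketchalg}: the suffix sketch at the position just before block $B_{k+\ceil{n/b}}$ is $\sum_{j} \alpha_j(\text{sketch of }B_{k+\ceil{n/b}-j})$, a combination of $\Oh(n/b)$ length-$d$ vectors costing $\Oh(dn/b)$ total, spread over the preceding $b$ arrivals for $\Oh(dn/b^2)$ amortized per character; and within block $B_{k+\ceil{n/b}}$ the suffix sketch is updated from the previous position's suffix sketch by adding $\alpha_{\ceil{n/b}}$ times the same range-sum contribution as above, again $\Oh(d\log^3(bd\sigma))$ per character. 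Summing, the time per character is $\Oh\big(d(1+n/b^2)\cdot\log^3(bd\sigma)\big)$ as claimed. For space, at any moment we hold $\Oh(n/b)$ block sketches of $d$ coordinates each, plus the seed $\xi$ of $\Oh(\log^2\sigma)$ bits and the $\Oh(\log\sigma)$ bits defining the $\alpha_j$'s, giving $\Oh(d\cdot(n/b)+\log^2\sigma)$ space total. I do not expect any real obstacle here; the only point needing a little care is bookkeeping the column index of a character within its block so that the range-sum arguments land on the correct rows of $\mathcal{R}=\big(h(ib\sigma+k;\xi)\big)$, and checking that $t=bd\sigma$ is polynomial so that Corollary~\ref{cor:range-summable} gives the stated $\Oh(\log^2\sigma)$ seed length and $\Oh(\log^3\sigma)$ (hidden in $\log^3(bd\sigma)$) evaluation time.
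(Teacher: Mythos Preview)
Your proposal is correct and follows essentially the same approach as the paper: mirror the Hamming suffix-sketch algorithm of Lemma~\ref{lm:hsufsketchalg}, replace the single-coordinate update by a range-sum $g(\cdot,\cdot;\xi)$ per row to absorb the $\nu$-encoding, and carry through the identical de-amortization and accounting. The paper's proof is slightly terser (it writes the per-character update explicitly as the vector $[g(b\sigma,b\sigma+T[i];\xi),\ldots,g(db\sigma,db\sigma+T[i];\xi)]^T$ and notes that the $\log^2 b$, $\log^2 d$ terms in the seed length get absorbed), but the argument is the same.
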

\begin{proof}
The proof mirrors the proof of Lemma~\ref{lm:hsufsketchalg}, and we describe the key elements.
We fix the random coefficients $\alpha_1, \ldots, \alpha_{\lceil n/b\rceil}$ and the hash function $h$ from Definition~\ref{def:msufsketch}. As previously, we do not store the coefficients $\alpha_i$ explicitly, but generate them using a hash function drawn at random from a $4$-wise independent family. The matrix $\mathcal{R}$ is already defined by $h$, with the following parameters: it requires $\bigo(\log^2 (b d \sigma))$ bits of seed, and range-sum queries are answered in time $\bigo(\log^3 (b d \sigma))$.

In the sketching of blocks, we proceed in the same manner, except that when a new character $T[i]$ of a block $B_k$ arrives, we compute and add $\sum_{0 \le j < T[i]} [ M[1, i\cdot b \sigma + j], \ldots, M[d, i\cdot b \sigma + j]]^T = \alpha_i  \cdot [ g(b \sigma, b \sigma + T[i]; \xi), g(2 b \sigma, 2 b \sigma+T[i]; \xi), \ldots, g(d \cdot b \sigma, d \cdot b \sigma + T[i]; \xi)]^T$ to the sketch, which takes $\Oh(d \cdot \log^3 (b d \sigma))$ time ($\log^3 (b d \sigma)$ times slower as the corresponding step in Lemma~\ref{lm:hsufsketchalg}).

Consider now a block $B_{k+\lceil n/b \rceil}$. When a new character $T[i]$ arrives, we update the suffix sketch by adding  $\alpha_{\lceil n/b\rceil} \cdot [ g(b \sigma, b \sigma + T[i]; \xi), g(2 b \sigma, 2 b \sigma+T[i]; \xi), \ldots, g(d \cdot b \sigma, d \cdot b \sigma + T[i]; \xi)]^T$ to it. 

All of the operations are $\bigo(\log^3(b d \sigma))$ time slower than  the corresponding steps in Lemma~\ref{lm:hsufsketchalg}, and the memory complexity is increased by the seed size $\bigo(\log^2 (b d \sigma))$ term ($\log^2 b$ and $\log^2 d$ terms get absorbed).
\end{proof}

\subsection{Generic (\texorpdfstring{$L_p$}{Lp}) distance for \texorpdfstring{$0 < p \le 2$}{0 < p <= 2}.}\label{sec:suffix_lp}
For generic $L_p$ distances, we use the approach of \cite{DBLP:journals/jacm/Indyk06} based on $p$-stable distributions.

\begin{corollary}
\label{cor:psketch_suffix}
Given a text $T$, there is a streaming algorithm that for every position $i$ outputs the $L_p$ suffix sketch of a word $T[b\cdot k + 1, i]$, where $k$ is the smallest integer such that $i - b \cdot k \le n$. The algorithm takes $\Oh(\eps^{-2} (n/b) \cdot \log(\sigma n / \eps) \log(n/\eps))$ bits of space and $\Oh(\eps^{-2} (n/b) \log(n))$ time per character.
\end{corollary}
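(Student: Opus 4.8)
The plan is to mirror the structure of the streaming algorithms for the Hamming and Manhattan suffix sketches (Lemmas~\ref{lm:hsufsketchalg} and~\ref{lm:msufsketchalg}), but replace the linear sketch based on a $\pm 1$ random matrix by the $p$-stable sketch $\pSketch$ of Theorem~\ref{th:psketch}, exploiting its linearity. First I would recall that the sketch of Theorem~\ref{th:psketch} is linear in the input word (this is inherent to Indyk's construction: each coordinate of the sketch is of the form $\sum_j \alpha_j x_j$ for $p$-stable $\alpha_j$ generated by a bounded-space pseudorandom generator), so that for any two words $X,Y$ the sketch of the concatenation, or of a linear combination of blocks, can be assembled from the sketches of the pieces. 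Concretely, I would fix a block length $b$, process the text in blocks $B_1,B_2,\ldots$ of length $b$, and maintain a $\pSketch$-sketch of each block as its characters arrive, initializing a fresh sketch at the start of each block. Each block-sketch occupies $\Oh(\eps^{-2}\log(\sigma n/\eps)\log(n/\eps))$ bits by Theorem~\ref{th:psketch}, and there are $\Oh(n/b)$ active block-sketches at any time (those that still lie inside the current length-$n$ window), giving the claimed $\Oh(\eps^{-2}(n/b)\log(\sigma n/\eps)\log(n/\eps))$ space.

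Next I would handle the "suffix" part exactly as in Lemma~\ref{lm:hsufsketchalg}: to obtain the sketch of $T[bk+1,i]$ where $k$ is smallest with $i-bk\le n$, I take the appropriate linear combination — here a $p$-stable-weighted combination, using $4$-wise (or bounded-independence, as permitted by Indyk's construction) random coefficients that scale each block-sketch — of the $\Oh(n/b)$ completed block-sketches preceding the current block, which costs $\Oh(\eps^{-2}\cdot n/b)$ work; de-amortizing this over the $b$ arrivals of the current block, as in Lemma~\ref{lm:hsufsketchalg}, this contributes $\Oh(\eps^{-2}(n/b^2))$ time per character. Then, within the current block, each arriving character updates the running sketch in $\Oh(\eps^{-2}\log(n/\eps))$ time by item~1 of Theorem~\ref{th:psketch}. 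Summing, the per-character time is $\Oh(\eps^{-2}(1+n/b^2)\log(n/\eps)) = \Oh(\eps^{-2}(n/b)\log n)$ under the regime $b\approx\sqrt n$ that the paper uses, matching the statement. Correctness (a $1\pm\eps$ estimate of $\norm{X-Y}_p$ with probability $\ge 9/10$) follows directly from the linearity and the guarantee of Theorem~\ref{th:psketch}, since the combined object is exactly the $p$-stable sketch of the suffix word.

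The main obstacle — and the only point that needs care rather than routine bookkeeping — is verifying that the $p$-stable weighting by the block-level random coefficients $\alpha_j$ is compatible with Indyk's sketch, i.e. that multiplying a block-sketch by a fresh $p$-stable-distributed (or at least appropriately-distributed) scalar and summing still yields a valid $p$-stable sketch of the concatenation with the same bounded-space pseudorandomness guarantee. One clean way around this is to observe that Indyk's algorithm already works block-by-block internally (its pseudorandom generator streams over the coordinates), so one can simply view the length-$n$ suffix as a single input and let the block-sketches be partial states of one run of Indyk's algorithm that are "rolled forward" — then the linear combination step is really just a bounded-precision rescaling, and the space/time accounting is as above. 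I would phrase the proof this way to avoid re-proving stability under re-weighting from scratch; the remaining details (precision of arithmetic, seed length $\Oh(\log(\sigma n/\eps)\log(n/\eps))$ bits, de-amortization schedule) are identical in spirit to Lemmas~\ref{lm:hsufsketchalg} and~\ref{lm:msufsketchalg} and I would only sketch them.
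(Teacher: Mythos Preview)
Your proposal is more elaborate than necessary, and the complication you introduce is exactly the one you flag as an obstacle. The paper's proof is two lines: start a fresh instance of the algorithm of Theorem~\ref{th:psketch} at every block border and keep it alive for the next $\lceil n/b\rceil$ blocks; at any moment there are $\Oh(n/b)$ live instances, and multiplying the per-instance space and per-character update cost of Theorem~\ref{th:psketch} by $n/b$ gives the stated bounds directly. There is no linear combination of block sketches, no block-level coefficients $\alpha_j$, and no de-amortization.

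Your primary route --- maintaining one sketch per block and recombining them with fresh $p$-stable scalars $\alpha_j$ --- does not go through as stated: unlike the $\pm 1$ coefficients in the Euclidean suffix sketch, there is no off-the-shelf guarantee that multiplying each block's Indyk sketch by an independent $p$-stable scalar and summing yields a valid Indyk sketch of the concatenation (the correctness of Theorem~\ref{th:psketch} relies on Nisan's PRG fooling a bounded-space tester over the whole stream, and this structure is not obviously preserved under your re-weighting). You correctly identify this gap. Your ``workaround'' --- treating the block sketches as partial states of a single run of Indyk's algorithm rolled forward from each block border --- is precisely the paper's argument, but once you adopt it the $\alpha_j$ and the de-amortized recombination disappear entirely: you are simply running $\Oh(n/b)$ independent copies. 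So the final content is the same as the paper's, reached after a detour; a cleaner write-up would drop the first two paragraphs and state the parallel-instances argument directly.
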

\begin{proof}
We start a new instance of the sketching algorithm of Theorem~\ref{th:psketch} at every block border and continue running it for the next $\ceil{n/b}$ blocks. At each moment, there are $\Oh(n/b)$ active instances of the algorithm. The bounds follow.
\end{proof}

\section{Proof of Theorem~\ref{th:UB}}\label{sec:theorem}
Recall the structure of the algorithms. During the preprocessing, we compute the suffix sketches of suffixes $P[1,n], P[2,n], \ldots, P[b,n]$ of $P$. During the main stage, the text is processed by blocks of length $b$. To compute an approximation of the distance / the $p$'th moment at a particular alignment, we divide the pattern into two parts: a prefix of length at most $b$, and the remaining suffix. We compute an approximation of the distance / the $p$'th moment for both of the parts and sum them up to obtain the final answer. To compute an approximation of the distance / the $p$'th moment between the prefix and the corresponding block of the text, we compute, while reading each block of the text, its prefix encoding, and to compute an approximation of the distance / the $p$'th moment between the suffix and the text, we use the suffix sketches.

\begin{enumerate}
\item \textbf{Hamming ($L_0$) distance.} When we receive a new block of the text, we compute its Hamming prefix encoding using the algorithm of Lemma~\ref{lm:hprefsketchalg} in $\Oh(b)$ space. We de-amortize the computation over the subsequent block and spend $\widetilde\Oh(\eps^{-2})$ time per character. We store the resulting encoding for the next $\Oh(n/b)$ blocks. In total, the encodings require $\widetilde\Oh(\eps^{-2} n/b)$ space. The Hamming suffix sketches of $P[1,n], P[2,n], \ldots, P[b,n]$ occupy $\Oh(\eps^{-2} b)$ space. The algorithm of Lemma~\ref{lm:hsufsketchalg} that computes the suffix sketches takes $\Oh(\eps^{-2} n/b + \log (\eps^{-2} \sigma n))$ space and $\Oh(\eps^{-2} (1+n/b^2))$ time per character. Consider a block starting with position $p$. To compute the Hamming distances between $n$-length subwords that end in this block and the pattern, we apply the following approach. 
First, while reading the block preceding the current one, we decode the Hamming prefix encoding of the block that starts at position $p-n$ using Lemma~\ref{lm:hpref_decoding}. We de-amortize the algorithm to spend $\widetilde\Oh(\eps^{-2})$ time per character. Hence, at the position $i$, we know the $(1\pm\eps)$-approximation between the prefixes of the pattern and the corresponding subwords of the text. At each position, we can compute the Hamming distance between the corresponding suffix of the pattern and the text in $\widetilde\Oh(\eps^{-2})$ time using the Hamming suffix sketch. By taking $b = \sqrt{n}$, we obtain the claim.

\item \textbf{Manhattan ($L_1$) distance.} We proceed analogously to the Hamming distance case. The Manhattan prefix encoding of each block is computed using Lemma~\ref{lm:mprefsketchalg}, in $\widetilde\bigo(b)$ time per character. We store the resulting encoding for the next $\bigo(n/b)$ blocks, giving in total $\widetilde\bigo(\varepsilon^{-2}n/b)$ space. The Manhattan suffix sketches of $P[1,n], P[2,n], \ldots, P[b,n]$ occupy $\Oh(\eps^{-2} b)$ space. Algorithm of Lemma~\ref{lm:msufsketchalg} takes $\widetilde\bigo(\varepsilon^{-2} (b+n/b) + \log^2 \sigma)$ space and $\widetilde\bigo(\varepsilon^{-2}(1+n/b^2))$ time per character. 
For decoding the prefix encoding we use Lemma~\ref{lm:mpref_decoding}, spending $\widetilde\bigo(b)$ time per character. Once again we take $b = \sqrt{n}$, and assume w.l.o.g. $\eps^{-1} \le \sqrt{n}$ (as otherwise we can use a naive algorithm with $\Oh(n)$ space and $\Oh(n)$ time per character).

\item \textbf{Generic ($L_p$) distance for $0< p < 1$.} The $L_p$ prefix encodings of the blocks are computed using Lemma~\ref{lm:rleprefsketchalg}, using $\widetilde\bigo(t \cdot b)$ time per character. We store the resulting encodings for the next $\bigo(n/b)$ blocks, giving in total $\widetilde\bigo(\varepsilon^{-2}n/b)$ space. 
The $L_p$ suffix sketches of $P[1,n], P[2,n], \ldots, P[b,n]$ occupy $\widetilde{\Oh}(\eps^{-2} b \log \sigma)$ space. 
Algorithm of Corollary~\ref{cor:psketch_suffix} computes the $L_p$ suffix sketches for the text in  $\widetilde{\Oh}(\eps^{-2} (n/b) \log \sigma)$ space and $\widetilde{\Oh}(\eps^{-2} n/b)$ time per character. For decoding the prefix encoding we use Lemma~\ref{lm:rlepref_decoding}, spending $\widetilde\bigo(t \cdot b)$ time per character. We take $b = \sqrt{n}$, and substitute $t$ accordingly to Theorem~\ref{th:wordmorphism_p12}.

\item \textbf{Generic ($L_p$) distance for $1 < p < 2$.} Note that for $\eps < 1/n$ we can use a naive algorithm, that is to store $S$ itself in $\Oh(n)$ space. The update takes constant time, and computing the $L_p$ norm takes $\Oh(n)$ time which is better than the guarantees of the theorem for such values of $\eps$. For $\eps \ge 1/n$, the algorithm of Lemma~\ref{lm:prefix_lp} computes the $L_p$ prefix encodings of the blocks in $\widetilde{\Oh}(b + \eps^{-2-p} \log^2 \sigma)$ space and $\widetilde{\Oh}(b+\eps^{-2} \log \sigma)$ time per character. The encodings occupy $\widetilde{\Oh}(\eps^{-2-p} (n/b) \log^2 \sigma)$ space. The $L_p$ suffix sketches of $P[1,n], P[2,n], \ldots, P[b,n]$ occupy $\widetilde{\Oh}(\eps^{-2} b \log \sigma)$ space. Algorithm of Corollary~\ref{cor:psketch_suffix} computes the $L_p$ suffix sketches for the text in $\widetilde{\Oh}(\eps^{-2} (n/b) \log \sigma)$ space and $\widetilde{\Oh}(\eps^{-2} n/b)$ time per character.  Taking $b = \eps^{-p/2} \sqrt{n}$ and assuming w.l.o.g. $\eps^{-1} < \sqrt{n}$, we obtain the claim.
\end{enumerate}

\section{Conclusion}
We pose several open questions. First is whether the time-complexity for $1/2<p<1$ can be improved to not involve any dependency on $\sigma$. For this we need a better technique than bounding variance of the embedding into Hamming distance: in our technique, the tail gets ''too heavy''. Another pressing question is whether for all values of $p>0$ we could improve upon $\sqrt{n}$ time per character.  We also remark that it seems unlikely that an embedding to Hamming space could be used to reduce space complexity for $p>1$: $L_p^p$ does not admit the triangle inequality while the Hamming distance does, and the $L_p$ distance is not additive with respect to concatenation, while the Hamming distance is. 

\bibliography{bib}
\end{document}